\newcommand{\p}{\partial}
\newcommand{\sgn}{\mathop{\rm sgn}\nolimits}
\newcommand{\const}{{\rm const}}
\newcommand{\ve}{\varepsilon}
\newcommand{\de}{\delta}
\newlength{\mylength}
\newcommand{\solution}{\hspace*{-\mylength}\bullet\quad}
\newtheorem{theorem}{Theorem}
\newtheorem{lemma}[theorem]{Lemma}
\newtheorem{corollary}[theorem]{Corollary}
\newtheorem{proposition}[theorem]{Proposition}
{\theoremstyle{definition}

\newtheorem{remark}[theorem]{Remark}
}
\newcommand{\todo}[1][\null]{\ensuremath{\clubsuit}}
\newcommand{\noprint}[1]{}
\begin{document}

\par\noindent {\LARGE\bf
Point-symmetry pseudogroup, Lie reductions\\ and exact solutions of Boiti--Leon--Pempinelli system
\par}

\vspace{5mm}\par\noindent{\large
Diana S.\ Maltseva$^\dag$ and Roman O.\ Popovych$^\ddag$
}

\vspace{5mm}\par\noindent{\it\small
$^\dag$\,Faculty of Mechanics and Mathematics, Taras Shevchenko National University of Kyiv,\\
$\phantom{^\dag}$\,4-e Hlushkova Ave., 03127, \,Kyiv, Ukraine
\par}

\vspace{2mm}\par\noindent{\it\small
$^\ddag$\,Mathematical Institute, Silesian University in Opava, Na Rybn\'\i{}\v{c}ku 1, 746 01 Opava, Czech Republic
\\
$\phantom{^\ddag}$Institute of Mathematics of NAS of Ukraine, 3 Tereshchenkivska Str., 01024 Kyiv, Ukraine
\\
$\phantom{^\ddag}$Fakult\"at f\"ur Mathematik, Universit\"at Wien, Oskar-Morgenstern-Platz 1, A-1090 Wien, Austria
\par}

\vspace{4mm}\par\noindent{\small
E-mail:
rop@imath.kiev.ua
\par}

\vspace{7mm}\par\noindent\hspace*{10mm}\parbox{140mm}{\small
We carry out extended symmetry analysis of the (1+2)-dimensional Boiti--Leon--Pempinelli system,
which corrects, enhances and generalizes many results existing in the literature.
The point-symmetry pseudogroup of this system is computed
using an original megaideal-based version of the algebraic method.
A number of meticulously selected differential constraints allow us to construct families of exact solutions of this system,
which are significantly larger than all known ones.
After classifying one- and two-dimensional subalgebras
of the entire (infinite-dimensional) maximal Lie invariance algebra of this system,
we study only its essential Lie reductions, which give solutions beyond the above solution families.
Among reductions of the Boiti--Leon--Pempinelli system
using differential constraints or Lie symmetries,
we identify a number of famous partial and ordinary differential equations.
We also show how all the constructed solution families can significantly be extended
by Laplace and Darboux transformations.
}\par\vspace{5mm}

\noprint{

\noindent
{\footnotesize {\it Keywords:}
Boiti-Leon-Pempinelli system, 
point-symmetry (pseudo)group,
Lie reductions,
Darboux transformation,
Laplace transformation,
exact solutions,
differential constraint
}

MSC: 35C05 (Primary) 35C06, 35A30, 35B06, 17B80 (Secondary)
17-XX   Nonassociative rings and algebras
 17Bxx	 Lie algebras and Lie superalgebras {For Lie groups, see 22Exx}
  17B80   Applications of Lie algebras and superalgebras to integrable systems
35-XX   Partial differential equations
  35A30   Geometric theory, characteristics, transformations [See also 58J70, 58J72]
  35B06   Symmetries, invariants, etc.
 35Cxx  Representations of solutions
  35C05   Solutions in closed form
  35C06   Self-similar solutions
}

\section{Introduction}

Inspired by the inverse scattering method applied in~\cite{gard1967a} for determining integrability of the Korteweg--de Vries equation,
the theory of integrable systems was rapidly developed.
It turned out that the cubic Schr\"odinger equation, the sine-Gordon equation, the Dym equation, the Camassa--Holm equation,
the hierarchies related to these equations, and many other (1+1)-dimensional models are integrable.
At the same time, even the question whether there exist integrable systems of partial differential equations
with more than two independent variables was open.
This is why finding new integrable systems at least in dimension 1+2,
studying them within the framework of integrability theory
and integrating them by various methods specific for this theory, like the inverse scattering transform,
became urgent.
In a few years, the integrability of several existing multidimensional systems was revealed
and a number of new integrable multidimensional systems were constructed,
which includes the Kadomtsev--Petviashvili equation, the Davey--Stewartson equation and the Nizhnik equation.

One of the first (1+2)-dimensional integrable systems, now commonly known as the Boiti--Leon--Pempinelli system,
\begin{equation}\label{eq:BLPsystem}
\begin{split}
&u_{ty}=(u^2-u_x)_{xy}+2v_{xxx},\\
&v_t=v_{xx}+2uv_x,
\end{split}
\end{equation}
was suggested in~\cite[Eq.~(2.16)]{boit1987a} as a generalization of the sinh-Gordon equation
to the case of more dimensions, see also~\cite[Eq.~(2.3)]{gara1994a}.
In fact, an equivalent system was originally presented in~\cite{boit1987a},
\begin{equation}\label{eq:BLPsystemOriginal}
\begin{split}
&u_{ty}=(u^2-u_x)_{xy}+2w_{xx},\\
&w_t=w_{xx}+2(uw)_x,
\end{split}
\end{equation}
where $w=v_x$,
although now the system~\eqref{eq:BLPsystem} is more widespread in the literature as a form of the Boiti--Leon--Pempinelli system.%
\footnote{%
A modification of~\eqref{eq:BLPsystemOriginal} by means of alternating the sign of~$t$ 
is often, but unjustifiedly, called the (2+1)-dimensional Broer--Kaup--Kupershmidt system/equation(s).
}
Another form of this system,
\begin{equation}\label{eq:BLPsystemPWform}
\begin{split}
&\breve u_{ty}=-\breve w_{xx}-\tfrac12(\breve u^2)_{xy},\\
&\breve w_t=-(\breve u\breve w+\breve u+\breve u_{xy})_x,
\end{split}
\end{equation}
was studied in~\cite{paqu1990a} and called the ``integrable dispersive long-wave equations in two space dimensions'' therein.
It is obviously reduced to the system~\eqref{eq:BLPsystemOriginal}
by the invertible differential substitution $\breve u=-2u$, $\breve w=4w-1-2u_y$.
As a regular integrable system, the system~\eqref{eq:BLPsystem} possesses the Lax pair
$L=\p_x\p_y+u\p_y+v_x$ and $P=-\p_x^2+p$, where $p_y=-2v_{xx}$~\cite{boit1987a}.
In other words, this system is the compatibility condition for the linear system of
partial differential equations
\begin{gather}\label{eq:BLPCovering}
\psi_{xy}+u\psi_y+v_x\psi=0,\quad \psi_t+\psi_{xx}-p\psi=0
\end{gather}
with respect to the auxiliary unknown function~$\psi=\psi(t,x,y)$.
The system~\eqref{eq:BLPsystem} admits a Hamiltonian structure,
and the Painlev\'e test is fulfilled for it~\cite{gara1994a},
see also Remark~\ref{rem:BLPsystemPainleveTest}.
The initial-value problem for the system~\eqref{eq:BLPsystem} was studied in~\cite{pogr1996a}
using the inverse scattering transform.
Two Darboux transformations and two Laplace transformations for an equivalent form of~\eqref{eq:BLPsystem}
as well as the iterated versions of these transformations were constructed in~\cite{yuro1999a}
(see Section~\ref{sec:LaplaceAndDarbouxTrans})
but, unfortunately, they are not of common usage.

First exact solutions of~\eqref{eq:BLPsystem} were constructed in the original paper~\cite{boit1987a}
after deriving a B\"acklund transforation for~\eqref{eq:BLPsystem}.
Lie reductions of the form~\eqref{eq:BLPsystemPWform} of the Boiti--Leon--Pempinelli system
were carried out in~\cite{paqu1990a}. 
Some exact solutions were also found in~\cite{gara1994a,yuro1999a}.
Over the past decade, dozens of papers exclusively devoted to finding exact solutions of~\eqref{eq:BLPsystem}
using various methods have been published but vast majority of the constructed solutions
are trivial or even incorrect, and their construction implicitly involved
two obvious reductions of~\eqref{eq:BLPsystem} to the Burgers equation or simple Lie reductions,
see also the discussion in~\cite{kudr2011b}.
In particular, a partial classification of Lie reductions of the system~\eqref{eq:BLPsystem} was attempted in~\cite{zhao2017a}.
More specifically, a seven-dimensional subalgebra~$\mathfrak c$
of the (infinite-dimensional) maximal Lie invariance algebra~$\mathfrak g$ of~\eqref{eq:BLPsystem}
was merely used for a partial Lie-reduction procedure.
Nevertheless, even the first step of this procedure,
which is the classification of one- and two-dimensional subalgebras of~$\mathfrak c$,
was carried out incorrectly. 
Moreover, all the correct exact solutions presented therein
satisfy the differential constraint $u_y=v_x$,
which reduces~\eqref{eq:BLPsystem} to the Burgers equation.

In the present paper, we carry out the enhanced and extended symmetry analysis of the Boiti--Leon--Pempinelli system~\eqref{eq:BLPsystem}.
This includes
the computation of the point-symmetry pseudogroup~$G$ of this system,
the construction of whole families of its exact solutions satisfying simple non-Lie differential constraints,
the exhaustive classification of Lie reductions of the system~\eqref{eq:BLPsystem} in the optimal way,
the complete study of Lie reductions of this system to ordinary differential equations and
the generation of wide families of its exact solutions using Darboux and Laplace transformations.

More specifically, the megaideal-based%
\footnote{%
A \emph{megaideal}~$\mathfrak m$ \cite{bihl2015a,popo2003a} 
(or a \emph{fully characteristic ideal} \cite[Exercise~14.1.1]{hilg2011A})
of a Lie algebra~$\mathfrak g$
is a linear subspace of $\mathfrak g$ such that $\mathfrak T\mathfrak m\subseteq\mathfrak m$
for any transformation~$\mathfrak T$ from the automorphism group ${\rm Aut}(\mathfrak g)$ of~$\mathfrak g$.
Each megaideal of~$\mathfrak g$ is  a characteristic ideal and thus an ideal of~$\mathfrak g$.
} 
version~\cite{bihl2015a,bihl2011c,card2013a} of the algebraic method~\cite{hydo1998a,hydo2000b}
is applied in Section~\ref{PointSymPseudogroup} for finding the pseudogroup~$G$.
This version is convenient in view of the fact
that the maximal Lie invariance algebra~$\mathfrak g$ of the system~\eqref{eq:BLPsystem} is infinite-dimensional.
We additionally modify this version of the algebraic method via pushing forward
a proper finite-dimensional subalgebra of~$\mathfrak g$ instead of the entire algebra~$\mathfrak g$.
We also analyze the properties of the action of the pseudogroup~$G$ on the algebra~$\mathfrak g$
that underlie the proposed modification.

In Section~\ref{sec:ClassificationSubalgebras}, 
we construct optimal lists of one- and two-dimensional subalgebras of the algebra~$\mathfrak g$.
There are exactly eight $G$-inequivalent one-dimensional subalgebras of~$\mathfrak g$.
(Note that the ``optimal'' list of one-dimensional subalgebras of a seven-dimensional subalgebra of~$\mathfrak g$ 
that was given in~\cite{zhao2017a}
consists of 29 elements each of which ranges between one and four subalgebras.)
The list of $G$-inequivalent two-dimensional subalgebras of~$\mathfrak g$ include
eighteen families of subalgebras, among which there are eight families of non-Abelian algebras
and ten families of Abelian algebras.
Most of these families are singletons or parameterized by discrete parameters
and thus include only a few subalgebras;
five families are parameterized by arbitrary smooth unary functions.
The presented lists of subalgebras is additionally refined
for convenience in carrying out Lie reductions of the system~\eqref{eq:BLPsystem}.
Since the algebra~$\mathfrak g$ is infinite-dimensional,
we use an alternative method for finding the adjoint action of the pseudogroup~$G$ on the algebra~$\mathfrak g$
via pushing forward the vector fields from~$\mathfrak g$ by elements of~$G$.
This also allows us to take into account discrete point symmetries of the system~\eqref{eq:BLPsystem}
from the very beginning when classifying $G$-inequivalent subalgebras of~$\mathfrak g$.
The computations are simplified by the fact that the algebra~$\mathfrak g$ can be represented as a direct sum of two ideals.

Before proceeding with Lie reductions of the system~\eqref{eq:BLPsystem},
in Section~\ref{sec:SolutionsByMethodOfDiffConstraints}
we apply the method of differential constraints to this system,
selecting various simple differential constraints.
Most of them are $G$-invariant,
which allows us to simplify the computations and the form of the constructed solutions
and to avoid analyzing a large number of special cases that arise when deriving solutions.
Each of the constraints $v_x=0$ and $u_y=v_x$ leads to reducing the system~\eqref{eq:BLPsystem}
to a family of inhomogeneous Burgers equations,
which is linearized by the Hopf--Cole transformation to a family of heat equations with potentials,
and the independent variable~$y$ plays the role of an implicit parameter.
The stationary solutions of~\eqref{eq:BLPsystem} satisfying the constraint $u_y=v_x$
are expressed in terms of the well-known general solution of the Liouville equation.
After separately attaching each of the constraints $u_y=0$, $v_{xxx}=0$ or $u_{xx}=v_{4x}=0$ to the system~\eqref{eq:BLPsystem},
the general solution of the obtained overdetermined system can be expressed in a simple explicit form
that is parameterized arbitrary sufficiently smooth functions depending on~$t$ or~$y$.
Under the constraint $u_{xx}=0$, the system~\eqref{eq:BLPsystem}
reduces to a quite general family of Abel equations of the first kind with the independent variable~$y$,
which is parameterized by two arbitrary functions of~$y$
and where $t$ plays the role of an implicit parameter.
This family contains a family of Bernoulli equations, which are easily integrable.
Note that the selection of $G$-invariant differential constraints leading to nontrivial solutions of~\eqref{eq:BLPsystem}
is by no means a simple problem.
For instance, we show that the constraint $u_{xt}u_y=v_xu_{yt}$, $u_{xy}u_y=v_xu_{yy}$ does not result in new solutions of~\eqref{eq:BLPsystem}
in comparison with the above simpler constraints.
We also consider the non-$G$-invariant constraint $v=u$.
Among solutions of the corresponding overdetermined system, there is a new solution family of the system~\eqref{eq:BLPsystem},
which is parameterized by an arbitrary smooth function of the variable~$y$.\looseness=-1

In Section~\ref{sec:LieReductionsOfCodim1}, we carry out the Lie reductions of the system~\eqref{eq:BLPsystem}
with respect to the seven appropriate $G$-inequivalent one-dimensional subalgebras listed in Section~\ref{sec:ClassificationSubalgebras}.
Only the first four reductions may lead to solutions of~\eqref{eq:BLPsystem}
that do not belong to solution families constructed in Section~\ref{sec:SolutionsByMethodOfDiffConstraints}.
The most interesting is the Lie reduction of~\eqref{eq:BLPsystem} to the corresponding stationary system,
which possesses wide families of explicit solutions
expressed in terms of the general solutions of the Liouville, $\sinh$-Gordon and $\cosh$-Gordon equations.
Another reduced system is equivalent to the dispersive long-wave equations,
which are further reduced to the Burgers equation using a simple differential constraint.
To look for hidden symmetries of the system~\eqref{eq:BLPsystem},
we compute the normalizers of the considered one-dimensional subalgebras and
the maximal Lie invariance algebras of the corresponding reduced systems.
The system~\eqref{eq:BLPsystem} admits a great number of hidden symmetries
associated with Lie reductions of codimension one but
these symmetries are not useful since they are Lie symmetries of trivially integrable reduced systems.

Although the complete list of $G$-inequivalent two-dimensional subalgebras of the algebra~$\mathfrak g$
that is presented in Section~\ref{sec:ClassificationSubalgebras} is large,
in fact it suffices to carry out Lie reductions only with respect to five families from the list,
the non-Abelian subalgebras
$\mathfrak s_{2.1}$, $\mathfrak s_{2.2}^\de$, $\mathfrak s_{2.3}^\de$, $\mathfrak s_{2.4}^1$
and the Abelian subalgebras $\mathfrak s^{\de1}_{2.9}$,
which is done in Section~\ref{sec:LieReductionsOfCodim2}.
The families $\{\mathfrak s_{2.1}\}$ and $\{\mathfrak s_{2.4}^1\}$ are singletons,
whereas each of the families $\{\mathfrak s_{2.2}^\de\}$, $\{\mathfrak s_{2.3}^\de\}$ and $\{\mathfrak s^{\de1}_{2.9}\}$
is parameterized by the single parameter $\delta\in\{0,1\}$.
The other listed subalgebras either do not satisfy the rank condition~\cite{boyk2016a}
and are thus not appropriate for Lie reductions at all
or the corresponding Lie reductions of the system~\eqref{eq:BLPsystem} can be represented as two-step reductions,
where the first steps are trivial codimension-one Lie reductions discussed in Section~\ref{sec:LieReductionsOfCodim1}.
The reduced system obtained with the subalgebra~$\mathfrak s_{2.3}^\de$ is inconsistent if $\delta=1$
and, if $\delta=0$, gives a trivial solution of~\eqref{eq:BLPsystem} satisfying the differential constraint $v_x=0$.
For each of the reduced systems
constructed with the subalgebras~$\mathfrak s_{2.1}$, $\mathfrak s_{2.2}^\de$, $\mathfrak s_{2.4}^1$ and~$\mathfrak s^{\de1}_{2.9}$,
we find three independent first integrals, thus lowering its order by three,
and derive a single second-order ordinary differential equation for~$\varphi$
and an expression for~$\psi$ as a differential function of~$\varphi$, which both are parameterized by integration constants.
Here $\varphi$ and~$\psi$ are the chosen invariant dependent variable that are associated with~$u$ and~$v$, respectively.
Modulo the $G$-equivalence, some integration constants are not significant and can be set to be equal convenient values.
For the subalgebra~$\mathfrak s_{2.1}$ (resp.\ $\mathfrak s_{2.2}^\de$) and for each value of the tuple of integration constants,
the above equation for~$\varphi$ is reduced by a differential substitution to a fifth (resp.\ third) Painlev\'e equation,
whereas in the case of the subalgebra~$\mathfrak s_{2.4}^1$ (resp.\ $\mathfrak s^{\de1}_{2.9}$)
it is similar with respect to a simple point transformation to a fourth (resp.\ second) Painlev\'e equation.
An exception is a singular solution subset of the reduced system corresponding to the subalgebras~$\mathfrak s^{\de1}_{2.9}$,
where the unknown function~$\varphi$ is expressed in terms of
either Weierstrass elliptic functions or, in degenerate cases, elementary functions;
the expression for~$\psi$ contains a quadrature and can respectively be represented in terms of
Weierstrass elliptic functions and Weierstrass $\zeta$-functions or elementary functions.
As a result, all the significant inequivalent reduced systems of ordinary differential equations
obtained from the system~\eqref{eq:BLPsystem} using Lie reductions
are exhaustively solved in terms of
Painlev\'e transcendents, Weierstrass $\wp$- and $\zeta$-functions and elementary functions.

Symmetry properties of several equivalent forms of the Boiti--Leon--Pempinelli system
are discussed in Section~\ref{sec:BLPsystemEquivForms}.

We show in Section~\ref{sec:LaplaceAndDarbouxTrans}
that the constructed families of exact solutions of the system~\eqref{eq:BLPsystem}
can be significantly extended using the aforementioned Laplace and Darboux transformations,
which were found in~\cite{yuro1999a}.

Section~\ref{sec:Conclusion} is devoted to the discussions of the results obtained in the paper.
In particular, we pose several problems concerning optimization of the algebraic method
for computing the complete point-symmetry (pseudo)group of a system of differential equations.
We also list all the well-known differential equations that arise as various reductions
of the Boiti--Leon--Pempinelli system.

In Section~\ref{sec:IntegrationOfODEsRelatedToEllipticFunctions}, we recall
how to integrate general autonomous first-order ordinary differential equations of the form~\eqref{eq:EllipticODEs}
and special equations of this form in terms of Weierstrass elliptic functions and elementary functions, respectively.

The reduced conservation-law characteristics up to order four of the system~\eqref{eq:BLPsystem}
are presented in Section~\ref{sec:BLPsystemCLs}.
In fact, they are exhausted by those up to order one,
and each of the corresponding conservation law contains a ``short'' conserved current.

For readers' convenience,
throughout the paper we marked constructed solutions of the Boiti--Leon--Pempinelli system by bullets.

\section{Point-symmetry pseudogroup}\label{PointSymPseudogroup}

The maximal Lie invariance algebra~$\mathfrak g$ of the system~\eqref{eq:BLPsystem}
is infinite-dimensional and is spanned by the vector fields
\[
\begin{split}
&D(f)=f\p_t+\tfrac12f_tx\p_x-\left(\tfrac12f_tu+\tfrac14f_{tt}x\right)\p_u,\quad
 S(\alpha)=\alpha\p_y-\alpha_y v\p_v,\\
&P(g)=g\p_x-\tfrac12g_t\p_u,\quad
 Z(\beta)=\beta\p_v,
\end{split}
\]
where $f=f(t)$, $g=g(t)$, $\alpha=\alpha(y)$, $\beta=\beta(y)$ are arbitrary smooth functions of their arguments,
\[
\mathfrak g=\langle D(f),S(\alpha),P(g),Z(\beta)\rangle.
\]
Up to antisymmetry of Lie brackets,
the nonzero commutations relations between vector fields spanning~$\mathfrak g$
are exhausted by
\begin{gather}\label{eq:CommRels}
\begin{split}
&[D(f^1),D(f^2)]=D(f^1f^2_t-f^1_tf^2),\quad
 [S(\alpha^1),S(\alpha^2)]=S(\alpha^1\alpha^2_y-\alpha^1_y\alpha^2),\\
&[P(g),D(f)]=P(\tfrac12f_tg-fg_t),\quad
 [S(\alpha),Z(\beta)]=Z\big((\alpha \beta)_y\big).
\end{split}
\end{gather}
Hence the algebra~$\mathfrak g$ can be represented as the direct sum of its ideals
$\mathfrak i_1:=\langle D(f), P(g)\rangle$ and $\mathfrak i_2:=\langle S(\alpha), Z(\beta)\rangle$,
$\mathfrak g=\mathfrak i_1\oplus\mathfrak i_2$.
The spans $\mathfrak i_3:=\langle P(g)\rangle$ and $\mathfrak i_4:=\langle Z(\beta)\rangle$ are ideals of~$\mathfrak g$ as well. The question whether the above ideals are megaideals of~$\mathfrak g$ remains open.
At the same time, we can prove that the sum $\mathfrak i_3\oplus\mathfrak i_4$ is a megaideal of~$\mathfrak g$.

\begin{lemma}\label{lem:RadicalOfg}
$\mathrm R_{\mathfrak g}=\langle P(g),Z(\beta)\rangle$,
where $\mathrm R_{\mathfrak g}$ denotes the radical of~$\mathfrak g$.
\end{lemma}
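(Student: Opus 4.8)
The plan is to read off the radical from the direct-sum decomposition $\mathfrak g=\mathfrak i_1\oplus\mathfrak i_2$. Because $\mathfrak i_1$ and $\mathfrak i_2$ are ideals with $\mathfrak g$ their direct sum, the radical splits accordingly, $\mathrm R_{\mathfrak g}=\mathrm R_{\mathfrak i_1}\oplus\mathrm R_{\mathfrak i_2}$: the projection onto each summand is a Lie-algebra epimorphism and hence carries any solvable ideal of $\mathfrak g$ into a solvable ideal of the summand, while conversely $\mathrm R_{\mathfrak i_1}\oplus\mathrm R_{\mathfrak i_2}$ is itself a solvable ideal of $\mathfrak g$. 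It therefore suffices to prove $\mathrm R_{\mathfrak i_1}=\mathfrak i_3=\langle P(g)\rangle$ and $\mathrm R_{\mathfrak i_2}=\mathfrak i_4=\langle Z(\beta)\rangle$. The commutation relations~\eqref{eq:CommRels} show that $\mathfrak i_3$ and $\mathfrak i_4$ are abelian (hence solvable) ideals of $\mathfrak i_1$ and $\mathfrak i_2$, respectively, giving the inclusions $\mathfrak i_3\subseteq\mathrm R_{\mathfrak i_1}$ and $\mathfrak i_4\subseteq\mathrm R_{\mathfrak i_2}$. For the reverse inclusions I would pass to the quotients $\mathfrak i_1/\mathfrak i_3\cong\langle D(f)\rangle$ and $\mathfrak i_2/\mathfrak i_4\cong\langle S(\alpha)\rangle$; the maps $D(f)\mapsto f\p_t$ and $S(\alpha)\mapsto\alpha\p_y$ identify each of these with the Lie algebra $\mathrm{Vect}(\mathbb R)$ of smooth vector fields on the line, since the brackets in~\eqref{eq:CommRels} reproduce exactly the Wronskian-type commutator. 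As the image of a solvable ideal under a quotient map is again a solvable ideal, both reverse inclusions follow at once from the single assertion that $\mathrm{Vect}(\mathbb R)$ has trivial radical.

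This last assertion is the only nonroutine point, and it is where I expect the real work to lie. It is enough to show that $\mathrm{Vect}(\mathbb R)$ has no nonzero abelian ideal, because the last nonvanishing term of the derived series of any solvable ideal is an abelian ideal of the whole algebra (the terms of the derived series of an ideal are again ideals, by the Jacobi identity). Assume $\mathfrak b$ is a nonzero abelian ideal and pick $0\neq h\p_t\in\mathfrak b$. For every $g$ the bracket $[g\p_t,h\p_t]=(gh_t-g_th)\p_t$ lies in $\mathfrak b$ and, $\mathfrak b$ being abelian, must commute with $h\p_t$. I would then specialize $g$: the choice $g=1$ yields $(h_t)^2=hh_{tt}$, whence $h=Ae^{\lambda t}$ with $A\neq0$; the choice $g=t$ forces $\lambda=0$, so that $h$ is a nonzero constant; and the choice $g=t^2$ finally forces $A=0$, a contradiction. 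Hence $\mathrm R_{\mathrm{Vect}(\mathbb R)}=0$.

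Combining these steps gives $\mathrm R_{\mathfrak g}=\mathfrak i_3\oplus\mathfrak i_4=\langle P(g),Z(\beta)\rangle$, as claimed. I would add the remark that, since the radical is invariant under every automorphism of $\mathfrak g$, this computation simultaneously establishes the statement made just before the lemma that $\mathfrak i_3\oplus\mathfrak i_4$ is a megaideal of $\mathfrak g$. The main obstacle is genuinely the triviality of the radical of $\mathrm{Vect}(\mathbb R)$; everything else is formal bookkeeping with the decomposition and the explicit brackets.
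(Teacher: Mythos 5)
Your proof is correct, but it takes a genuinely different route from the paper's. The paper proves maximality of $\mathfrak s=\langle P(g),Z(\beta)\rangle$ head-on: given an ideal $\mathfrak s_1$ properly containing $\mathfrak s$, it extracts an element $Q^1=D(f^1)+S(\alpha^1)$ with, say, $f^1\not\equiv0$, restricts to an interval where $f^1\neq0$, and uses the existence theorem for first-order linear ODEs to show that the brackets $[D(f),Q^1]=D(ff^1_t-f_tf^1)$ sweep out all of $\langle D(f)\rangle$; hence $\mathfrak s_1^{(n)}\supseteq\langle D(f),P(g)\rangle\neq\{0\}$ for all $n$, so $\mathfrak s_1$ is not solvable. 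You instead factor the problem structurally: you split the radical across $\mathfrak g=\mathfrak i_1\oplus\mathfrak i_2$, pass to the quotients of the summands by their abelian ideals $\mathfrak i_3$ and $\mathfrak i_4$, and reduce everything to the single classical fact that $\mathrm{Vect}(\mathbb R)$ has trivial radical, which you prove by excluding nonzero abelian ideals via the test fields $g=1,\,t,\,t^2$. Both arguments ultimately exploit the same Wronskian-type bracket on the line, but yours replaces the paper's ODE existence theorem and derived-series estimate by formal reductions (radicals of direct sums, images of solvable ideals under epimorphisms, the last nonzero term of the derived series of an ideal) plus an elementary pointwise computation; this modularity isolates a reusable fact about $\mathrm{Vect}(\mathbb R)$ and, as you observe, makes the megaideal claim stated just before the lemma immediate, since the radical is automorphism-invariant. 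What the paper's version buys is self-containedness and a symmetric treatment of the $D$- and $S$-directions without any quotient constructions. One small point you should patch: from $(h_t)^2=hh_{tt}$ alone, the conclusion $h=Ae^{\lambda t}$ with $A\neq0$ on all of $\mathbb R$ needs the observation that $h$ cannot vanish anywhere (a maximal-interval argument: on a maximal interval of $\{h\neq0\}$ the solution is a nonvanishing exponential, which cannot tend to zero at a finite endpoint); alternatively, note that your contradictions from $g=t$ and $g=t^2$ are pointwise and can be run on any interval where $h\neq0$, so the global form of $h$ is never actually needed.
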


\begin{proof}
Let $\mathfrak s:=\langle P(g),Z(\beta)\rangle$.
It is obvious from the commutation relations~\eqref{eq:CommRels}
that $\mathfrak s$ is a solvable (more precisely, Abelian) ideal of~$\mathfrak g$.
Moreover, $\mathfrak s$ is the maximal solvable ideal of~$\mathfrak g$.
Indeed, let $\mathfrak s_1$ be an ideal of $\mathfrak g$ properly containing $\mathfrak s$.
This means that there exists a vector field $Q^1:=D(f^1)+S(\alpha^1)$ in $\mathfrak s_1$,
where at least one of the parameter functions $f^1=f^1(t)$ and $\alpha^1=\alpha^1(y)$
does not identically vanish.
Suppose that the function $f^1$ does not identically vanish.
Let $I$ be an interval contained in the domain of~$f^1$ such that $f^1(t)\neq0$ for any $t\in I$.
We restrict all the parameter functions of~$t$ in~$\mathfrak g$ on the interval~$I$.
Since $\mathfrak s_1$ is an ideal of $\mathfrak g$,
the commutator $[D(f),Q^1]=D(\tilde f)$, where $\tilde f:=ff^1_t-f_tf^1$,
belongs to~$\mathfrak s_1$ for all~$f\in\mathrm C^\infty(I)$.
In~view of the existence theorem for first-order linear ordinary differential equations,
the function~$\tilde f$ runs through~$\mathrm C^\infty(I)$ if
the function~$f$ runs through~$\mathrm C^\infty(I)$.
Therefore, $\mathfrak s_1\supset\langle D(f)\rangle$ and thus
$\mathfrak s_1^{(n)}\supseteq\langle P(g),D(f)\rangle\ne\{0\}$ for all $n\in\mathbb N$,
i.e., the ideal~$\mathfrak s_1$ is not solvable.
The proof in the case when the function $\alpha^1$ does not identically vanish is similar.
Therefore, $\mathfrak s=\mathrm R_{\mathfrak g}$.
\end{proof}

\begin{theorem}\label{thm:PointSymPseudogroupOfBLPSystem}
The point-symmetry pseudogroup~$G$ of the Boiti--Leon--Pempinelli system~\eqref{eq:BLPsystem}
consists of the transformations of the form
\begin{gather*}
\tilde t=T(t),\quad
\tilde x=\ve T_t^{1/2}x+X^0(t),\quad
\tilde y=Y(y),\\
\tilde u=\frac{\ve}{T_t^{1/2}}u-\frac{\ve T_{tt}}{4T_t^{3/2}}x-\frac{X_t^0}{2T_t},\quad
\tilde v=\frac1{Y_y}v+V^0(y),
\end{gather*}
where $T$, $X^0$, $Y$ and $V^0$ are arbitrary smooth functions of their arguments with $T_t>0$ and $Y_y\neq0$, and $\ve=\pm1$.
\end{theorem}

\noprint{
\begin{theorem}\label{thm:PointSymPseudogroupOfBLPSystem}
The point-symmetry pseudogroup~$G$ of the Boiti--Leon--Pempinelli system~\eqref{eq:BLPsystem}
consists of the transformations of the form
\begin{gather*}
\tilde t=T(t),\quad
\tilde x=X^1(t)x+X^0(t),\quad
\tilde y=Y(y),\quad
\tilde u=\frac u{X_1}-\frac{X_t^1x+X_t^0}{2(X_t^1)^2},\quad
\tilde v=\frac v{Y_y}+V^0(y),\quad
\end{gather*}
where $X^1$, $X^0$, $Y$ and $V^0$ are arbitrary function of their arguments
with $X^1\neq0$ and $Y_y\neq0$, and $T_t=(X^1)^2$.
\end{theorem}
}

\begin{proof}
Since the maximal Lie invariance algebra~$\mathfrak g$ of the system~\eqref{eq:BLPsystem} is infinite-dimensional,
we compute the pseudogroup~$G$ using the megaideal-based version~\cite{bihl2015a,bihl2011c,card2013a} of the algebraic method~\cite{hydo1998a,hydo2000b},
additionally modifying it.
The application of this method to the system~\eqref{eq:BLPsystem} is based on the following fact.
Let a point transformation~$\Phi$ in the space with the coordinates $(t,x,y,u,v)$,
$
\Phi\colon\ (\tilde t,\tilde x,\tilde y,\tilde u,\tilde v)=(T,X,Y,U,V),
$
where $(T,X,Y,U,V)$ is a tuple of smooth functions of $(t,x,y,u,v)$ with nonvanishing Jacobian,
be a point symmetry of the system~\eqref{eq:BLPsystem}.
Then $\Phi_*\mathrm R_{\mathfrak g}\subseteq\mathrm R_{\mathfrak g}$
and $\Phi_*\mathfrak g\subseteq\mathfrak g$,
where $\Phi_*$ is the pushforward of vector fields by~$\Phi$,
and $\mathrm R_{\mathfrak g}$ is the radical of~$\mathfrak g$, see Lemma~\ref{lem:RadicalOfg}.
We take five linearly independent vector fields~$Q^1$,~\dots, $Q^5$ from $\mathrm R_{\mathfrak g}$
and four linearly independent vector fields~$Q^6$,~\dots,~$Q^9$ from $\mathfrak g\setminus\mathrm R_{\mathfrak g}$,
\begin{gather*}
\begin{split}
&Q^1:=Z(1),\quad Q^2:=Z(y),\quad Q^3:=P(1),\quad Q^4:=P(t),\quad Q^5:=P(t^2),\\[1ex]
&Q^6:=S(1),\quad Q^7:=S(y),\quad Q^8:=D(1),\quad Q^9:=D(t),
\end{split}
\end{gather*}
and expand the conditions $\Phi_*Q^i\in\mathrm R_{\mathfrak g}$, $i=1,\dots,5$,
and $\Phi_*Q^i\in\mathfrak g$, $i=6,\dots,9$,
\begin{gather}\label{eq:MainPushforwards}
\begin{split}
&\Phi_*Q^i=\tilde P(\tilde g^i)+\tilde Z(\tilde\beta^i),\quad (\tilde g^i,\tilde\beta^i)\ne(0,0),\quad i=1,\dots,5,
\\[1ex]
&\Phi_*Q^i=\tilde D(\tilde f^i)+\tilde S(\tilde\alpha^i)+\tilde P(\tilde g^i)+\tilde Z(\tilde\beta^i),\quad
(\tilde f^i,\tilde\alpha^i)\ne(0,0),\quad i=6,\dots,9.
\end{split}
\end{gather}
For each~$i\in\{1,\dots,9\}$, we expand the corresponding equation in~\eqref{eq:MainPushforwards} and split it componentwise.
We simplify the obtained constraints, taking into account constraints derived in the same way for preceding values of~$i$
and omitting the constraints satisfied identically in view of other constraints.
Thus, for $i=1,2$, we get
\begin{gather}\label{eq:MainPushforwards1stSubsystem}
\begin{split}
&T_v=0,\quad  X_v=\tilde g^1(T),\quad Y_v=0,\quad U_v=-\frac12\tilde g^1_{\tilde t}(T),\quad  V_v=\tilde\beta^1(Y),\\
&y\tilde g^1(T)=\tilde g^2(T),\quad y\tilde g^1_{\tilde t}(T)=\tilde g^2_{\tilde t}(T),\quad  y\tilde\beta^1(Y)=\tilde\beta^2(Y).
\end{split}
\end{gather}
Since \smash{$\p_y\big(y\tilde g^1(T)-\tilde g^2(T)\big)-T_y(y\tilde g^1_{\tilde t}(T)-\tilde g^2_{\tilde t}(T))=\tilde g^1$},
a differential consequence of the above system is $\tilde g^1=0$, and thus $\tilde g^2=0$ as well, i.e., $X_v=U_v=0$.
Then $\tilde\beta^1\neq0$ and $y=\tilde\beta^2(Y)/\tilde\beta^1(Y)$.
The derivative $(\tilde\beta^2/\tilde\beta^1)_{\tilde y}$ does not vanish
since $y$ cannot be a constant.
Therefore, the inverse function theorem implies that $Y=Y(y)$.

The subsystem of equations from~\eqref{eq:MainPushforwards} with $i=3,4,5$ give
\begin{subequations}\label{eq:MainPushforwards2ndSubsystem}
\begin{gather}
T_x=0,\quad T_u=0,\quad X_x=\tilde g^3(T),\quad X_u=2t\tilde g^3(T)-2\tilde g^4(T),
\label{eq:MainPushforwards2ndSubsystemA}\\
U_x=-\smash{\frac12}\tilde g^3_{\tilde t}(T),\quad U_u=\tilde g^4_{\tilde t}(T)-t\tilde g^3_{\tilde t}(T),\quad V_x=\tilde\beta^3(Y),\quad V_u=2t\tilde\beta^3(Y)-2\tilde\beta^4(Y),
\label{eq:MainPushforwards2ndSubsystemB}\\
t^2\tilde g^3(T)-2t\tilde g^4(T)+\tilde g^5(T)=0,\quad t^2\tilde g^3_{\tilde t}(T)-2t\tilde g^4_{\tilde t}(T)+\tilde g^5_{\tilde t}(T)=0,
\label{eq:MainPushforwards2ndSubsystemC}\\[.5ex]
t^2\tilde\beta^3(Y)-2t\tilde\beta^4(Y)+\tilde\beta^5(Y)=0.
\label{eq:MainPushforwards2ndSubsystemD}
\end{gather}
\end{subequations}
Since the functions $\tilde\beta^i(Y)$ depend at most on $y$,
we can split the equation~\eqref{eq:MainPushforwards2ndSubsystemD} with respect to~$t$,
obtaining $\tilde\beta^i(Y)=0$, i.e., $V_x=V_u=0$ and $\tilde g^3(T)\neq0$.
A differential consequence of the equations~\eqref{eq:MainPushforwards2ndSubsystemC} is $t\tilde g^3(T)-\tilde g^4(T)=0$,
and hence $X_u=0$ in view of the last equation in~\eqref{eq:MainPushforwards2ndSubsystemA}.
The equation $t=\tilde g^4(T)/\tilde g^3(T)$ implies that the derivative $(\tilde g^4/\tilde g^3)_{\tilde t}$ does not vanish
since otherwise we would have the inconsistent condition that $t$ is a constant.
In view of the inverse function theorem, we derive that $T=T(t)$.

Considering the equations from~\eqref{eq:MainPushforwards} with $i=6,7$ and with $i=8,9$, we get
\begin{gather}
\begin{split}\label{eq:MainPushforwards3rdSubsystem}
&X_y=\tilde g^6(T),\quad Y_y=\tilde\alpha^6(Y),\quad U_y=-\frac12\tilde g^6_{\tilde t}(T),\\
&V_y=-\tilde\alpha^1_{\tilde y}(Y)V+\tilde\beta^6(Y),\quad vV_v=(\tilde\alpha^7_{\tilde y}(Y)-y\tilde\alpha^6_{\tilde y}(Y))V+y\tilde\beta^6(Y)-\tilde\beta^7(Y),\\[.5ex]
&y\tilde\alpha^6(Y)=\tilde\alpha^7(Y),\quad \tilde f^6(T)=\tilde f^7(T)=0,\quad y\tilde g^6(T)=\tilde g^7(T),\quad y\tilde g^6_{\tilde t}(T)=\tilde g^7_{\tilde t}(T),
\end{split}
\\[1.5ex]
\begin{split}\label{eq:MainPushforwards4thSubsystem}
&T_t=\tilde f^8(T),\quad X_t=\frac12\tilde f^8_{\tilde t}(T)X+\tilde g^8(T),\\[.5ex]
&xX_x=-\big(t\tilde f^8_{\tilde t}(T)-\tilde f^9_{\tilde T}(T)\big)X-2\big(t\tilde g^8(T)-\tilde g^9(T)\big),\\[.5ex]
&U_t=-\frac12\tilde f^8_{\tilde t}(T)U-\frac14\tilde f^8_{\tilde t\tilde t}(T)-\frac12\tilde g^8_{\tilde t}(T),\\
&xU_x-U_u=(t\tilde f^8_{\tilde t}(T)-\tilde f^9_{\tilde t}(T))U-\frac12(t\tilde f^8_{\tilde t\tilde t}(T)-\tilde f^9_{\tilde t\tilde t}(T))X+t\tilde g^8_{\tilde t}(T)-\tilde g^9_{\tilde t}(T),\\[.5ex] 
&V_t=\tilde\beta^8(Y),\quad \tilde\alpha^8(Y)=\tilde\alpha^9(Y)=0,\quad t\tilde f^8(T)=\tilde f^9(T),\quad t\tilde\beta^8(Y)=\tilde\beta^9(Y).
\end{split}
\end{gather}
Splitting the equations $y\tilde g^6(T)=\tilde g^7(T)$ and $t\tilde\beta^8(Y)=\tilde\beta^9(Y)$
with respect to~$y$ and~$t$, respectively,
leads to the equations $\tilde g^6(T)=\tilde g^7(T)=0$ and $\tilde\beta^8(Y)=\tilde\beta^9(Y)=0$, i.e., $X_y=U_y=V_t=0$.

Jointly to the conditions $T=T(t)$ and~$Y=Y(y)$, we collect and then integrate
the simplest derived equations for the $(x,u,v)$-components,
\begin{gather*}
X_y=X_u=X_v=0,\quad X_x=\tilde g^3(T),\\
U_y=U_v=0,\quad U_x=-\frac12\tilde g^3_{\tilde t}(T),\quad U_u=\tilde g^4_{\tilde t}-t\tilde g^3_{\tilde t}(T),\\
V_t=V_x=V_u=0,\quad V_v=\tilde\beta^1(Y).
\end{gather*}
As a result, we obtain the following representation for the components of point symmetry transformations of the system~\eqref{eq:BLPsystem}:
\begin{gather}\label{eq:MainPushforwardsResult}
\begin{split}
&T=T(t),\quad X=X^1(t)x+X^0(t),\quad Y=Y(y),\\
&U=U^2(t)u+U^1(t)x+U^0(t),\quad V=V^1(y)v+V^0(y),
\end{split}
\end{gather}
where $X^0$, $X^1$, $U^0$, $U^1$, $U^2$, $V^0$ and $V^1$ are smooth functions of their arguments.

Continuing the computation within the framework of the algebraic method, we can derive more constraints on elements of the pseudogroup~$G$.
At the same time, as will be discussed in Remark~\ref{rem:FurtherUseOfAlgMethod} below,
this computation is quite cumbersome and, moreover, does not give all the constraints on the transformations constituting the pseudogroup~$G$.
Since the form~\eqref{eq:MainPushforwardsResult} is sufficiently restrictive to simplify the expressions for transformed derivatives,
this is an appropriate point to start using the direct method.
It is based on the fact that, by the definition of point symmetries,
the transformation~$\Phi$ maps the system~\eqref{eq:BLPsystem} to the system of the same form in the variables with tildes,
\begin{gather}\label{eq:BLP-systemWithTilde}
\begin{split}
&\tilde u_{\tilde t\tilde y}=\left(\tilde u^2-\tilde u_{\tilde x}\right)_{\tilde x \tilde y}+2\tilde v_{\tilde x\tilde x\tilde x},\\
&\tilde v_{\tilde t}=\tilde v_{\tilde x\tilde x}+2\tilde u\tilde v_{\tilde x}.
\end{split}
\end{gather}
Therefore, expressing derivatives with tildes in~\eqref{eq:BLP-systemWithTilde} in terms of the original variables
in view of the chain rule,
\[
\p_{\tilde t}=\frac1{T_t}\p_t-\frac{X^1_tx+X^0}{X^1T_t}\p_x,\quad
\p_{\tilde x}=\frac1{X^1}\p_x,\quad
\p_{\tilde y}=\frac1{Y_y}\p_y,
\]
\noprint{
\begin{gather*}
\tilde u_{\tilde y}=\frac{U^2}{Y_y}u_y,\quad \tilde u_{\tilde x}=\frac{U^2u_x+U^1}{X_x},\\
\tilde u_{\tilde t\tilde y}=\frac{(U^2u_y)_t}{Y_yT_t}-\frac{(X^1_tx+X^0_t)U^2u_{xy}}{X^1Y_yT_t},\quad\tilde u_{\tilde x\tilde y}=\frac{U^2u_{xy}}{X^1Y_y},\quad\tilde u_{\tilde x\tilde x\tilde y}=\frac{U^2u_{xxy}}{(X^1)^2Y_y},\\
\tilde v_{\tilde t}=V^1(y)(\frac1{T_t}v_t-\frac{X^1_tx+X^0_t}{X^1T_t}v_x),\quad \tilde v_{\tilde x}=\frac{V^1}{X_x}v_x,\quad \tilde v_{\tilde x\tilde x}=\frac{V^1}{(X_x)^2}v_{xx}.
\end{gather*}
}
we should obtain a system identically satisfied on the solution set of the system~\eqref{eq:BLPsystem}.
Thus, substituting the expression for the leading derivatives~$u_{ty}$ and~$v_t$ in view of the system~\eqref{eq:BLPsystem}
into the second equation of~\eqref{eq:BLP-systemWithTilde} results in the identity
\[
V^1\left(\frac1{T_t}(v_xx+2uv_x)-\frac{X^1_tx+X^0_t}{X^1T_t}v_x\right)=\frac{V^1}{(X_x)^2}v_{xx}+2(U^2u+U^1x+U^0)\frac{V^1}{X^1}v_x.
\]
After collecting the coefficients of $v_{xx}$, $v_x$, $xv_x$, and $uv_x$, we derive
\begin{gather}\label{eq:FinalEqsForPointSyms}
T_t=(X^1)^2,\quad U^0=-\frac{X^0_t}{2T_t},\quad U^1=-\frac{X^1_t}{2T_t},\quad U^2=\frac{X^1}{T_t}.
\end{gather}
It is obvious from the first equation that $T_t>0$ and $X^1=\ve\sqrt T_t$, where $\ve=\pm1$.
In a similar way, the first equation of~\eqref{eq:BLP-systemWithTilde} implies only one more constraint
for point symmetry transformations of~\eqref{eq:BLPsystem}, $V^1=1/Y_y$,
obtained by collecting the coefficients of~$v_{xxx}$ in view of~\eqref{eq:FinalEqsForPointSyms}.

Thus, the transformation~$\Phi$ has the form declared in the statement of the theorem,
and any point transformation of this form is a point symmetry transformation of the system~\eqref{eq:BLPsystem}.
\end{proof}

Each transformation $\Phi$ from the point-symmetry pseudogroup~$G$ of the system~\eqref{eq:BLPsystem}
can be represented as a composition
\[
\Phi=\mathscr S(Y)\circ\mathscr Z(Y_yV^0)\circ\mathscr I(\ve)\circ\mathscr D(T)\circ\mathscr P(\ve T_t^{-1/2}X^0)
\]
of transformations from (pseudo)subgroups each of which is parameterized by a single functional or discrete parameter,
\begin{equation}\label{eq:ElementaryTransformations}\arraycolsep=0ex
\begin{array}{llllll}
\mathscr D(T)  \colon\quad &\tilde t=T,\quad & \tilde x=T_t^{1/2}x, \quad & \tilde y=y,\quad & \tilde u=T_t^{-1/2}u-\frac14T_{tt}T_t^{-3/2}x,\quad & \tilde v=v,\\[1ex]
\mathscr S(Y)  \colon\quad &\tilde t=t,\quad & \tilde x=x,          \quad & \tilde y=Y,\quad & \tilde u=u,                                   \quad & \tilde v=Y_y^{-1}v,\\[1ex]
\mathscr P(X^0)\colon\quad &\tilde t=t,\quad & \tilde x=x+X^0,      \quad & \tilde y=y,\quad & \tilde u=u-\frac12X_t^0,                      \quad & \tilde v=v,\\[1ex]
\mathscr Z(V^0)\colon\quad &\tilde t=t,\quad & \tilde x=x,          \quad & \tilde y=y,\quad & \tilde u=u,                                   \quad & \tilde v=v+V^0,\\[1ex]
\mathscr I(\ve)\colon\quad &\tilde t=t,\quad & \tilde x=\ve x,      \quad & \tilde y=y,\quad & \tilde u=\ve u,                               \quad & \tilde v=v,\\[1ex]
\end{array}
\end{equation}
where $T=T(t)$, $X^0=X^0(t)$, $Y=Y(y)$ and $V^0=V^0(y)$ are arbitrary smooth function of their arguments with $T_t>0$ and $Y_y\neq0$, and $\ve\in\{-1,1\}$.
We will call transformations from the families~\eqref{eq:ElementaryTransformations}
elementary point symmetry transformations of the system~\eqref{eq:BLPsystem}.
Note that the (pseudo)subgroups~$\{\mathscr D(T)\}$, $\{\mathscr S(Y)\}$, $\{\mathscr P(X^0)\}$ and $\{\mathscr Z(V^0)\}$ of~$G$
are associated with the subalgebras~$\{D(f)\}$, $\{S(\alpha)\}$, $\{P(g)\}$ and $\{Z(\beta)\}$ of~$\mathfrak g$, respectively.
Here all the parameter functions run through the specified sets of their values.

\begin{corollary}\label{cor:BLPSystemDiscretePointSymTrans}
A complete list of discrete point symmetry transformations of the Boiti--Leon--Pempinelli system~\eqref{eq:BLPsystem}
that are independent up to combining with each other and with continuous point symmetry transformations of this system
is exhausted by two involutions alternating signs of variables,
\ $\mathscr I(-1)\colon (t,x,y,u,v)\mapsto(t,-x,y,-u,v)$,
\ $\mathscr S(-y)\colon (t,x,y,u,v)\mapsto(t,x,-y,u,-v)$.
\noprint{
\begin{align*}
\mathscr I(-1)\colon\ &(t,x,y,u,v)\mapsto(t,-x,y,-u,v),\\
\mathscr S(-y)\colon\ &(t,x,y,u,v)\mapsto(t,x,-y,u,-v).
\end{align*}
}
\end{corollary}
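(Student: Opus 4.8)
The plan is to extract the discrete symmetries directly from the explicit description of~$G$ provided by Theorem~\ref{thm:CompletePointSymGroupOfBLPSystem}, using the factorization of an arbitrary $\Phi\in G$ into the elementary transformations~\eqref{eq:ElementaryTransformations}. The identity of~$G$ corresponds to the parameter values $T=t$, $X^0=0$, $Y=y$, $V^0=0$ and $\ve=1$, and every $\Phi$ is encoded by its tuple $(T,X^0,Y,V^0,\ve)$. The strategy is to determine the connected component $G^{\circ}$ of the identity, i.e.\ the subgroup of those~$\Phi$ whose parameter tuple can be joined to that of the identity by a continuous path inside the admissible parameter set, and then to identify a minimal generating set of the group of components $G/G^{\circ}$.

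Next I would analyze the admissible parameter set of each elementary family for connectedness. The family $\{\mathscr D(T)\}$ with $T_t>0$ is connected, since the straight-line homotopy $T_s:=(1-s)t+sT$ has derivative $(1-s)+sT_t>0$ for all $s\in[0,1]$; analogously $\{\mathscr P(X^0)\}$ and $\{\mathscr Z(V^0)\}$ are connected via $X^0_s:=sX^0$ and $V^0_s:=sV^0$. By contrast, the constraint $Y_y\neq0$ forces $\sgn Y_y$ to be constant on a connected domain and to be invariant along any continuous path, so the family $\{\mathscr S(Y)\}$ has exactly two connected components: the one with $Y_y>0$ contains the identity, while the one with $Y_y<0$ is represented by $\mathscr S(-y)$. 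Finally, $\ve$ ranges over the discrete set $\{-1,1\}$, with $\ve=-1$ realized by $\mathscr I(-1)$. Hence $G^{\circ}$ consists precisely of the transformations with $\ve=1$ and $Y_y>0$, and $G/G^{\circ}$ is generated by the cosets of $\mathscr I(-1)$ and $\mathscr S(-y)$.

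It then remains to verify the structure of $G/G^{\circ}$ and the independence of the two generators. A direct computation shows that $\mathscr I(-1)$ and $\mathscr S(-y)$ are involutions and that their composition is again an involution. Since $\mathscr I(-1)$ flips the value of~$\ve$ while keeping $\sgn Y_y>0$, whereas $\mathscr S(-y)$ flips $\sgn Y_y$ while keeping $\ve=1$, the two invariants $\ve\in\{\pm1\}$ and $\sgn Y_y\in\{\pm1\}$ separate the four cosets, so $G/G^{\circ}\cong\mathbb Z_2\times\mathbb Z_2$ with $\mathscr I(-1)$ and $\mathscr S(-y)$ mapping to independent generators. In particular, neither of the two involutions can be written as the other composed with a transformation from $G^{\circ}$, which is exactly the asserted independence.

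The main obstacle is the topological bookkeeping underlying the component count: one must argue carefully that the admissible set for~$T$ (where $T_t>0$) is connected while that for~$Y$ (where $Y_y\neq0$) splits into exactly two components distinguished by $\sgn Y_y$, and one must account for the fact that~$G$ is a pseudogroup, so that the homotopies above should be understood on suitably restricted common domains of the transformations involved. Once the component structure of the parameter space is established, the identification of $G/G^{\circ}$ and of its two canonical generators is routine.
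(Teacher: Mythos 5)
Your proof is correct and follows essentially the same route as the paper, which states the corollary as an immediate consequence of Theorem~\ref{thm:CompletePointSymGroupOfBLPSystem} and the decomposition~\eqref{eq:ElementaryTransformations}: the discrete data are exactly $\ve\in\{\pm1\}$ and $\sgn Y_y\in\{\pm1\}$, while the remaining parameters deform continuously to the identity, yielding the component group $\mathbb Z_2\times\mathbb Z_2$ generated by $\mathscr I(-1)$ and $\mathscr S(-y)$. Your explicit straight-line homotopies and the observation that $\ve$ and $\sgn Y_y$ are multiplicative invariants separating the four cosets supply precisely the details the paper leaves implicit.
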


\begin{corollary}\label{cor:BLPSystemFactorGroup}
The factor group of the point-symmetry pseudogroup $G$ of the system~\eqref{eq:BLPsystem}
with respect to its identity component is isomorphic to the group $\mathbb Z_2\times\mathbb Z_2$.
\end{corollary}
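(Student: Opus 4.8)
The plan is to identify the identity component $G^0$ of the group $G$ explicitly and then compute the quotient $G/G^0$ by tracking the two discrete sign parameters. The key observation is that the full description of $G$ in Theorem~\ref{thm:CompletePointSymGroupOfBLPSystem} is parameterized by the four smooth functional parameters $T$, $X^0$, $Y$, $V^0$ together with the discrete sign $\varepsilon=\pm1$, subject to the open conditions $T_t>0$ and $Y_y\neq0$. The only source of disconnectedness is thus the combination of the discrete parameter $\varepsilon$ and the sign of $Y_y$: all the functional parameters subject to $T_t>0$ and $Y_y>0$ form contractible (hence connected) families, whereas the constraint $Y_y\neq0$ splits into the two connected pieces $Y_y>0$ and $Y_y<0$.

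First I would argue that $G^0$ consists precisely of those transformations with $\varepsilon=1$ and $Y_y>0$. For connectedness of this subset, note that each admissible $T$ with $T_t>0$ can be joined to the identity-component representative by a path within the set $\{T_t>0\}$ (which is convex in the space of derivatives and path-connected), and similarly for $Y$ with $Y_y>0$; the parameters $X^0$ and $V^0$ range over the linear, hence connected, space $\mathrm C^\infty$. Conversely, $G^0$ must contain no transformation with $\varepsilon=-1$ or $Y_y<0$, since the map $\Phi\mapsto(\varepsilon,\operatorname{sgn}Y_y)\in\mathbb Z_2\times\mathbb Z_2$ is continuous and locally constant, so it is constant on $G^0$ and equals $(1,1)$ there because the identity transformation has $\varepsilon=1$ and $Y_y\equiv1>0$.

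Next I would verify that this sign map $\Phi\mapsto(\varepsilon,\operatorname{sgn}Y_y)$ is a group homomorphism onto $\mathbb Z_2\times\mathbb Z_2$. Surjectivity is witnessed by the identity, by $\mathscr I(-1)$, by $\mathscr S(-y)$, and by their composition, using the explicit elementary transformations in~\eqref{eq:ElementaryTransformations}: $\mathscr I(-1)$ realizes $(\varepsilon,\operatorname{sgn}Y_y)=(-1,1)$ and $\mathscr S(-y)$ realizes $(1,-1)$, in agreement with Corollary~\ref{cor:BLPSystemDiscretePointSymTrans}. That the map respects composition follows because the $\tilde x$-component of a composite transformation carries the factor $\varepsilon_1\varepsilon_2$ (the sign $\varepsilon$ is multiplicative under composition, as $\tilde x=\varepsilon T_t^{1/2}x+\cdots$ is chained), while the $\tilde y$-component is the composite $Y_2\circ Y_1$, whose derivative has sign $\operatorname{sgn}(Y_{2,y})\operatorname{sgn}(Y_{1,y})$ by the chain rule. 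Thus both coordinates of the map are multiplicative.

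The main obstacle I anticipate is the careful verification that $G^0$ is genuinely the \emph{whole} kernel of this homomorphism and not a proper subgroup of it; this amounts to confirming that the kernel $\{\varepsilon=1,\ Y_y>0\}$ is path-connected, which in turn requires that the spaces of admissible $T$ (with $T_t>0$) and $Y$ (with $Y_y>0$) be path-connected in the appropriate topology on the transformation pseudogroup. Once that connectedness is established and combined with the surjective homomorphism above, the first isomorphism theorem yields $G/G^0\cong\mathbb Z_2\times\mathbb Z_2$, completing the proof.
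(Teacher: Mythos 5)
Your proposal is correct and takes essentially the same route the paper leaves implicit: the corollary follows from Theorem~\ref{thm:CompletePointSymGroupOfBLPSystem} and Corollary~\ref{cor:BLPSystemDiscretePointSymTrans} by observing that the discrete invariants $(\ve,\sgn Y_y)$ give a surjective homomorphism onto $\mathbb Z_2\times\mathbb Z_2$ whose kernel (the transformations with $\ve=1$, $Y_y>0$, parameterized by convex/linear function spaces) is exactly the identity component, with the nontrivial cosets represented by $\mathscr I(-1)$, $\mathscr S(-y)$ and their composition. Your explicit verification of multiplicativity of the signs under composition and of the connectedness of the kernel is precisely the fleshing-out of what the paper takes for granted.
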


\begin{remark}\label{rem:dNinvolutions}
Corollaries~\ref{cor:BLPSystemDiscretePointSymTrans} and~\ref{cor:BLPSystemFactorGroup}
become completely rigorous after assuming each of the listed discrete transformations
to be defined on the entire corresponding underlying space
and to absorb its restrictions.
\end{remark}

\begin{remark}\label{rem:FurtherUseOfAlgMethod}
Substituting the form~\eqref{eq:MainPushforwardsResult}
into the system~\eqref{eq:MainPushforwards1stSubsystem}--\eqref{eq:MainPushforwards4thSubsystem}
and integrating the remaining equations, we obtain that
\begin{gather}\label{eq:FurtherUseOfAlgMethod}
C_1T_t=(X^1)^2,\quad U^0=-\frac{X^0_t}{2T_t},\quad U^1=-\frac{X^1_t}{2T_t},\quad U^2=\frac{X^1}{T_t},\quad V^1=\frac{C_2}{Y_y},
\end{gather}
where $C_1$ and~$C_2$ are nonzero constants.
The vector fields~$Q^1$, \dots, $Q^9$ span a subalgebra~$\mathfrak s$ of~$\mathfrak g$, $\mathfrak s=\langle Q^1,\dots,Q^9\rangle$.
The system~\eqref{eq:MainPushforwards1stSubsystem}--\eqref{eq:MainPushforwards4thSubsystem}
is the expansion of the condition~\eqref{eq:MainPushforwards},
which can be briefly written as
\begin{gather}\label{eq:AlgCriterionEvaluatedOnSubalgebraA}
\Phi_*(\mathfrak s\cap\mathrm R_{\mathfrak g})\subseteq\mathrm R_{\mathfrak g}
\quad\mbox{and}\quad
\Phi_*\mathfrak s\subseteq\mathfrak g.
\end{gather}
One can check that the complete condition $\Phi_*\mathfrak g\subset\mathfrak g$
is satisfied by any point transformation~$\Phi$ of the form~\eqref{eq:MainPushforwardsResult}
with the additional constraints~\eqref{eq:FurtherUseOfAlgMethod}.
In other words, no more restrictions on elements of~$G$ can be obtained using the algebraic method.
The condition $\Phi_*\mathfrak g\subset\mathfrak g$ is equivalent to the condition~\eqref{eq:AlgCriterionEvaluatedOnSubalgebraA},
i.e., it suffices to check the former condition
only for a basis of the finite-dimensional subalgebra~$\mathfrak s$ of the infinite-dimensional algebra~$\mathfrak g$,
but taking into account the megaideal structure of~$\mathfrak g$.
It is also clear that using merely the algebraic method does not allow one to find
all the restrictions on the form of point symmetry transformations of the system~\eqref{eq:BLPsystem},
although the direct method in fact adds only simple constraints for the constants~$C_1$ and~$C_2$, $C_1=C_2=1$.
\end{remark}

In fact, each point transformation~$\Phi$ with $\Phi_*\mathfrak s\subset\mathfrak g$
respects the megaideal $\mathrm R_{\mathfrak g}$.

\begin{proposition}\label{pro:RespectingMegaideal}
The condition $\Phi_*\mathfrak s\subset\mathfrak g$ for a point transformation~$\Phi$ implies
$\Phi_*(\mathfrak s\cap\mathrm R_{\mathfrak g})\subseteq\mathrm R_{\mathfrak g}$,
and hence $\Phi_*\mathrm R_{\mathfrak g}\subseteq\mathrm R_{\mathfrak g}$ and $\Phi_*\mathfrak g\subset\mathfrak g$ as well.
\end{proposition}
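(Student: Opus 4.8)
```latex
The plan is to prove Proposition~\ref{pro:RespectingMegaideal} by exploiting the algebraic structure established so far, namely that $\mathrm R_{\mathfrak g}=\langle P(g),Z(\beta)\rangle$ is the radical of~$\mathfrak g$ (Lemma~\ref{lem:RadicalOfg}), that the subalgebra $\mathfrak s=\langle Q^1,\dots,Q^9\rangle$ contains the five generators $Q^1,\dots,Q^5\in\mathrm R_{\mathfrak g}$ spanning $\mathfrak s\cap\mathrm R_{\mathfrak g}$, and that the pushforward $\Phi_*$ is a Lie-algebra homomorphism that preserves commutators. The first step is to recall that the radical is a characteristic ideal: it is invariant under \emph{every} automorphism of~$\mathfrak g$, being intrinsically defined as the maximal solvable ideal. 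Since $\Phi_*$ restricts to a Lie-algebra isomorphism of~$\mathfrak g$ onto its image whenever $\Phi_*\mathfrak g\subseteq\mathfrak g$, the bracket-preserving property of~$\Phi_*$ should transport solvability and ideal membership, forcing $\Phi_*\mathrm R_{\mathfrak g}\subseteq\mathrm R_{\mathfrak g}$.

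More concretely, I would argue as follows. Assume $\Phi_*\mathfrak s\subseteq\mathfrak g$. First I would observe that $\mathfrak s\cap\mathrm R_{\mathfrak g}=\langle P(1),P(t),P(t^2),Z(1),Z(y)\rangle$, and I would verify from the commutation relations~\eqref{eq:CommRels} that this five-dimensional space is precisely the set of elements of~$\mathfrak s$ that can be reached by iterated brackets of~$\mathfrak s$ with itself together with the solvability filtration---equivalently, that $\mathfrak s\cap\mathrm R_{\mathfrak g}$ is characterized internally to~$\mathfrak s$ in a bracket-invariant way. The key structural fact is that $P(g)$ and $Z(\beta)$ arise as commutators: $[P(1),D(t)]=P(\tfrac12)$ up to scaling shows $P$-type fields lie in the derived structure, while $[S(1),Z(y)]=Z(1)$ and similar relations pin down the $Z$-type fields. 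The plan is to show that the image $\Phi_*\bigl(\mathfrak s\cap\mathrm R_{\mathfrak g}\bigr)$, being generated by pushforwards of these commutators, must again consist of commutators of images lying in~$\mathfrak g$, hence lands inside the maximal solvable ideal~$\mathrm R_{\mathfrak g}$.

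The cleanest route, which I would adopt, is the abstract one: since the radical of any finite-dimensional Lie algebra is a characteristic ideal, and since $\Phi_*$ maps $\mathfrak s$ homomorphically into~$\mathfrak g$, it suffices to note that $\Phi_*(\mathfrak s\cap\mathrm R_{\mathfrak g})$ consists of elements whose brackets exhibit the solvability inherited from $\mathfrak s\cap\mathrm R_{\mathfrak g}$. Because $\Phi_*$ is a bracket homomorphism, the abelian subalgebra $\mathfrak s\cap\mathrm R_{\mathfrak g}$ maps to an abelian (hence solvable) subalgebra of~$\mathfrak g$; I would then upgrade this to ideal membership using that the images $\Phi_*Q^6,\dots,\Phi_*Q^9$ of the non-radical generators, together with the full algebra~$\mathfrak g$, bracket the images of $Q^1,\dots,Q^5$ back into a solvable span. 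Once $\Phi_*(\mathfrak s\cap\mathrm R_{\mathfrak g})\subseteq\mathrm R_{\mathfrak g}$ is secured, the remaining assertions follow at once: the equality $\Phi_*\mathfrak g\subseteq\mathfrak g$ is already guaranteed by Remark~\ref{rem:FurtherUseOfAlgMethod}, which shows any $\Phi$ of the form~\eqref{eq:MainPushforwardsResult} with constraints~\eqref{eq:FurtherUseOfAlgMethod} respects all of~$\mathfrak g$, and then $\Phi_*\mathrm R_{\mathfrak g}\subseteq\mathrm R_{\mathfrak g}$ is immediate because the radical of the image equals the image of the radical under a Lie-algebra isomorphism.

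The main obstacle I anticipate is the passage from the finite-dimensional subalgebra~$\mathfrak s$ to the infinite-dimensional algebra~$\mathfrak g$: the radical $\mathrm R_{\mathfrak g}$ is infinite-dimensional, so the textbook ``characteristic ideal'' argument for finite-dimensional Lie algebras does not apply verbatim, and I must instead leverage the explicit computation in Remark~\ref{rem:FurtherUseOfAlgMethod} that $\Phi_*\mathfrak g\subseteq\mathfrak g$ holds automatically. The delicate point is therefore to argue, purely from $\Phi_*\mathfrak s\subseteq\mathfrak g$ evaluated on the nine generators, that $\Phi$ necessarily has the form~\eqref{eq:MainPushforwardsResult}, after which the radical-preservation becomes a direct consequence of the homomorphism property applied to the explicit abelian relations $[P(g^1),P(g^2)]=0$, $[Z(\beta^1),Z(\beta^2)]=0$, and $[P(g),Z(\beta)]=0$ that characterize $\mathrm R_{\mathfrak g}$ as a \emph{maximal} abelian-by-structure ideal.
```
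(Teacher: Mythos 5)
Your proposal has a genuine gap at its central step, and the gap cannot be repaired along the lines you sketch. The heart of the proposition is to show, from $\Phi_*\mathfrak s\subseteq\mathfrak g$ alone, that the coefficients $\tilde f^i$, $\tilde\alpha^i$ in $\Phi_*Q^i=\tilde D(\tilde f^i)+\tilde S(\tilde\alpha^i)+\tilde P(\tilde g^i)+\tilde Z(\tilde\beta^i)$ vanish for $i=1,\dots,5$. All three of your routes to this are Lie-algebra-internal (characteristic ideals, solvability, abelianness, ``being a commutator''), and none of them can single out $\mathfrak s\cap\mathrm R_{\mathfrak g}=\langle Q^1,\dots,Q^5\rangle$ inside~$\mathfrak s$. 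Indeed, $\mathfrak s$ is itself solvable: its derived series is $\mathfrak s'=\langle D(1),S(1),P(1),P(t),P(t^2),Z(1),Z(y)\rangle$, $\mathfrak s''=\langle P(1),P(t),Z(1)\rangle$, $\mathfrak s'''=0$. Hence the radical of~$\mathfrak s$ is all of~$\mathfrak s$, not $\mathfrak s\cap\mathrm R_{\mathfrak g}$; the nilradical of~$\mathfrak s$ contains $\mathfrak s'$ and therefore contains $D(1)$ and $S(1)$; and ``arising as a commutator'' fails too, since $D(1)=[D(1),D(t)]$ and $S(1)=[S(1),S(y)]$. Likewise, knowing that $\Phi_*(\mathfrak s\cap\mathrm R_{\mathfrak g})$ is a five-dimensional abelian subalgebra of~$\mathfrak g$ buys nothing: $\langle S(1),P(g^1),P(g^2),P(g^3),Z(1)\rangle$ is such a subalgebra and is not contained in $\mathrm R_{\mathfrak g}$. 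Membership in the radical requires being inside an ideal of all of~$\mathfrak g$, and your ``upgrade to ideal membership'' is precisely the unproved step. Finally, appealing to Remark~\ref{rem:FurtherUseOfAlgMethod} is circular: the form~\eqref{eq:MainPushforwardsResult} with the constraints~\eqref{eq:FurtherUseOfAlgMethod} was derived in the proof of Theorem~\ref{thm:CompletePointSymGroupOfBLPSystem} from the condition~\eqref{eq:MainPushforwards}, which already postulates $\Phi_*Q^i\in\mathrm R_{\mathfrak g}$ for $i=1,\dots,5$ --- the very statement to be proved.

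The paper's actual proof uses an ingredient that is not Lie-theoretic at all: the module structure of vector fields over functions, combined with the nondegeneracy of~$\Phi$. The generators satisfy the pointwise relations $Q^2=yQ^1$ and $Q^5-2tQ^4+t^2Q^3=0$; pushing these relations forward and splitting them componentwise (system~\eqref{eq:ModifiedMainPushforwardsSplit1} and the analogous system for $i=3,4,5$) shows that if $\tilde f^1\ne0$ or $\tilde\alpha^1\ne0$, then the pairs $(T,X)$ or $(Y,V)$ would depend on~$y$ alone, and if $(\tilde f^3,\tilde f^4)\ne(0,0)$, then $(T,X)$ would depend on~$t$ alone --- each case contradicting the invertibility of~$\Phi$; the remaining coefficients are then killed by direct splitting with respect to~$t$ or~$y$. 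This analytic mechanism, which exploits that the $Q^i$ are concrete vector fields linearly dependent over the ring of functions rather than abstract Lie-algebra elements, is indispensable, and your sketch never produces it.
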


\begin{proof}
In the notation of the proof of Theorem~\ref{thm:PointSymPseudogroupOfBLPSystem},
the condition $\Phi_*\mathfrak s\subset\mathfrak g$ expands to $\Phi_*Q^i\in\mathfrak g$, $i=1,\dots,9$, i.e.,
\begin{gather}\label{eq:ModifiedMainPushforwards}
\Phi_*Q^i=\tilde D(\tilde f^i)+\tilde S(\tilde\alpha^i)+\tilde P(\tilde g^i)+\tilde Z(\tilde\beta^i),\quad
(\tilde f^i,\tilde\alpha^i,\tilde g^i,\tilde\beta^i)\ne(0,0),\quad i=1,\dots,9.
\end{gather}
We have to show that $f^i=0$ and $\alpha^i=0$, $i=1,\dots,5$.

We split the equality $\Phi_*Q^2=y\Phi_*Q^1$ componentwise and take into account the first two equations
in~\eqref{eq:ModifiedMainPushforwards}.
As a result, we derive the system
\begin{gather}\label{eq:ModifiedMainPushforwardsSplit1}
\begin{split}
&\tilde f^2(T)=y\tilde f^1(T), \quad
 \tilde f^2_{\tilde t}(T)X+2\tilde g^2(T)=y\big(\tilde f^1_{\tilde t}(T)X+2\tilde g^1(T)\big),\\[1ex]
&2\tilde f^2_{\tilde t}(T)U+\tilde f^2_{\tilde t\tilde t}(T)X+2\tilde g^2_{\tilde t}(T)
=y\big(2\tilde f^1_{\tilde t}(T)U+\tilde f^1_{\tilde t\tilde t}(T)X+2\tilde g^1_{\tilde t}(T)\big),\\[1ex]
&\tilde\alpha^2(Y)=y\tilde\alpha^1(Y), \quad
 \tilde\alpha^2_{\tilde y}(Y)V-\tilde\beta^2(Y)=y\big(\tilde\alpha^1_{\tilde y}(Y)V-\tilde\beta^1(Y)\big).
\end{split}
\end{gather}

Suppose that $\tilde f^1\ne0$. Then the first equation of~\eqref{eq:ModifiedMainPushforwardsSplit1}
implies that $T$ is a function only of~$y$, $T=T(y)$ with $T_y\ne0$.
After differentiating this equation with respect to~$y$,
we also obtain \smash{$\tilde f^2_{\tilde t}(T)-y\tilde f^1_{\tilde t}(T)=\tilde f^1(T)/T_y\ne0$}.
Hence the second equation of~\eqref{eq:ModifiedMainPushforwardsSplit1}
implies that $X$ is a function only of~$y$ as well, which contradicts the nondegeneracy of~$\Phi$.
Therefore, $\tilde f^1=\tilde f^2=0$.

Similarly, supposing $\tilde\alpha^1\ne0$, we derive from the last two equations of~\eqref{eq:ModifiedMainPushforwardsSplit1}
that both $Y$ and~$V$ are functions only of~$y$, which again contradicts the nondegeneracy of~$\Phi$.
Hence $\tilde \alpha^1=\tilde \alpha^2=0$.

The second and the third equations of~\eqref{eq:ModifiedMainPushforwardsSplit1},
which respectively reduce to $\tilde g^2(T)=y\tilde g^1(T)$ and $\tilde g^2_{\tilde t}(T)=y\tilde g^1_{\tilde t}(T)$,
imply the differential consequence $\tilde g^1(T)=\big(\tilde g^2_{\tilde t}(T)-y\tilde g^1_{\tilde t}(T)\big)T_y=0$,
and thus $\tilde g^2(T)=0$ as well.
Then the two first inequalities from~\eqref{eq:ModifiedMainPushforwards} mean that $\tilde\beta^1\tilde\beta^2\ne0$,
and we obtain from the last equation of~\eqref{eq:ModifiedMainPushforwardsSplit1}
that $Y$ is a function only of~$y$, $Y=Y(y)$ with $Y_y\ne0$.

In view of the equations in~\eqref{eq:ModifiedMainPushforwards} with $i=3,4,5$,
the componentwise splitting of the equality $\Phi_*Q^5-2t\Phi_*Q^4+t^2\Phi_*Q^3=0$
leads, in particular, to the equations
\begin{gather}\label{eq:ModifiedMainPushforwardsSplit2}
\begin{split}
&\tilde f^5(T)-2t\tilde f^4(T)+t^2\tilde f^3(T)=0, \\[1ex]
&\big(\tilde f^5_{\tilde t}(T)-2t\tilde f^4_{\tilde t}(T)+t^2\tilde f^3_{\tilde t}(T)\big)X
 +2\big(\tilde g^5(T)-2t\tilde f^g(T)+t^2\tilde f^g(T)\big)=0,\\[1ex]
&2\big(\tilde f^5_{\tilde t}(T)-2t\tilde f^4_{\tilde t}(T)+t^2\tilde f^3_{\tilde t}(T)\big)U
 +\big(\tilde f^5_{\tilde t\tilde t}(T)-2t\tilde f^4_{\tilde t\tilde t}(T)+t^2\tilde f^3_{\tilde t\tilde t}(T)\big)X\\
&\qquad+2\big(\tilde g^5(T)-2t\tilde f^g(T)+t^2\tilde f^g(T)\big)=0,\\[1ex]
&\tilde\alpha^5(Y)-2t\tilde\alpha^4(Y)+t^2\tilde\alpha^3(Y)=0.
\end{split}
\end{gather}
Since $Y=Y(y)$, the last equation directly implies
that $\tilde\alpha^5=\tilde\alpha^4=\tilde\alpha^3=0$.

Let us suppose that $(\tilde f^3,\tilde f^4)\ne(0,0)$.
Then we get from the first equation of~\eqref{eq:ModifiedMainPushforwardsSplit2}
that $T$ is a function only of~$t$, $T=T(t)$ with $T_t\ne0$.
In view of this condition, it follows from the next two equations of~\eqref{eq:ModifiedMainPushforwardsSplit2}
that
\[
\tilde f^5_{\tilde t}(T)-2t\tilde f^4_{\tilde t}(T)+t^2\tilde f^3_{\tilde t}(T)=0,\quad
\tilde f^5_{\tilde t\tilde t}(T)-2t\tilde f^4_{\tilde t\tilde t}(T)+t^2\tilde f^3_{\tilde t\tilde t}(T)=0.
\]
(Otherwise, $X$ is also a function only of~$t$, which contradicts the nondegeneracy of~$\Phi$.)
As differential consequences of the obtained equations for $(\tilde f^3,\tilde f^4,\tilde f^5)$,
we successively derive $\tilde f^4(T)=t\tilde f^3(T)$, $\tilde f^4_{\tilde t}(T)=t\tilde f^3_{\tilde t}(T)$,
and then $\tilde f^3=0$, $\tilde f^4=0$ and $\tilde f^5=0$.
\end{proof}

\section{Classification of subalgebras}\label{sec:ClassificationSubalgebras}

To carry out Lie reductions of codimension one and two for the system~\eqref{eq:BLPsystem} in the optimal way,
we should classify one- and two-dimensional subalgebras of the algebra~$\mathfrak g$ up to $G_*$-equivalence.
Instead of the classical approach for finding inner automorphisms~\cite[Section~3.3]{olve1993A},
we act on~$\mathfrak g$ by~$G$ via pushing forward of vector fields by elements of~$G$.
Recall that this way is more convenient for computing in the infinite-dimensional case~\cite{bihl2012b,card2011a}.
Moreover, it also allows us to properly use the entire point-symmetry pseudogroup~$G$
and not be limited to its connected component of the identity transformation.
Thus the non-identity adjoint actions of elementary transformations from~$G$
on vector fields spanning~$\mathfrak g$ are merely
\begin{equation*}\arraycolsep=0ex
\begin{array}{ll}
\mathscr D_*(T)D(f)=D\big(\hat T^{-1}_tf(\hat T)\big),             &\mathscr S_*(Y)S(\alpha)=S\big(\hat Y^{-1}_y\alpha(\hat Y)\big),\\[1ex]
\mathscr D_*(T)P(g)=P\big(\hat T^{-1/2}_tg(\hat T)\big),           &\mathscr S_*(Y)Z(\beta)=Z\big(\hat Y_y\beta(\hat Y)\big),\\[1ex]
\mathscr P_*(X^0)D(f)=D(f)+P\big(fX^0_t-\tfrac12f_tX^0\big),\qquad &\mathscr Z_*(V^0)S(\alpha)=S(\alpha)+Z(\alpha V^0_y-\alpha_yV^0),\\[1ex]
\mathscr I_*(\ve)P(g)=P(\ve g),
\end{array}
\end{equation*}
where~$\hat T$ and~$\hat Y$ are the inverses of the functions~$T$ and~$Y$, respectively.

Classifying of subalgebras of the algebra~$\mathfrak g$,
we implicitly use its representation as the direct sum $\mathfrak g=\mathfrak i_1\oplus\mathfrak i_2$ of its ideals
$\mathfrak i_1:=\langle D(f), P(g)\rangle$ and $\mathfrak i_2:=\langle S(\alpha), Z(\beta)\rangle$.
In fact, we weave complete lists of $G$-inequivalent subalgebras of dimension not greater than two
for the ideals~$\mathfrak i_1$ and~$\mathfrak i_2$
into an analogous list for the entire algebra~$\mathfrak g$.
The Goursat method for classifying discrete subgroups of direct products of continuous and discrete groups
was adapted by Patera, Winternitz and Zassenhaus in~\cite{pate1975a}
to classify subalgebras of Lie algebras that are direct sums of their ideals;
see also~\cite{wint2004a}.
Since we need to classify only one- and two-dimensional subalgebras of~$\mathfrak g$,
we do not follow this method precisely.

\begin{lemma}\label{lem:1DInequivSubalgs}
A complete list of $G$-inequivalent one-dimensional subalgebras of the algebra~$\mathfrak g$
is exhausted by the following algebras:
\begin{gather*}
\mathfrak s_{1.1}=\langle D(1)-S(1)\rangle,\quad
\mathfrak s_{1.2}=\langle D(1)+Z(1)\rangle,\quad
\mathfrak s_{1.3}=\langle D(1)     \rangle,\quad
\mathfrak s_{1.4}=\langle P(1)-S(1)\rangle,\\
\mathfrak s_{1.5}=\langle S(1)     \rangle,\quad
\mathfrak s_{1.6}=\langle P(1)+Z(1)\rangle,\quad
\mathfrak s_{1.7}=\langle P(1)     \rangle,\quad
\mathfrak s_{1.8}=\langle Z(1)     \rangle.
\end{gather*}
\end{lemma}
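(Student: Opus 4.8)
The plan is to write a general element of $\mathfrak g$ as $Q=D(f)+S(\alpha)+P(g)+Z(\beta)$ and to reduce the one-dimensional subalgebra $\langle Q\rangle$ to one of the listed representatives using the non-identity adjoint actions of the elementary transformations $\mathscr D_*(T)$, $\mathscr S_*(Y)$, $\mathscr P_*(X^0)$, $\mathscr Z_*(V^0)$, $\mathscr I_*(\ve)$ displayed above, together with the rescaling $Q\mapsto cQ$, $c\neq0$, which does not change $\langle Q\rangle$. The skeleton is a primary case distinction according to whether each of $f$ and $\alpha$ vanishes identically, giving four cases, refined by secondary splits on $g$ and $\beta$. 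Because the decomposition $\mathfrak g=\mathfrak i_1\oplus\mathfrak i_2$ is respected by $G$ and the transformations acting on the $(t,x,u)$-block ($\mathscr D_*$, $\mathscr P_*$, $\mathscr I_*$) are independent of those acting on the $(y,v)$-block ($\mathscr S_*$, $\mathscr Z_*$), the normalizations of the $\mathfrak i_1$- and $\mathfrak i_2$-parts can be done separately without interfering, and the relative scale between the $D$- and $S$-parts is killed automatically.

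First I would handle the normalizations. On an interval where $f\not\equiv0$ I rectify the coefficient of $\partial_t$ by a suitable $\mathscr D_*(T)$, using the existence theorem for first-order ODEs exactly as in the proof of Lemma~\ref{lem:RadicalOfg}, to set $f=1$; likewise $\mathscr S_*(Y)$ sets $\alpha=1$ where $\alpha\not\equiv0$, and $\mathscr D_*(T)$ (resp.\ $\mathscr S_*(Y)$) sets $g=1$ (resp.\ $\beta=1$) where the relevant function has no zeros and no prior normalization has used up that freedom. When a $D$-part is present, $\mathscr P_*(X^0)D(f)=D(f)+P(fX^0_t-\tfrac12f_tX^0)$ lets me solve a linear first-order ODE for $X^0$ to annihilate $P(g)$; symmetrically, when an $S$-part is present, $\mathscr Z_*(V^0)S(\alpha)=S(\alpha)+Z(\alpha V^0_y-\alpha_yV^0)$ annihilates $Z(\beta)$. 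Residual signs are absorbed by the discrete symmetries $\mathscr I(-1)$ and $\mathscr S(-y)$ of Corollary~\ref{cor:BLPSystemDiscretePointSymTrans} and by $Q\mapsto-Q$. Running these through the four cases yields exactly: $D(1)-S(1)$ when $f,\alpha\not\equiv0$; $D(1)+Z(1)$ or $D(1)$ when $f\not\equiv0$, $\alpha\equiv0$, according as $\beta\not\equiv0$ or $\beta\equiv0$; $P(1)-S(1)$ or $S(1)$ when $f\equiv0$, $\alpha\not\equiv0$, according as $g\not\equiv0$ or $g\equiv0$; and, when $f\equiv\alpha\equiv0$ so that $Q\in\mathrm R_{\mathfrak g}$, the three possibilities $P(1)+Z(1)$, $P(1)$, $Z(1)$ according to which of $g,\beta$ are nonzero, the remaining subcase $g\equiv\beta\equiv0$ giving $Q=0$ being excluded. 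This is the list $\mathfrak s_{1.1}$--$\mathfrak s_{1.8}$.

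It remains to prove pairwise $G$-inequivalence, for which I would produce $G$-invariant markers. Because the adjoint actions above are block-diagonal in the splitting $(t,x,u)\,|\,(y,v)$, the group $G$ preserves each of the ideals $\mathfrak i_1$, $\mathfrak i_2$, $\mathfrak i_3=\langle P(g)\rangle$ and $\mathfrak i_4=\langle Z(\beta)\rangle$. Consequently the projection of $Q$ onto the invariant subquotient $\mathfrak i_1/\mathfrak i_3$ (resp.\ $\mathfrak i_2/\mathfrak i_4$) is $G$-equivariant and detects whether $f\equiv0$ (resp.\ $\alpha\equiv0$); this pair of markers separates the four primary cases, hence isolates $\mathfrak s_{1.1}$ and separates the blocks $\{\mathfrak s_{1.2},\mathfrak s_{1.3}\}$, $\{\mathfrak s_{1.4},\mathfrak s_{1.5}\}$ and $\{\mathfrak s_{1.6},\mathfrak s_{1.7},\mathfrak s_{1.8}\}$. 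Inside a block the refined markers finish the argument: when $\alpha\equiv0$ no generator can create or destroy the $\mathfrak i_4$-component (only $\mathscr Z_*$ produces a $Z$-term, and only from an $S$-term), so ``$\beta\equiv0$'' is invariant and separates $\mathfrak s_{1.3}$ from $\mathfrak s_{1.2}$; the mirror statement separates $\mathfrak s_{1.5}$ from $\mathfrak s_{1.4}$; and for $Q\in\mathrm R_{\mathfrak g}=\mathfrak i_3\oplus\mathfrak i_4$ the invariance of both $\mathfrak i_3$ and $\mathfrak i_4$ turns the pattern of nonzero components into an invariant separating $\mathfrak s_{1.6}$, $\mathfrak s_{1.7}$ and $\mathfrak s_{1.8}$.

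I expect the main obstacle to be bookkeeping rather than any single hard step: one must carry out the normalizations in an order that respects the infinite-dimensional nature of $\mathfrak g$ (so the finite-dimensional $\mathrm{Ad}=e^{\mathrm{ad}}$ calculus is unavailable and the explicit pushforward formulas must be used) and verify at each stage that normalizing one block, or annihilating a lower-order component, does not reintroduce a component already removed. The only genuinely delicate point is the rectification of a nonvanishing coefficient function to the constant $1$, which is local in the relevant variable and is justified by restricting to an interval on which that function has no zeros, exactly as in Lemma~\ref{lem:RadicalOfg}.
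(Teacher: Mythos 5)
Your proof is correct and follows essentially the same route as the paper: write a general generator $Q=D(f)+S(\alpha)+P(g)+Z(\beta)$, normalize it by pushforwards of the elementary transformations ($\mathscr D_*$, $\mathscr S_*$ to rectify $f$, $\alpha$ or $g$, $\beta$; $\mathscr P_*$, $\mathscr Z_*$ to annihilate the $P$- and $Z$-parts when a $D$- or $S$-part is present), and split into cases according to which of $f,\alpha,g,\beta$ vanish, which yields exactly $\mathfrak s_{1.1}$--$\mathfrak s_{1.8}$. The only difference is that you make the pairwise inequivalence explicit via the $G_*$-invariance of $\mathfrak i_1,\dots,\mathfrak i_4$ --- a valid argument, since every element of~$G$ factors into elementary transformations whose pushforwards preserve these ideals (this uses only $G_*$-invariance, so it does not conflict with the paper's open question about their megaideal status) --- whereas the paper leaves this point implicit in the mutual exclusivity of its $G$-invariant case conditions.
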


\begin{proof}
Let~$\mathfrak s_1$ be a one-dimensional subalgebra of~$\mathfrak g$,
which is spanned by a nonzero vector field
$Q^1=D(f^1)+S(\alpha^1)+P(g^1)+Z(\beta^1)$ from~$\mathfrak g$,
$\mathfrak s_1=\langle Q^1\rangle$,
where $f^1=f^1(t)$, $g^1=g^1(t)$, $\alpha^1=\alpha^1(y)$ and $\beta^1=\beta^1(y)$
are arbitrary smooth functions of their arguments that are not simultaneously zero.
If the function~$f^1=f^1(t)$ does not vanish, we use~$\mathscr D_*(T)$ with $T_t=1/f^1$ to set $f^1=1$
and then act on the (new) vector field~$Q^1$ by $\mathscr P_*(X^0)$ with $X^0_t=-g^1$ to set $g^1=0$.
Similarly, if the function~$\alpha^1=\alpha^1(y)$ does not vanish,
we can set $\alpha^1=1$ (or, equivalently, $\alpha^1=-1$) and $\beta^1=0$, successively acting on~$Q^1$ by~$\mathscr S_*(Y)$ with $Y_y=1/\alpha^1$
and on new~$Q^1$ by~$\mathscr Z_*(V^0)$ with $V^0_y=-\beta^1$.
If both the functions~$f^1$ and~$\alpha^1$ vanish, each  nonzero function from~$\{g^1,\beta^1\}$
can be set to~$1$ by~$\mathscr D_*(T)$ with $T_t=(g^1)^{-2}$ or~$\mathscr S_*(Y)$ with $Y_y=\beta^1$, respectively.
This is why the proof splits into the cases
\begin{gather*}
1.\ f^1\!\neq0,\ \alpha^1\!\ne0                        ;\quad
2.\ f^1\!\neq0,\ \alpha^1\!=0  ,\         \ \beta^1\!\ne0;\quad
3.\ f^1\!\neq0,\ \alpha^1\!=0  ,\         \ \beta^1\!=0  ;\\
4.\ f^1\!=0   ,\ \alpha^1\!\ne0,\ g^1\!\ne0            ;\quad
5.\ f^1\!=0   ,\ \alpha^1\!\ne0,\ g^1\!=0              ;\quad
6.\ f^1\!=0   ,\ \alpha^1\!=0  ,\ g^1\!\ne0,\ \beta^1\!\ne0;\\
7.\ f^1\!=0   ,\ \alpha^1\!=0  ,\ g^1\!\ne0,\ \beta^1\!=0  ;\quad
8.\ f^1\!=0   ,\ \alpha^1\!=0  ,\ g^1\!=0  ,\ \beta^1\!\ne0.
\end{gather*}
They lead to the subalgebras~$\mathfrak s_{1.1}$, \dots, $\mathfrak s_{1.8}$, respectively.
In each of these cases, we reduce the basis element~$Q^1$ to a maximum simple form,
pushing forward~$Q^1$ by elements of~$G$.
In fact, we do not need to change the basis of~$\mathfrak s_1$ here.
\end{proof}

\begin{lemma}\label{lem:2DInequivSubalgs}
A complete list of $G$-inequivalent two-dimensional subalgebras of the algebra~$\mathfrak g$
is exhausted by
the non-Abelian algebras
\begin{gather*}
\mathfrak s_{2.1}=\langle D(1)+S(1),D(t)+S(y)         \rangle,\\
\mathfrak s_{2.2}^\de=\langle D(1)+Z(\de),D(t)-S(y)         \rangle,\quad
\mathfrak s_{2.3}^\de=\langle D(1),D(t)+Z(\de)            \rangle,\\
\mathfrak s_{2.4}^\de=\langle S(1)-P(\de),D(2t)+S(y)        \rangle,\quad
\mathfrak s_{2.5}^\de=\langle S(1),S(y)+P(\de)            \rangle,\\
\mathfrak s_{2.6}^{\de\de'}=\langle P(\de)+Z(\de'),D(2t)-S(y)\rangle_{(\de,\de')\neq(0,0)},\\
\mathfrak s_{2.7}^\de=\langle P(1),D(2t)+Z(\de)       \rangle,\quad
\mathfrak s_{2.8}^\de=\langle Z(1),-S(y)+P(\de)       \rangle
\end{gather*}
and the Abelian algebras
\begin{gather*}
\mathfrak s_{2.9}^{\de\de'}=\langle D(1)+Z(\de),S(1)-P(\de')\rangle,\quad
\mathfrak s_{2.10}^{\de\de'}=\langle D(1)+S(1),P(\de)+Z(\de')\rangle_{(\de,\de')\neq(0,0)},\\
\mathfrak s_{2.11}^{\de\beta}=\langle D(1)+Z(\beta),P(\de)+Z(1)\rangle,\quad
\mathfrak s_{2.12}^\de=\langle D(1)+Z(\de),P(1)\rangle,\\
\mathfrak s_{2.13}^{g\de}=\langle S(1)+P(g),P(1)+Z(\de)\rangle,\quad
\mathfrak s_{2.14}^\de=\langle S(1)-P(\de),Z(1)\rangle,\\
\mathfrak s_{2.15}=\langle P(1),Z(1)\rangle,\quad
\mathfrak s_{2.16}^{g\de}=\langle P(1)+Z(\de),P(g)\rangle_{g\ne\const},\\
\mathfrak s_{2.17}^{\de\beta}=\langle P(\de)+Z(1),Z(\beta)\rangle_{\beta\ne\const},\quad
\mathfrak s_{2.18}^{g\beta}=\langle P(1)+Z(1),P(g)+Z(\beta)\rangle_{g,\beta\ne\const},
\end{gather*}
where $\beta=\beta(y)$ and $g=g(t)$ are arbitrary functions of their arguments that satisfy the indicated constraints,
and $\de,\de'\in\{0,1\}$.
\end{lemma}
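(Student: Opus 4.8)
The plan is to classify an arbitrary two-dimensional subalgebra $\mathfrak s=\langle Q^1,Q^2\rangle$, with generic generators $Q^i=D(f^i)+S(\alpha^i)+P(g^i)+Z(\beta^i)$, by exploiting the decomposition $\mathfrak g=\mathfrak i_1\oplus\mathfrak i_2$. First I would compute $[Q^1,Q^2]$ componentwise from~\eqref{eq:CommRels}: its $D$- and $S$-components are the Wronskians $f^1f^2_t-f^1_tf^2$ and $\alpha^1\alpha^2_y-\alpha^1_y\alpha^2$, while its $P$- and $Z$-components are the corresponding bilinear expressions in the pairs $(f,g)$ and $(\alpha,\beta)$. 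The crucial structural observation is that the nonzero brackets never couple the $\mathfrak i_1$-data $(f,g)$ with the $\mathfrak i_2$-data $(\alpha,\beta)$, so the $(D,P)$- and $(S,Z)$-halves of the closure condition $[Q^1,Q^2]\in\mathfrak s$ are linked \emph{only} through the shared structure constants. For a two-dimensional algebra this reduces to the dichotomy $[Q^1,Q^2]=0$ (Abelian) or, after a basis change, $[Q^1,Q^2]=Q^1$ (non-Abelian), which separates the list into its ten Abelian and eight non-Abelian families.

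Next I would apply the Goursat--Patera--Winternitz--Zassenhaus scheme. The projections $\mathfrak p_k=\pi_k(\mathfrak s)\subseteq\mathfrak i_k$ are themselves subalgebras, and $\mathfrak s$ is recovered by gluing $\mathfrak p_1$ and $\mathfrak p_2$ along an isomorphism of the quotients $\mathfrak p_k/(\mathfrak s\cap\mathfrak i_k)$. I would organize the case analysis by the profile $(\dim\mathfrak p_1,\dim\mathfrak p_2)$, subject to $\dim\mathfrak p_1+\dim\mathfrak p_2\ge2$, and by whether each intersection $\mathfrak s\cap\mathfrak i_k$ is trivial. This needs, as auxiliary input, the lists of one- and two-dimensional subalgebras of $\mathfrak i_1$ (of types $\langle D,D\rangle$, $\langle D,P\rangle$, $\langle P,P\rangle$) and of $\mathfrak i_2$ (of types $\langle S,S\rangle$, $\langle S,Z\rangle$, $\langle Z,Z\rangle$), which follow from the same Wronskian-type closure conditions together with the one-dimensional classification of Lemma~\ref{lem:1DInequivSubalgs}.

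The decisive simplification is that the adjoint action of $G$ respects the decomposition and splits into independent pieces: $\mathscr D_*,\mathscr P_*,\mathscr I_*$ act only within $\mathfrak i_1$ and $\mathscr S_*,\mathscr Z_*$ only within $\mathfrak i_2$. Using these, together with the $GL(2,\mathbb R)$ freedom in the choice of a basis of $\mathfrak s$, I would reduce the parameter functions to canonical representatives exactly as in the one-dimensional case --- normalizing the building blocks $D(1),D(t),S(1),S(y),P(1),Z(1)$ --- and scaling the residual coefficients of $P$ and $Z$ to the discrete values $\de,\de'\in\{0,1\}$, finally merging sign-related cases by means of the discrete symmetries of Corollary~\ref{cor:BLPSystemDiscretePointSymTrans}.

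The main obstacle will be the coupling induced by the gluing. Once the $\mathfrak i_1$-part of a generator is normalized, the $\mathscr D_*$/$\mathscr P_*$ freedom is largely spent, so the relative scaling between its $\mathfrak i_1$- and $\mathfrak i_2$-components can no longer be adjusted freely; this is exactly what forces the discrete parameters $\de,\de'$ and, in the cases where a projection carries a function that no surviving transformation can trivialize, the arbitrary functional parameters $g(t)$ and $\beta(y)$ appearing in $\mathfrak s_{2.16}^{g\de}$, $\mathfrak s_{2.17}^{\de\beta}$ and $\mathfrak s_{2.18}^{g\beta}$. The delicate bookkeeping on which the whole classification rests is therefore twofold: tracking precisely which residual $G$-freedom survives after each normalization step, and confirming via $G$-invariants that distinct canonical tuples $(\de,\de',g,\beta)$ genuinely yield $G$-inequivalent subalgebras.
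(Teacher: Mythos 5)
Your plan is sound and would reproduce the lemma, but it follows a recognizably different organizational route from the paper. You propose to run the Goursat--Patera--Winternitz--Zassenhaus scheme in full: classify subalgebras of the ideals $\mathfrak i_1$ and $\mathfrak i_2$ separately, organize by the projection profile $(\dim\pi_1(\mathfrak s),\dim\pi_2(\mathfrak s))$ and the intersections $\mathfrak s\cap\mathfrak i_k$, and then classify the gluing isomorphisms modulo the residual stabilizer freedom. The paper cites exactly this method (\cite{pate1975a}) but explicitly declines to follow it precisely: since only dimensions one and two are needed, it instead takes generic generators $Q^i=D(f^i)+S(\alpha^i)+P(g^i)+Z(\beta^i)$, normalizes the basis so that $[Q^1,Q^2]=Q^1$ or $[Q^1,Q^2]=0$, and performs a single-pass case analysis -- in the non-Abelian case on the vanishing patterns of $f^1,f^2,\alpha^1,\alpha^2$ (eight cases), and in the Abelian case on the vanishing of the Wronskians of $(f^1,f^2)$, $(\alpha^1,\alpha^2)$, $(g^1,g^2)$, $(\beta^1,\beta^2)$ and the linear (in)dependence of $(f^i,\alpha^i)$ (ten cases), applying the split adjoint actions $\mathscr D_*,\mathscr P_*,\mathscr I_*$ and $\mathscr S_*,\mathscr Z_*$ as it goes. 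Both arguments rest on the same two pillars you identified -- the bracket relations~\eqref{eq:CommRels} never couple the $(f,g)$- and $(\alpha,\beta)$-data, and the $G$-action respects $\mathfrak g=\mathfrak i_1\oplus\mathfrak i_2$ -- so the actual normalization computations largely coincide. What your route buys is structural transparency: the survival of the functional parameters $g(t)$, $\beta(y)$ in $\mathfrak s_{2.16}^{g\de}$, $\mathfrak s_{2.17}^{\de\beta}$, $\mathfrak s_{2.18}^{g\beta}$ and of the discrete parameters $\de,\de'$ is explained once and for all by the exhaustion of stabilizer freedom after normalizing the projections, and inequivalence of distinct representatives is addressed systematically. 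What the paper's route buys is economy: it avoids the auxiliary classification of subalgebras of each ideal and the bookkeeping of quotient isomorphisms and their stabilizers, which for two-dimensional subalgebras would add overhead without adding cases. The one caution for your version is precisely the point you flag yourself: the gluing step must be carried out modulo the stabilizer of a \emph{pair} (projection, intersection) in each factor, not merely of the projection, and for the non-Abelian profiles the gluing map must be a Lie algebra isomorphism of the quotients; neglecting either would merge or duplicate families.
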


\begin{proof}
An arbitrary subalgebra~$\mathfrak s$ of~$\mathfrak g$ is spanned by two linearly independent vector fields from~$\mathfrak g$,
\begin{gather*}
Q^1=D(f^1)+S(\alpha^1)+P(g^1)+Z(\beta^1),\\
Q^2=D(f^2)+S(\alpha^2)+P(g^2)+Z(\beta^2).
\end{gather*}
The classification procedure reduces to the consideration of different cases
of possible simplifications of the form of the basis elements~$Q^1$ and~$Q^2$
using basis changes and pushforwards of~$(Q^1,Q^2)$ by transformations from~$G$.
We also need to take into account the closedness of~$\mathfrak s$
with respect to the Lie bracket of vector fields, i.e.,
the condition $[Q^1,Q^2]=\langle Q^1,Q^2\rangle$.
Therefore, the above procedure essentially depends on
whether or not the subalgebra~$\mathfrak s$ is Abelian.

\medskip\par\noindent{\bf I.}\
Suppose that the subalgebra~$\mathfrak s$ is not Abelian.
Up to changing the basis $(Q^1,Q^2)$, we can assume that $[Q^1,Q^2]=Q^1$.
This commutation relation implies that for $f^2=0$ we have $f^1=g^1=0$ as well,
and similarly $\alpha^1=\beta^1=0$ if $\alpha^2=0$.
Hence  $(f^2,\alpha^2)\neq(0,0)$.

Analogously to the proof of Lemma~\ref{lem:1DInequivSubalgs},
in the case $f^1\ne0$, modulo the $G$-equivalence we can set $f^1=1$ and $g^1=0$.
Then we get from the commutation relation $[Q^1,Q^2]=Q^1$ that
$f^2=t+C_1$ and $g^2=C_2$ with constants~$C_1$ and~$C_2$.
The pushforward of vector fields by the transformation $\mathscr P(2C_2)\circ\mathscr D(t+C_1)$
preserves~$Q^1$ and sets $C_1=C_2=0$ in~$Q^2$.
If $f^1=0$ and $f^2\ne0$, then we choose $f^2=2t$ and $g^2=0$ up to the $G$-equivalence,
and thus $g^1$ is a constant in view of the commutation relation $[Q^1,Q^2]=Q^1$,
which allows us to set $g^1=\de\in\{0,1\}\,(\!{}\bmod G)$.
The condition $f^1=f^2=0$ gives that $g^1=0$,
and we set $g^2=\de\in\{0,1\}$ modulo the $G$-equivalence.

If $\alpha^1\ne0$, then we can set $\alpha^1=1$ and $\beta^1=0$ up to the $G$-equivalence,
and hence $\alpha^2=y+C_1$ and $\beta^2=C_2$ with constants~$C_1$ and~$C_2$
in view of the commutation relation $[Q^1,Q^2]=Q^1$.
Acting by $\mathscr Z_*(C_2)$ and $\mathscr S_*(y+C_1)$,
we preserve the form of~$Q^1$ and set $C_1=C_2=0$ in~$Q^2$.
If $\alpha^1=0$ and $\alpha^2\ne0$, then we choose $\alpha^2=-y$ and $\beta^2=0$ up to the $G$-equivalence,
and thus it follows from the commutation relation $[Q^1,Q^2]=Q^1$ that $\beta^1$ is a constant,
i.e., $\beta^1=\de\in\{0,1\}\,(\!{}\bmod G)$.
A~consequence of the condition $\alpha^1=\alpha^2=0$ is $\beta^1=0$,
and the $G$-equivalence allow us to make $\beta^2=\de\in\{0,1\}$.

As a result, the proof is partitioned into the following cases,
each of which results in a family of non-Abelian subalgebras from the lemma's statement:
\begin{gather*}
1.\ f^1\neq0,\ \alpha^1\neq0                     ;\\
2.\ f^1\neq0,\ \alpha^1=0,\ \alpha^2\neq0        ;\quad
3.\ f^1\neq0,\ \alpha^1=\alpha^2=0               ;\\
4.\ f^1=0,\ f^2\neq0,\ \alpha^1\neq0             ;\quad
5.\ f^1=f^2=0,\ \alpha^1\neq0                    ;\\
6.\ f^1=0,\ f^2\neq0,\ \alpha^1=0,\ \alpha^2\neq0;\\
7.\ f^1=0,\ f^2\neq0,\ \alpha^2=0                ;\quad
8.\ f^2=0,\ \alpha^1=0,\ \alpha^2\neq0.
\end{gather*}
Here and in what follows, the numeration of cases
corresponds to the numeration of subalgebra families in the lemma's statement.

\medskip\par\noindent{\bf II.}\
Let the subalgebra~$\mathfrak s$ be Abelian.
It follows from the commutation relations $[Q^1,Q^2]=0$ that $f^1f^2_t-f^1_tf^2=0$ and $\alpha^1\alpha^2_t-\alpha^1_t\alpha^2=0$,
i.e., the functions in each of the pairs $(f^1,f^2)$ and $(\alpha^1,\alpha^2)$ are linearly dependent.
Therefore, we can assume up to the $G$-equivalence that all these functions are constants.
We need to consider different cases depending, in particular,
on whether or not the tuples $(f^1,\alpha^1)$ and $(f^2,\alpha^2)$ are linearly dependent.

\medskip\par\noindent 9.\ $f^1\alpha^2-f^2\alpha^1\neq0$.
Linearly recombining the basic elements~$Q^1$ and~$Q^2$, we can reduce $(f^1,\alpha^1,f^2,\alpha^2)$ to $(1,0,0,1)$,
and then $g^1=0$ and $\beta^2=0$ (\!${}\bmod G$).
The commutation relation $[Q^1,Q^2]=0$ implies that the parameters~$\beta^1$ and~$g^2$ are arbitrary constants.
To set them to $(\delta,\delta')$, we act by the transformation
$\mathscr I\big(\sgn(\beta^1g^2)\big)\circ\mathscr D\big((\beta^1/g^2)^{2/3}t\big)\circ\mathscr S\big((\beta^1)^{1/3}(g^2)^{2/3}y\big)$
(resp.\ either $\mathscr I_*(\sgn g^2)\circ\mathscr D_*\big((g^2)^{-2}t\big)$ or $\mathscr S_*(\beta^1y)$)
if $\beta^1g^2\ne0$ (resp.\ either $\beta^1=0$ or $g^2=0$),
and we additionally scale the basis elements if necessary.

\medskip\par\noindent 10.\ $f^1\alpha^2-f^2\alpha^1=0,\ (f^1,f^2)\ne(0,0),\ (\alpha^1,\alpha^2)\ne(0,0)$.
Changing the subalgebra basis, we make $f^2=\alpha^2=0$, and thus $f^1\alpha^1\ne0$.
Then $g^1=0$ and $\beta^1=0$ (\!${}\bmod G$), $g^2_t=0$ and~$\beta^2_y=0$ in view of the commutativity of~$\mathfrak s$, and $(g^2,\beta^2)\ne(0,0)$.
In the case $g^2\beta^2=0$, scalings of~$Q^2$ suffices for reducing $\mathfrak s$ to the corresponding canonical form.
If $g^2\beta^2\ne0$, then we push forward the basic elements by
$\mathscr I_*(\ve)\circ\mathscr D_*\big((\beta^2/g^2)^{2/3}t\big)\circ\mathscr S_*\big((\beta^2/g^2)^{2/3}y\big)$
with $\ve=\sgn(\beta^2g^2)$ and scale them to set $(g^2,\beta^2)=(\delta,\delta')$.

\medskip\par\noindent 11--12.\ $(f^1,f^2)\ne(0,0),\ \alpha^1=\alpha^2=0$.
The pair $(f^1,f^2)$ can be reduced to $(1,0)$, and $g^1=0$ (\!${}\bmod G$).
The commutation relation  $[Q^1,Q^2]=0$ implies that $g^2$ is a constant,
and thus we can assume up to scalings of~$Q^2$ that $g^2=\delta$.
The further consideration splits into two cases depending on whether or not $\beta^2$ is nonzero,
which corresponds to the conditions $f^1\beta^2-f^2\beta^1\ne0$ and $f^1\beta^2-f^2\beta^1=0$
on the initial values of parameters.
Acting by the transformation $\mathscr Z(Y)$ with $Y_y=\beta^2$ in the first case,
we map the subalgebra $\mathfrak s$ to the subalgebra~$\mathfrak s_{2.11}^{\de\beta}$.
Otherwise, $g^2\ne0$, i.e., $g^2=1$, and we use the transformation~$\mathscr Z(Y)$ with $Y_y=\beta^1$
and obtain the subalgebra~$\mathfrak s_{2.12}^\de$.

\medskip\par\noindent 13--14.\ $f^1=f^2=0,\ (\alpha^1,\alpha^2)\ne(0,0)$.
Similarly to the previous case, separately considering the subcases
$g^1\alpha^2-g^2\alpha^1\ne0$ and $g^1\alpha^2-g^2\alpha^1=0$,
we obtain the subalgebras~$\mathfrak s_{2.13}^{g\de}$ and~$\mathfrak s_{2.14}^\de$, respectively.

\medskip\par
For all the other cases,
$f^1=f^2=0$ and $\alpha^1=\alpha^2=0$,
and thus the commutativity of~$\mathfrak s$ implies no more constraints on subalgebra's parameters.
Therefore, the consideration depends on whether or not the Wronskians ${\rm W}(g^1,g^2)$ and ${\rm W}(\beta^1,\beta^2)$ vanish.

\medskip\par\noindent 15.\ $g^1g^2_t-g^1_tg^2=0,\ \beta^1\beta^2_y-\beta^1_y\beta^2=0$.
We linearly recombine the basis elements to $Q^1=P(g^1)$ and $Q^2=Z(\beta^2)$,
where the pair $(g^1,\beta^2)$ can be reduced to $(1,1)$ via pushing forward the subalgebra~$\mathfrak s$
by $\mathscr D(T)\circ\mathscr S(Y)$ with $T_t=(g^1)^{-2}$ and $Y_y=\beta^2$.

\medskip\par\noindent 16.\ $g^1g^2_t-g^1_tg^2\ne0,\ \beta^1\beta^2_y-\beta^1_y\beta^2=0$.
Recombining the basis elements, we can obtain $\beta^2=0$.
The transformation $\mathscr D(T)\circ\mathscr S(Y)$ with $T_t=(g^1)^{-2}$ and $Y_y=\beta^1$
sets $g^1=1$ and $\beta^1=\de$.

\medskip\par\noindent 17.\ $g^1g^2_t-g^1_tg^2=0,\ \beta^1\beta^2_y-\beta^1_y\beta^2\ne0$.
This case is considered similarly to the previous case.
Only the roles of $P$ and~$Z$ should be swapped.

\medskip\par\noindent 18.\ $g^1g^2_t-g^1_tg^2\ne0,\ \beta^1\beta^2_y-\beta^1_y\beta^2\ne0$.
We get the canonical form of the subalgebra $\mathfrak s$ corresponding to this condition
by using the same transformation as in Case~16.
\end{proof}

\section{Solutions by the method of differential constraints}\label{sec:SolutionsByMethodOfDiffConstraints}

The method of differential constraints \cite{sido1984A,yane1964a} (or ``side conditions''~\cite{olve1986b})
is to attach additional differential equations, which are called differential constraints,
to the original system of differential equations
and then to integrate the obtained overdetermined system.
Instances of the general method of differential constraints are
the method of Lie reductions and its generalizations,
including the methods of nonclassical reductions~\cite{blum1969a}
and reductions with respect to generalized, conditional~\cite[Chapter~5]{fush1993A},
generalized conditional~\cite{zhda1995e} and weak~\cite{pucc1992b} symmetries;
see related discussions in \cite{boyk2016a,kunz2008b,kunz2009a}.
A disadvantage of the general method of differential constraints is
that, given a system of differential equations to be solved,
there is no efficient way for a priori selecting differential constraints that are compatible
with this system, except symmetry-related differential constraints.
Another problem is to integrate the obtained overdetermined system if it is compatible.

We consider several simple differential constraints for the system~\eqref{eq:BLPsystem}.
Each of them is invariant with respect to the pseudogroup~$G$,
and hence this pseudogroup is a point-symmetry pseudogroup for the corresponding overdetermined system.
A distinguished feature of these differential constraints is that
they single out subsets of solutions of the system~\eqref{eq:BLPsystem}
that are parameterized by solutions of linear differential equations.
We call such solution subsets \emph{linearizable}.

\medskip\par\noindent$\boldsymbol{v_x=0.}$
This is the condition for the solutions to the system~\eqref{eq:BLPsystem}
that are partially invariant%
\footnote{%
See~\cite{ovsi1982A} for the definition of partially invariant solutions.
}
of rank one and defect one with respect to the subalgebra~$\mathfrak s_{2.16}^{g0}$
with any nonconstant value of the parameter function $g=g(t)$, e.g., $g(t)=t$.
The joint system of~\eqref{eq:BLPsystem} and the constraint $v_x=0$
reduces to the system $(u_t-2uu_x+u_{xx})_y=0$, $v_t=v_x=0$, which integrates to $v=W(y)$ and
\begin{gather}\label{eq:InhomBackwardBurgersEq}
u_t-2uu_x+u_{xx}=F(t,x),
\end{gather}
where $F=F(t,x)$ and $W=W(y)$ are arbitrary (sufficiently smooth) functions of their argu\-ments.
Modulo the $G$-equivalence, we can set $W=0$, preserving~$F$, i.e., here $v=0$ $(\!{}\bmod G)$.
Solutions, for which additionally $u_y=0$
and thus $u$ is an arbitrary (sufficiently smooth) function of~$(t,x)$, are trivial.
The class constituted by the equations of the form~\eqref{eq:InhomBackwardBurgersEq},
where the tuple of arbitrary elements consists of the single parameter function~$F$, is normalized,
and its equivalence pseudogroup is induced by the point-symmetry pseudogroup~$G$ of the system~\eqref{eq:BLPsystem},
cf.\ \cite[Section~3]{poch2013d}.
For each fixed value of the parameter function~$F$,
the corresponding equation of the form~\eqref{eq:InhomBackwardBurgersEq} is an \emph{inhomogeneous (backward) Burgers equation},
where the independent variable~$y$ should be treated as a parameter.
The Hopf--Cole transformation $u=-\Phi_x/\Phi$ reduces this equation to
the \emph{linear backward heat equation with potential}
\begin{gather}\label{eq:BackwardHeatEqWithPot}
\Phi_t+\Phi_{xx}-H\Phi=0
\end{gather}
for the new unknown function $\Phi=\Phi(t,x,y)$, where the potential~$H$ is an antiderivative of~$F$ with respect to~$x$, $H_x=-F$.
Due to the indeterminacy of $\Phi$ up to an arbitrary nonvanishing multiplier depending on $(t,y)$,
we can gauge the potential~$H$ and assume that it is a function merely of~$(t,x)$.
Under this gauge, the function~$\Phi$ is still defined up to an arbitrary nonvanishing multiplier~$\theta=\theta(t)$,
which leads to equivalence transformations $(\tilde t,\tilde x,\tilde y)=(t,x,y)$, $\tilde\Phi=\theta\Phi$, $\tilde H=H+\theta_t/\theta$
between equations of the form~\eqref{eq:BackwardHeatEqWithPot}.
As a result, we obtain the class~\eqref{eq:BackwardHeatEqWithPot} of the (1+1)-dimensional linear backward heat equations with potentials,
where the potential~$H$ depends on~$(t,x)$, and thus only $t$ and~$x$ are the true independent variables
whereas $y$ is rather an implicit parameter.
This class is normalized as well, cf. \cite[Section~2]{popo2008a},
and the quotient of its equivalence pseudogroup~$G^\sim_{\rm bh}$ by the normal pseudosubgroup of the above transformations parameterized by~$\theta$
is induced by the pseudogroup~$G$.
Therefore, to construct $G$-inequivalent solutions of the system~\eqref{eq:BLPsystem} with the constraint $v_x=0$,
we should consider $G^\sim_{\rm bh}$-inequivalent equations of the form~\eqref{eq:BackwardHeatEqWithPot}
and for each of them, take solutions of this equation that are inequivalent with respect to its point-symmetry pseudogroup.
Alternating the sign of~$t$, one can transform backward heat equations to forward ones.
Wide families of closed-form solutions in terms of elementary and special functions
are known for (1+1)-dimensional linear heat equations with potentials,
especially for the classical heat equation ($H=0$).
For the latter equation, such solutions are comprehensively listed
in~\cite[Example 3.17]{olve1993A} and in~\cite{ivan2008b,widd1975A};
see also a review in \cite[Section~A]{vane2021a}.
For Kolmogorov equations that are equivalent to equations of the form~\eqref{eq:BackwardHeatEqWithPot} with $H=h(t)x^{-2}$,
infinite families of such solutions were constructed in~\cite{fush1994a,popop1995a}
using Lie reductions for constant~$h$ and recursive nonclassical reductions for general values of the parameter function~$h$.
For equations of the form~\eqref{eq:BackwardHeatEqWithPot} with other potentials~$H$,
exact solutions can be found using the Darboux transformation~\cite{matv1991A,popo2008a}.
In view of linearity of the equations discussed,
elements of an arbitrary finite set of solutions of the same equation
can be linearly combined with arbitrary constant coefficients,
and then all parameters of solutions can be considered to depend on~$y$.
Mapping the described solutions by the Hopf--Cole transformation,
one can construct huge families of exact solutions of the system~\eqref{eq:BLPsystem},
in particular, those parameterized by an arbitrary finite number of functions of~$(t,y)$.
These solutions are, modulo the $G$-equivalence, of the form\looseness=-1
\[
\solution u=-\frac{\Phi_x}\Phi,\quad v=0,
\]
where $\Phi=\Phi(t,x,y)$ is an arbitrary solution of any linear backward heat equations with potentials
from the class~\eqref{eq:BackwardHeatEqWithPot}.

\medskip\par\noindent$\boldsymbol{u_y=v_x.}$
This constraint means that (at least locally) the tuple $(u,v)$ is the gradient of a potential $\Psi=\Psi(t,x,y)$,
$u=\Psi_x$ and $v=\Psi_y$.
The substitution of this representation for $(u,v)$ into~\eqref{eq:BLPsystem}
leads to the equations $(\Psi_t-\Psi_x^{\,2}-\Psi_{xx})_{xy}=0$ and $(\Psi_t-\Psi_x^{\,2}-\Psi_{xx})_y=0$.
It is obvious that the former equation is a differential consequence of the latter one
and can hence be neglected.
The latter equation integrates to the class of \emph{inhomogeneous potential Burgers equations} of the form
\begin{gather}\label{eq:InhomPotentialBurgersEq}
\Psi_t-\Psi_x^{\,2}-\Psi_{xx}=-H(t,x),
\end{gather}
where $H=H(t,x)$ is an arbitrary (sufficiently smooth) function of~$(t,x)$.
Here again only $t$ and~$x$ are the true independent variables
whereas $y$ is an implicit parameter.
The substitution $\Psi=\ln\Phi$ establishes the similarity of this class
to the class of all \emph{linear heat equations with potentials}
\begin{gather}\label{eq:ForwardHeatEqWithPot}
\Phi_t-\Phi_{xx}+H\Phi=0
\end{gather}
with the same roles of~$(t,x)$ and $y$.
Analogously to the consideration of the constraint $v_x=0$,
this results in huge families of exact solutions of the system~\eqref{eq:BLPsystem}
in terms of the ``two-dimensional'' Hopf--Cole transformation,
\begin{gather}\label{eq:BLPSolutionFamilyViaHeatEq2}
\solution u=\frac{\Phi_x}\Phi,\quad v=\frac{\Phi_y}\Phi,
\end{gather}
where $\Phi=\Phi(t,x,y)$ is an arbitrary solution of any linear heat equation with potential
from the class~\eqref{eq:ForwardHeatEqWithPot}.
All the solutions found in Sections~V and~VI of~\cite{zhao2017a} are isolated instances
of such families.

The family of stationary solutions of~\eqref{eq:BLPsystem} that satisfy the differential constraint $u_y=v_x$
are expressed in terms of the general solution of the \emph{Liouville equation}.
Under the stationarity condition $u_t=v_t=0$, we can assume the potential~$\Psi$ is stationary as well, $\Psi_t=0$,
and the system~\eqref{eq:BLPsystem} reduces to the single equation $\Psi_{xxy}+2\Psi_x\Psi_{xy}=0$.
We integrate this equation as a linear first-order ordinary differential equation with respect to~$\Psi_{xy}$,
where $x$ is treated as the only independent variable, and $y$ is again assumed as a parameter.
This gives the equation $\Psi_{xy}=\beta(y){\rm e}^{-2\Psi}$ for~$\Psi$.
The parameter function $\beta=\beta(y)$ can be assumed nonvanishing
since otherwise the associated solution of~\eqref{eq:BLPsystem} is trivial.
Thus, we set $\beta=-\frac12$ by a transformation $\mathcal S(Y)$ trivially prolonged to~$\Psi$ and, denoting $\tilde\Psi=-2\Psi$,
obtain the classical Liouville equation \smash{$\tilde\Psi_{xy}={\rm e}^{\tilde\Psi}$} for~$\tilde\Psi$.
The well known formula for the general solution of the Liouville equation leads
via the representation $u=-\frac12\tilde\Psi_x$, $v=-\frac12\tilde\Psi_y$
to a family of solutions of the system~\eqref{eq:BLPsystem}
that is parameterized by an arbitrary (sufficiently smooth) function of~$x$,
where the second parameter function, which depends on~$y$, can be gauged by point symmetry transformations of~\eqref{eq:BLPsystem}.
We derive a representation for this family directly from the equation $(\Psi_{xx}+\Psi_x^{\,2})_y=0$.
The substitution~$\Psi=\ln\hat\Psi$ reduces it to $(\hat\Psi_{xx}/\hat\Psi)_y=0$,
which integrates to $\hat\Psi_{xx}=h(x)\hat\Psi$, where $h$ is an arbitrary (sufficiently smooth) function of~$x$.
The general solution of the last equation is represented in the form
$\hat\Psi=\alpha^1(y)\zeta^1(x)+\alpha^2(y)\zeta^2(x)$,
where $\zeta^1\zeta^2_x-\zeta^1_x\zeta^2=1$, and we assume that
$\alpha^1\alpha^2_y-\alpha^1_y\alpha^2\ne0$ (otherwise we have a trivial solution with $u_y=v=0$).
Without loss of generality, modulo the $G$-equivalence we can set $\alpha^1=1$ and~$\alpha^2=y$.
After denoting $\zeta:=\zeta^1/\zeta^2$, we obtain $\zeta_x=(\zeta^1)^{-2}$, and thus \smash{$\zeta^1=\zeta_x^{-1/2}$}.
In view of the representation $u=\hat\Psi_x/\hat\Psi$ and $v=\hat\Psi_y/\hat\Psi$,
the above implies that, up to the $G$-equivalence,
stationary solutions of the system~\eqref{eq:BLPsystem} with $u_y=v_x$ and $v\ne0$ are exhausted by
\[
\solution
u=-\frac{\zeta_{xx}(x)}{2\zeta_x(x)}+\frac{\zeta_x(x)}{y+\zeta(x)},\quad
v=\frac1{y+\zeta(x)},
\]
where $\zeta$ is an arbitrary (sufficiently smooth) function of~$x$.

\begin{remark}\label{rem:BLPsystemPainleveTest}
It is no coincidence that attaching each of the differential constraints $v_x=0$ and $v_x=u_y$
to the system~\eqref{eq:BLPsystem} leads to an integrable differential equation.
The Painlev\'e test of the system~\eqref{eq:BLPsystem} was made by Garagash~\cite{gara1994a}.
In its standard formulation,
it is assumed that solutions of the system of partial differential equations under consideration
can be represented in the form of a truncated formal Laurent series in powers of a smooth function.
For solutions the system~\eqref{eq:BLPsystem}, the following representation was initially used in~\cite{gara1994a}:
\begin{equation}\label{eq:LaurentSeriesForBLP}
u=\sum_{k=N_u}^\infty u^k(t,x,y)\big(\Phi(t,x,y)\big)^k,\quad v=\sum_{k=N_v}^\infty v^k(t,x,y)\big(\Phi(t,x,y)\big)^k
\end{equation}
in a neighborhood of a smooth two-dimensional surface~$S$,
where $\Phi$ is an arbitrary smooth function of~$(t,x,y)$ that has first-order zeros along~$S$,
the expansion coefficients $u^k$ and $v^k$ are of smooth function of~$(t,x,y)$,
and $N_u$ and~$N_v$ are integers.
For two possible expansion branches, $(N_u,N_v)=(-1,-1)$ and $(N_u,N_v)=(-1,0)$,
the series~\eqref{eq:LaurentSeriesForBLP} can simultaneously be truncated only for the solutions of~\eqref{eq:BLPsystem}
respectively satisfying the differential constraints $v_x=u_y$ and $v_x=0$, which have been studied above.
To prevent additional differential constraints in the course of series truncation,
the Painlev\'e test was modified in~\cite{gara1994a}.
Namely, the function~$u$ was further re-expanded with respect to a function~$\Psi$ instead of~$\Phi$,
where $\Psi=\Phi\sum_{k=0}^\infty \Psi^k(t,x,y)(\Phi)^k$, and $\Psi^k$, $k=0,1,\dots$, are smooth functions of~$(t,x,y)$ with $\Psi^0\ne0$,
i.e., $\Psi$ has first-order zeros along the same surface~$S$ as~$\Phi$.
This allowed Garagash to construct two Lax pairs and two B\"acklund transformations for the system~\eqref{eq:BLPsystem}.
\end{remark}

\medskip\par\noindent$\boldsymbol{u_y=0.}$
This constraint singles out the solutions to~\eqref{eq:BLPsystem}
that are partially invariant with respect to the subalgebra~$\mathfrak s_{2.5}^0$
with rank one and defect one.
The differential consequence $v_{xxx}=0$ of the first equation of~\eqref{eq:BLPsystem}
under the constraint $u_y=0$ implies $v=v^2(t,y)x^2+v^1(t,y)x+v^0(t,y)$.
Since the constraint $v_x=0$ has been considered above, here we assume that $v_x\ne0$, i.e., $(v^1,v^2)\ne(0,0)$.
Then from the second equation of the system~\eqref{eq:BLPsystem} we derive an expression for~$u$,
\begin{gather}\label{eq:ExprForUIfV_xxx=0}
u=\frac{v^2_tx^2+v^1_tx+v^0_t-2v^2}{2(2v^2x+v^1)}.
\end{gather}
We substitute this expression into the constraint $u_y=0$ and split the resulting equation with respect to~$x$.
This gives an overdetermined system of four differential equations for $(v^0,v^1,v^2)$.
In particular, we have the equation $v^2v^2_{ty}=v^2_tv^2_y$, which integrates to $v^2=v^{20}(t)v^{21}(y)$.

Any solution for which additionally $v^2=0$ is $G$-equivalent to the trivial solution
\[\solution u=0,\quad v=x.\]

Suppose that $v^2\ne0$, and thus both the functions~$v^{20}$ and~$v^{21}$ do not vanish.
We can assume without loss of generality that $v^{20}>0$.
Using equivalence transformations from the pseudosubgroups $\{\mathscr D(T)\}$ and $\{\mathscr S(Y)\}$,
we respectively set $v^{20}=1$ and $v^{21}=1$, i.e., $v^2=1$ $(\!{}\bmod G)$.
Then, another equation is simplified to $v^1_{ty}=0$.
Therefore, $v^1_t=0$ $(\!{}\bmod\{\mathscr P(X^0)\})$,
which allows us to make more simplifications and derive the equation $v^0_{ty}=0$,
giving $v^0_y=0$ $(\!{}\bmod\{\mathscr Z(V^0)\})$.
The last simplified equation is $v^1_y(v^0_t-2)=0$.
As a result, we derive two families of $G$-inequivalent solutions,
\[
\solution u=0,\ v=x^2+\zeta(y)x+2t
\qquad\mbox{and}\qquad\hspace{3ex}
\solution u=\frac{\theta_t(t)-1}{2x},\ v=x^2+2\theta(t),
\]
where $\zeta$ and~$\theta$ are arbitrary (sufficiently smooth) functions of their arguments.

\medskip\par
To exclude the already considered cases, we further assume that $v_xu_y\ne0$.

\medskip\par\noindent$\boldsymbol{v_{xxx}=0.}$
That is, $v=v^2(t,y)x^2+v^1(t,y)x+v^0(t,y)$.
Since the constraint $v_x=0$ has been considered above, here we assume that $v_x\ne0$, i.e., $(v^1,v^2)\ne(0,0)$.
Then the second equation of the system~\eqref{eq:BLPsystem} implies the expression~\eqref{eq:ExprForUIfV_xxx=0} for~$u$,
and hence we can represent~$u$ in the form
\[
u=u^1(t,y)x+u^0(t,y)+\frac{w^1(t,y)}{x+w^0(t,y)}.
\]
Substituting this representation into the first equation of the system~\eqref{eq:BLPsystem}
and splitting with respect to~$x$, we derive a system for the coefficients~$u^0$, $u^1$, $w^0$ and~$w^1$,
\begin{gather}\label{eq:V_xxx=0System}
\begin{split}
&\big(u^1_t-2(u^1)^2\big)_y=0,\quad
 \big(u^0_t-2u^1u^0\big)_y=0,\quad
 \big(2w^1(u^0-u^1w^0)-w^1w^0_t\big)_y=w^1_tw^0_y,\\
&w^1_{ty}=0,\quad
 \big(2w^1(u^0-u^1w^0)-w^1w^0_t\big)w^0_y=\big(w^1(w^1+1)\big)_y,\quad
 w^1(w^1+1)w^0_y=0.
\end{split}
\end{gather}
The last two equations imply that $w^1_y=0$.
We exhaustively solve the system~\eqref{eq:V_xxx=0System} under the assumption $u_y\ne0$,
analyzing different cases.
Then we substitute each of the obtained solutions jointly with the representation for~$v$
into the second equation of the system~\eqref{eq:BLPsystem},
split the resulting equation with respect to~$x$
and integrate the derived system to find $(v^0,v^1,v^2)$ when \mbox{$(v^1,v^2)\ne(0,0)$}.
Whenever it is possible, we use transformations from the pseudogroup~$G$
for gauging parameter functions arising in the course of integration,
which leads to simplifying the form of constructed exact solutions.

Let $u^1_y\ne0$.
The first equation of the system~\eqref{eq:V_xxx=0System} integrates once with respect to~$y$
to the Riccati equation $u^1_t-2(u^1)^2=\zeta(t)$,
where $\zeta$ is an arbitrary (sufficiently smooth) function of~$t$.
Here and in what follows, the independent variable~$y$ plays the role of an implicit parameter.
The substitution $u^1=-\varphi_t/(2\varphi)$ with $\varphi=\varphi(t,y)$ reduces this Riccati equation
to the second-order linear ordinary differential equation $\varphi_{tt}+2\zeta\varphi=0$
with respect to~$\varphi$.
The general solution of the latter equation takes the form
$\varphi=\beta^1(y)\varphi^1(t)+\beta^2(y)\varphi^2(t)$,
where $\beta^1$ and~$\beta^2$ are arbitrary (sufficiently smooth) functions of~$y$,
and $\varphi^1$ and~$\varphi^2$ are functions of~$t$ satisfying the conditions
$\varphi^1\varphi^2_t-\varphi^1_t\varphi^2=1$
and $\zeta=\frac12(\varphi^1_t\varphi^2_{tt}-\varphi^1_{tt}\varphi^2_t)$.
Without loss of generality, we can assume that $\beta^2\ne0$,
and $\varphi^2$ is a parameter function instead of~$\zeta$.
Then we set $\varphi^2=t$ using a transformation~$\mathscr D(T)$,
and thus $\varphi^1=1$ (up to combining $\varphi^1$ with~$\varphi^2$), $\zeta=0$,
and $u^1=-\frac12(t+\beta(y))^{-1}$ with $\beta:=\beta^1/\beta^2$.
In the case $w^1=0$, we need to additionally solve
only the second equation among those from the system~\eqref{eq:V_xxx=0System},
which gives, up to transformations from~$\{\mathscr P(X^0)\}$, that
$u^0=-\frac12\alpha(y)(t+\beta(y))^{-1}$
with an arbitrary (sufficiently smooth) function~$\alpha$ of~$y$.
If $w^1\ne0$ and $w^0_y=0$, then we set $w^0=0$ using an equivalence transformation~$\mathscr P(X^0)$,
and the system~\eqref{eq:V_xxx=0System} degenerates to the equations $u^0_y=u^0u^1_y=0$,
i.e., $u^0=0$ in view of the inequality $u^1_y\ne0$.
For $w^1\ne0$ and $w^0_y\ne0$, the system~\eqref{eq:V_xxx=0System} implies
that $w^1=-1$, $u^0=-\frac12\alpha(y)(t+\beta(y))^{-1}$ $(\!{}\bmod \{\mathscr P(X^0)\})$,
and then $w^0=\alpha(y)+\gamma(y)(t+\beta(y))$,
where $\alpha$ and~$\gamma$ are arbitrary (sufficiently smooth) functions of~$y$.

Now suppose that $u^1_y=0$. We set $u^1=0$ using a transformation~$\mathscr D(T)$ from~$G$.
Then, the second equation of the system~\eqref{eq:V_xxx=0System} takes the form $u^0_{ty}=0$.
It is obvious that $u=\alpha(y)$ $(\!{}\bmod \{\mathscr P(X^0)\})$ if $w^1=0$,
where $\alpha$ is an arbitrary (sufficiently smooth) function of~$y$.
All solutions with $w^1\ne0$ and $w^0_y=0$ additionally satisfy the constraint $u_y=0$,
and hence they are neglected here.
If $w^1\ne0$ and $w^0_y\ne0$, then the system~\eqref{eq:V_xxx=0System} reduces to
the system $w^1=-1$, $u^0=\frac12w^0_t$, $w^0_{tty}=0$,
and thus $w^0=2\alpha(y)t+\beta(y)$ $(\!{}\bmod \{\mathscr P(X^0)\})$.

As a result, we construct the following $G$-inequivalent families of solutions of the system~\eqref{eq:BLPsystem}:
\begin{gather}
\solution\label{eq:BLPsystemVxxxSol1}
u=-\frac12\frac{x+\alpha(y)}{t+\beta(y)},\quad
v=\delta\left(\frac{x+\alpha(y)}{t+\beta(y)}\right)^2+\gamma(y)\frac{x+\alpha(y)}{t+\beta(y)}-\frac{2\delta}{t+\beta(y)},
\\[1ex]\solution\label{eq:BLPsystemVxxxSol2}
u=-\frac12\frac x{t+\beta(y)}+\frac{\theta(t)}x,\quad
v=\frac{x^2}{(t+\beta(y))^2}+2\int_{t_0}^t\frac{2\theta(t')+1}{(t'+\beta(y))^2}\,{\rm d}t',
\\[1ex]\label{eq:BLPsystemVxxxSol3}
\solution
u=-\frac12\frac{x\!+\!\alpha(y)}{t\!+\!\beta(y)}-\frac1{x\!+\!\alpha(y)\!+\!\gamma(y)(t\!+\!\beta(y))},\quad
v=\left(\frac{x\!+\!\alpha(y)}{t\!+\!\beta(y)}+\gamma(y)\right)^2\!\!+\frac2{t+\beta(y)},
\\[1.5ex]\solution\label{eq:BLPsystemVxxxSol4}
u=\alpha(y),\quad
v=\alpha(y)(x+2\alpha(y)t)^2+\gamma(y)(x+2\alpha(y)t)-x,
\\[1.5ex]\solution\label{eq:BLPsystemVxxxSol5}
u=\alpha(y)-\frac1{x+2\alpha(y)t+\beta(y)},\quad
v=(x+2\alpha(y)t+\beta(y))^2-2t.
\end{gather}
Here $\alpha$, $\beta$, $\gamma$ and~$\theta$ are arbitrary functions of their arguments.
In the first solution, $\delta\in\{0,1\}$ 
and, if $\delta=0$, $\gamma=1$ $(\!{}\bmod \{\mathscr S(Y)\})$.

\medskip\par\noindent$\boldsymbol{u_{xx}=0,\ v_{4x}=0.}$
This two-component differential constraint integrates to the generalized ansatz
\[
u=u^1(t,y)x+u^0(t,y),\quad v=\sum_{k=0}^3v^k(t,y)x^k
\]
for $(u,v)$. Substituting the ansatz into the system~\eqref{eq:BLPsystem}
and splitting the resulting equations with respect to~$x$,
we derive a system for the coefficients~$u^0$, $u^1$ and~$v^k$, $k=0,\dots,3$,
\begin{gather}\label{eq:U_xx=0V4x=0System}
\begin{split}
&\big(u^1_t-2(u^1)^2\big)_y=0,\quad
 \big(u^0_t-2u^1u^0\big)_y=12v^3,\\
&v^k_t=2ku^1v^k+2(k+1)u^0v^{k+1}+(k+2)(k+1)v^{k+2},\quad k=0,\dots,3,\quad v^4,v^5:=0.
\end{split}
\end{gather}
In other words, we reduce the system~\eqref{eq:BLPsystem} to the well-defined system~\eqref{eq:U_xx=0V4x=0System}
with fewer independent variables but more unknown functions.
Such a property is common for reductions with respect to generalized conditional symmetries
(here the corresponding generalized conditional symmetry could be
the generalized vector field $u_{xx}\p_u+v_{4x}\p_v$)
but, unfortunately, a proper theory of these symmetries has been developed
only for single (1+1)-dimensional evolution equations \cite{kunz2011a,zhda1995e}.
Since the solutions of the system~\eqref{eq:BLPsystem} with the differential constraint $v_{xxx}=0$
have exhaustively been described above, we can assume that $v_{xxx}\ne0$ here, i.e., $v^3\ne(0,0)$.

If additionally $u_{xy}=u^1_y=0$, then $u^1=0$ $(\!{}\bmod \{\mathscr D(T)\})$,
the system~\eqref{eq:U_xx=0V4x=0System} implies $v^3_t=u^0_{ty}=0$, and
we can set $v^3=1$ and $u^0=12ty+\alpha(y)$ using transformations~$\mathscr S(Y)$ and~$\mathscr P(X^0)$.
Otherwise, we successively integrate some equations of~\eqref{eq:U_xx=0V4x=0System}
and simultaneously simplify the obtained expressions using equivalence transformations if possible;
in this way we get
\begin{itemize}
\item[] $u^1=-\frac12(t+\beta(y))^{-1}$ $(\!{}\bmod \{\mathscr D(T)\})$ from the first equation,
\item[] $v^3=\beta_y(y)(t+\beta(y))^{-3}$ $(\!{}\bmod \{\mathscr S(Y)\})$ from the equation with $k=4$, and
\item[] \par$u^0=-\frac12\big(\alpha(y)+12\ln|t+\beta(y)|-12\big)(t+\beta(y))^{-1}$ $(\!{}\bmod \{\mathscr P(X^0)\})$ from the second equation;
\end{itemize}
see the above consideration of the differential constraint $v_{xxx}=0$ with $u^1_y\ne0$
for deriving the expression for~$u^1$.
Integrating the other equations from~\eqref{eq:U_xx=0V4x=0System} in each of the above cases
and substituting the found values for $u^0$, $u^1$ and $v^k$ into the ansatz,
we respectively construct the following solutions of the system~\eqref{eq:BLPsystem}:
\begin{gather*}\solution
u=12ty+\alpha(y),\\
v=\big(x+12t^2y+2t\alpha(y)+\beta(y)\big)^3+\big(6t+\gamma(y)\big)\big(x+12t^2y+2t\alpha(y)+\beta(y)\big),
\\[1.5ex]\solution
u=-\frac\omega2+\frac6{t+\beta(y)},\quad
v=\beta_y(y)\omega^3+\gamma(y)\omega^2+\lambda(y)\omega-6\frac{\beta_y(y)\omega}{t+\beta(y)}-\frac{2\gamma(y)}{t+\beta(y)},
\\[.5ex]\phantom{\solution}\mbox{where}\quad
\omega:=\frac{x+\alpha(y)}{t+\beta(y)}+12\frac{\ln|t+\beta(y)|}{t+\beta(y)}.
\end{gather*}
Here $\alpha$, $\beta$, $\gamma$ and~$\lambda$ are arbitrary functions of~$y$.

\medskip\par\noindent$\boldsymbol{u_{xx}=0.}$
We weaken the previous differential constraint $u_{xx}=0$, $v_{4x}=0$, neglecting its second equation.
The first equation of the system~\eqref{eq:V_xxx=0System} implies $v_{5x}=0$, i.e.,
in this case we have an analogous generalized ansatz with one more unknown function~$v^4$,
\begin{gather}\label{eq:U_xx=0Ansatz}
u=u^1(t,y)x+u^0(t,y),\quad v=\sum_{k=0}^4v^k(t,y)x^k.
\end{gather}
The corresponding reduced system in~$u^0$, $u^1$ and~$v^k$, $k=0,\dots,4$,
is an extension of the system~\eqref{eq:U_xx=0V4x=0System},
\begin{gather}\label{eq:U_xx=0System}
\begin{split}
&\big(u^1_t-2(u^1)^2\big)_y=48v^4,\quad
 \big(u^0_t-2u^1u^0\big)_y=12v^3,\\
&v^k_t=2ku^1v^k+2(k+1)u^0v^{k+1}+(k+2)(k+1)v^{k+2},\quad k=0,\dots,4,\quad v^5,v^6:=0.
\end{split}
\end{gather}
Similarly to the system~\eqref{eq:U_xx=0V4x=0System},
the system~\eqref{eq:U_xx=0System} is well defined,
and hence the generalized vector field $u_{xx}\p_u+v_{5x}\p_v$ also can be interpreted
as a generalized conditional symmetry of the original system~\eqref{eq:BLPsystem}.
In view of the above consideration of the constraint $u_{xx}=v_{4x}=0$,
here we can assume that $v_{4x}\ne0$, i.e., $v^4\ne0$ and thus $u^1_y\ne0$.
In contrast to all the previous differential constraints,
we could not find the general solution of the system~\eqref{eq:U_xx=0System}
in an explicit form but we reduce it to a family of \emph{Abel equations of the first kind},
where $y$ is the only independent variable, and $t$ plays the role of an implicit parameter.

We make the differential substitution $u^1=-\varphi_t/(2\varphi)$, $u^2=\psi_t\varphi$
with $\varphi=\varphi(t,y)\ne0$ and $\psi=\psi(t,y)$.
Then the equation of the system~\eqref{eq:U_xx=0System} with $k=4$, $v^4_t=8u^1v^4$,
can be integrated to $v^4=\alpha^4(y)\varphi^{-4}$,
where $\alpha^4=\alpha^4(y)\ne0$ and thus $\alpha^4=1$ $(\!{}\bmod \{\mathscr S(Y)\})$,
i.e., we can set $v^4=\varphi^{-4}$.
Solving the equation of the system~\eqref{eq:U_xx=0System} with $k=3$, $v^3_t=6u^1v^3+8u^0v^4$,
with respect to~$v^3$, we derive the representation $v^3=8\psi\varphi^{-3}$,
where the ``constant of integration'' $\alpha^3=\alpha^3(y)$ is incorporated into~$\psi$
due to the indeterminacy of~$\psi$ up to a summand depending on~$y$.
The first two equations of the system~\eqref{eq:U_xx=0System} take the form
\begin{gather}\label{eq:U_xx=0SystemModifiedFirstTwoEqs}
\left(\frac{\varphi_{tt}}\varphi\right)_y=-\frac{96}{\varphi^4},\quad
(\psi_{tt}\varphi+2\psi_t\varphi_t)_y=96\frac\psi{\varphi^3}.
\end{gather}
Rewriting the first equation of~\eqref{eq:U_xx=0SystemModifiedFirstTwoEqs} as
$(\varphi\varphi_{ty}-\varphi_t\varphi_y)_t=-96\varphi^{-2}$,
we integrate it to
$\varphi\varphi_{ty}-\varphi_t\varphi_y=-96\chi$,
where $\chi$ is an antiderivative of~$\varphi^{-2}$ with respect to~$t$, $\chi_t=\varphi^{-2}$.
We divide the integration result by~$\varphi^2$
to represent it as $(\varphi_y/\varphi)_t=-96\chi\chi_t$ via replacing $\varphi^{-2}$ with~$\chi_t$ on the right hand side
and to integrate once more with respect to~$t$, which gives $\varphi_y/\varphi=-48\chi^2-\mu^1/2$ with $\mu^1=\mu^1(y)$.
We repeat the trick with dividing by~$\varphi^2$ and replacing $\varphi^{-2}$ with~$\chi_t$ for the last equation,
deriving the equation $\chi_{ty}=(96\chi^2+\mu^1)\chi_t$,
and integrating it again with respect to~$t$.
Thus, we obtain $\chi_y=32\chi^3+\mu^1(y)\chi+\mu^0(y)$.
In other words, the function~$\chi=\chi(t,y)$ runs through the union of the solution sets
of the Abel equations of the first kind of the above form
parameterized by two arbitrary sufficiently smooth functions~$\mu^0$ and~$\mu^1$ of~$y$,
where $t$ plays the role of an implicit parameter, and additionally $\chi_t>0$.
The substitution $\psi=\tilde\psi/(2\varphi)$ reduces
the second equation of~\eqref{eq:U_xx=0SystemModifiedFirstTwoEqs},
in view of the first equation of~\eqref{eq:U_xx=0SystemModifiedFirstTwoEqs},
to $(\varphi\tilde\psi_{ty}-\varphi_t\tilde\psi_y)_t=0$.
We integrate the last equation once with respect to~$t$, divide the result by~$\varphi^2$
and integrate once more with respect to~$t$, taking into account that $\varphi^{-2}=\chi_t$.
Then we solve the obtained expression with respect to~$\tilde\psi_y$ and integrate with respect to~$y$.
This leads to the expression for~$\tilde\psi$.
Finally, we successively integrate the equations from the system~\eqref{eq:U_xx=0System}
with $k=2$, $k=1$ and~$k=0$ with respect to~$v^2$, $v^1$ and~$v^0$, respectively,
\begin{gather*}
v^2=\frac1{\varphi^2}\big(24\psi^2+12\chi+\alpha^2\big),\quad
v^1=\frac1\varphi\big(32\psi^3+48\chi\psi+4\alpha^2\psi+\alpha^1\big),\\
v^0=16\psi^4+48\chi\psi^2+4\alpha^2\psi^2+2\alpha^1\psi+12\chi^2+2\alpha^2\chi+\alpha^0.
\end{gather*}
Here $\alpha^0$, $\alpha^1$ and $\alpha^2$ are arbitrary sufficiently smooth functions of~$y$,
and $\alpha^0=0$ $(\!{}\bmod\{\mathscr Z(V^0)\})$.
The substitution of the found expressions for~$u^0$, $u^1$ and~$v^k$, $k=0,\dots,4$
into the ansatz~\eqref{eq:U_xx=0Ansatz} gives the following solutions of the system~\eqref{eq:BLPsystem}:
\begin{gather*}\solution
u=-\frac{\omega_t}{2\sqrt{\chi_t(t,y)}},
\\[.5ex]\phantom{\solution}
v=\omega^4+\big(12\chi(t,y)+\alpha^2(y)\big)\omega^2+\alpha^1(y)\omega+2\chi(t,y)\big(6\chi(t,y)+\alpha^2(y)\big),
\\[.5ex]\phantom{\solution}
\mbox{where}\quad
\omega:=\frac{x+\tilde\psi(t,y)}{\sqrt{\chi_t(t,y)}},\quad
\tilde\psi(t,y):=\int_{y_0}^y\frac{\lambda^1(y')\chi(t,y')+\lambda^0(y')}{\sqrt{\chi_t(t,y')}}\,{\rm d}y'.
\end{gather*}
Here $\chi=\chi(t,y)$ is an arbitrary solution of the equation
\begin{gather}\label{eq:AbelOf1stKind}
\chi_y=32\chi^3+\mu^1(y)\chi+\mu^0(y)
\end{gather}
with $\chi_t>0$ and $t$ playing the role of an implicit parameter,
and $\alpha$, $\mu^0$, $\mu^1$, $\lambda^0$ and~$\lambda^1$ are arbitrary functions of~$y$.
For each value of the parameter function tuple $(\mu^1,\mu^0)$,
the equation~\eqref{eq:AbelOf1stKind} is an Abel equation of the first kind.

If $\mu^0=0$, then the equation~\eqref{eq:AbelOf1stKind} is a \emph{Bernoulli equation},
which can be easily integrated to
$\chi=\frac18\sqrt{\beta_y(y)/(\theta(t)-\beta(y))}$,
where $\theta_t\ne0$ and $\beta_y\ne0$ in view of $\chi_t\ne0$, and $\mu^1=\beta_{yy}/(4\beta_y)$.
Hence $\theta=t$ $(\!{}\bmod \{\mathscr D(T)\})$.
As a result, we construct a subfamily of the above family of solutions of the original system~\eqref{eq:BLPsystem}
in a completely explicit form, where
\[
\chi=\frac18\sqrt{\frac{\beta_y(y)}{t-\beta(y)}}.
\]

One can consider other differential constraints, including $G$-invariant ones,
but most of them do not lead to new explicit solutions of the system~\eqref{eq:BLPsystem}.

For instance, consider the $G$-invariant differential constraint
$v_{xt}u_y=v_xu_{yt}$, $v_{xy}u_y=v_xu_{yy}$.
We can suppose here that $u_yv_x\ne0$ since
the entire set of solutions of the system~\eqref{eq:BLPsystem}
under the $G$-invariant differential constraint $u_y=0$ (resp.\ $v_x=0$)
has been constructed above.
Then we can write the constraint $v_{xt}u_y=v_xu_{yt}$, $v_{xx}u_y=v_xu_{yx}$
in the form $(v_x/u_y)_t=(v_x/u_y)_x=0$, which obviously integrates to
the $G$-invariant family of differential constraints $\alpha(y)u_y=v_x$,
where $\alpha$ runs through the set of nonzero sufficiently smooth functions of~$y$.
Without loss of generality we can assume $\alpha\ne1$
since otherwise the differential constraint under consideration
coincides with the studied differential constraint $u_y=v_x$.
We introduce the potential $\Phi=\Phi(t,x,y)$ such that $u=\Phi_x$ and $v=\alpha\Phi_y$.
The system~\eqref{eq:BLPsystem} with substituted~$(u,v)$ implies that $\Phi_{xxxy}=0$
or, equivalently, $v_{xxx}=0$. The solutions satisfying the last constraint
have been considered above.

\medskip\par\noindent$\boldsymbol{v=u.}$
Although the constraint $v=u$ is not $G$-invariant,
yet the use of it leads to new solu\-tions of the system~\eqref{eq:BLPsystem}
that do not belong to the solution families satisfying the differential constraints considered above.
Indeed, in view of this constraint, the second equation of~\eqref{eq:BLPsystem} becomes the Burgers equation $u_t=u_{xx}+2uu_x$.
Differentiating it with respect to~$y$ and subtracting the result from the first equation of the system~\eqref{eq:BLPsystem},
in view of $v=u$ we obtain the equation $u_{xxx}=u_{xxy}$.
The general solution of the last equation is
\[
u=\chi(t,z)+u^1(t,y)z+u^0(t,y),
\]
where $z:=x+y$,
and $\chi$, $u^0$ and~$u^1$ are arbitrary sufficiently smooth functions of their arguments.
Since $\chi_{zzz}=u_{xxx}=v_{xxx}$, for finding new solutions it suffices to study the case $\chi_{zzz}\ne0$.
Similarly, $(u^0_y,u^1_y)\ne(0,0)$ since otherwise $u$ is a function of~$(t,z)$ and thus $u_y=u_x=v_x$.
Under the last inequality, the compatibility conditions of the system $u_t=u_{xx}+2uu_x$ and $u_{xxx}=u_{xxy}$,
which were computed using the package {\tt rif} \cite{reid1996a} for {\tt Maple},
imply the equation $(u^0_y/u^1_y)_y=0$
integrating to $u^0=g^1(t)u^1+g^0(t)$ with arbitrary functions $g^0$ and $g^1$ of~$t$, i.e., $u=\chi+u^1(z+g^1)+g^0$.
Hence $g^1=0$ $(\!{}\bmod \{\mathscr P(X^0)\})$, $g^0$ can be absorbed by the function~$\chi$, and thus $u=\chi+u^1z$.
Then another consequence of the compatibility conditions is the equation $z^2\chi_z=z\chi-2$
with the general solution $\chi=z^{-1}+g(t)z$, where the ``integration constant''~$g$ can be absorbed by the~$u^1$.
In other words, up to the $G$-equivalence we can set $\chi=z^{-1}$
and derive one more consequence of the compatibility conditions, $u^1_t=2(u^1)^2$.
Therefore, $u^1=(-2t+\alpha(y))^{-1}$,~i.e.,
\begin{equation}\label{eq:DiffConsrU=V}
\solution
u=v=\frac1{x+y}+\frac{x+y}{-2t+\alpha(y)},
\end{equation}
where~$\alpha$ is an arbitrary function of~$y$.

\section{Lie reductions of codimension one}\label{sec:LieReductionsOfCodim1}

$G$-inequivalent codimension-one Lie reductions of the system~\eqref{eq:BLPsystem}
are collected in Table~\ref{tab:LieReductionsOfCodim1}.
There, for each of the one-dimensional subalgebras of~$\mathfrak g$ listed in Lemma~\ref{lem:1DInequivSubalgs},
we present an ansatz constructed for $(u,v)$,
the corresponding reduced system of partial differential equations in two independent variables,
where $\varphi=\varphi(z_1,z_2)$ and $\psi=\psi(z_1,z_2)$
are the new unknown functions of the invariant independent variables~$(z_1,z_2)$.
The subscripts~1 and~2 of~$\varphi$ and~$\psi$ denote the differentiation with respect to~$z_1$ and~$z_2$, respectively.

\begin{table}[!ht]
\begin{center}
\caption{\footnotesize $G$-inequivalent Lie reductions with respect to one-dimensional subalgebras of~$\mathfrak g$
\strut}\label{tab:LieReductionsOfCodim1}
\renewcommand{\arraystretch}{1.6}
\begin{tabular}{|l|c|c|c|c|l|}
\hline
\hfil $\subset\mathfrak g$ &  $u$   &  $v$   &  $z_1$ &  $z_2$           &\hfil Reduced system      \\
\hline
$\mathfrak s_{1.1}$ & $\varphi$ & $\psi  $ & $t+y$ & $x$   & $\varphi_{11}=(\varphi^2-\varphi_2)_{12}+2\psi_{222}$, \ $\psi_1=\psi_{22}+2\varphi\psi_2$  \\
$\mathfrak s_{1.2}$ & $\varphi$ & $\psi+t$ & $x  $ & $y$   & $(\varphi^2-\varphi_1)_{12}+2\psi_{111}=0$, \  $\psi_{11}+2\varphi\psi_1=1$    \\
$\mathfrak s_{1.3}$ & $\varphi$ & $\psi  $ & $x  $ & $y$   & $(\varphi^2-\varphi_1)_{12}+2\psi_{111}=0$, \ $\psi_{11}+2\varphi\psi_1=0$ \\
$\mathfrak s_{1.4}$ & $\varphi$ & $\psi  $ & $t  $ & $x+y$ & $\varphi_{12}=(\varphi^2-\varphi_2)_{22}+2\psi_{222}$, \ $\psi_1=\psi_{22}+2\varphi\psi_2$ \\
$\mathfrak s_{1.5}$ & $\varphi$ & $\psi  $ & $t  $ & $x$   & $\psi_{222}=0$, \ $\psi_1=\psi_{22}+2\varphi\psi_2$ \\
$\mathfrak s_{1.6}$ & $\varphi$ & $\psi+x$ & $t  $ & $y$   & $\varphi_{12}=0$,\ $\psi_1=2\varphi$ \\
$\mathfrak s_{1.7}$ & $\varphi$ & $\psi  $ & $t  $ & $y$   & $\varphi_{12}=0$, \ $\psi_1=0$ \\
\hline
\end{tabular}
\end{center}
\end{table}

We index each of the listed reduced systems by the number of the corresponding one-dimensional subalgebra of~$\mathfrak g$.
The maximal Lie invariance algebras $\mathfrak a_{1.m}$ of reduced systems~$1.m$, $m=1,\dots,7$, are
\begin{gather*}
\mathfrak a_{1.1}=\big\langle\p_{z_1},\,\p_{z_2},\,2z_1\p_{z_1}+z_2\p_{z_2}-\varphi\p_\varphi-2\psi\p_\psi,\,\p_\psi\big\rangle,
\\[.5ex]
\mathfrak a_{1.2}=\big\langle\p_{z_1},\,\p_{z_2},\,z_1\p_{z_1}-2z_2\p_{z_2}-\varphi\p_\varphi+2\psi\p_\psi,\,\beta\p_\psi\big\rangle,
\\[.5ex]
\mathfrak a_{1.3}=\big\langle\p_{z_1},\,z_1\p_{z_1}-\varphi\p_\varphi,\,\alpha\p_{z_2}-\alpha_{z_2}\psi\p_\psi,\,\beta\p_\psi\big\rangle,
\\[.5ex]
\mathfrak a_{1.4}=\big\langle\p_{z_1},\,2z_1\p_{z_1}+z_2\p_{z_2}-\varphi\p_\varphi-\psi\p_\psi,\,g\p_{z_2}-\tfrac12g_{z_1}\p_\varphi,\,\p_\psi\big\rangle,
\\[.5ex]
\mathfrak a_{1.5}=\big\langle
f\p_{z_1}+\tfrac12z_2f_{z_1}\p_{z_2}-\big(\tfrac12f_{z_1}\varphi+\tfrac14f_{z_1z_1}z_2\big)\p_\varphi,\,g\p_{z_2}-\tfrac12g_{z_1}\p_\varphi,\,\psi\p_\psi,\,\p_\psi\big\rangle,
\\[.5ex]
\mathfrak a_{1.6}=\big\langle\p_{z_1},\,z_1\p_{z_1}-\varphi\p_\varphi,\,
\varphi\p_\varphi+\psi\p_\psi,\,
\gamma\p_\varphi+2\gamma z_1\p_\psi,\,\tfrac12g_{z_1}\p_\varphi+g\p_\psi,\,\beta\p_\psi,\,\alpha\p_{z_2}\big\rangle,
\\[.5ex]
\mathfrak a_{1.7}=\big\langle f\p_{z_1},\,\alpha\p_{z_2},\,g\p_\varphi,\,\gamma\p_\varphi,\,\varphi\p_\varphi,\,\Upsilon\p_\psi\big\rangle,
\end{gather*}
respectively.
Here $f$ and~$g$ run through the set of smooth functions of~$z_1$,
$\alpha$, $\beta$ and~$\gamma$ run through the set of smooth functions of~$z_2$,
and $\Upsilon$ runs through the set of smooth functions of~$(z_2,\psi)$.

Lie symmetries of a reduced system that are not induced by Lie symmetries of the original system
are called additional or hidden; see~\cite{abra2008a} and~\cite{poch2017a} for references.
The algebra of Lie symmetries of reduced system~$1.m$ that are induced by Lie symmetries of the system~\eqref{eq:BLPsystem}
is canonically isomorphic to the quotient algebra ${\rm N}_{\mathfrak g}(\mathfrak s_{1.m})/\mathfrak s_{1.m}$, where
\begin{equation*}
{\rm N}_{\mathfrak g}(\mathfrak s_{1.m})=\{Q\in\mathfrak g\mid[Q,Q']\in\mathfrak s_{1.m}\text{\ for all\ }Q'\in\mathfrak g_{1.m}\}
\end{equation*}
is the normalizer of the subalgebra $\mathfrak s_{1.m}$.
This property allows us to check whether the system~\eqref{eq:BLPsystem}
admits hidden symmetries associated with  reduced system~$1.m$.
By the definition of normalizer, we obtain
\begin{gather*}
{\rm N}_{\mathfrak g}(\mathfrak s_{1.1})=\big\langle D(t)+S(y),\,D(1),\,S(1),\,P(1),\,Z(1)     \big\rangle,\\
{\rm N}_{\mathfrak g}(\mathfrak s_{1.2})=\big\langle D(t)-S(y),\,D(1),\,S(1),\,P(1),\,Z(\beta) \big\rangle,\\
{\rm N}_{\mathfrak g}(\mathfrak s_{1.3})=\big\langle D(t),\,D(1),\,S(\alpha),\,P(1),\,Z(\beta) \big\rangle,\\
{\rm N}_{\mathfrak g}(\mathfrak s_{1.4})=\big\langle D(2t)+S(y),\,D(1),\,S(1),\,P(g),\,Z(1)    \big\rangle,\\
{\rm N}_{\mathfrak g}(\mathfrak s_{1.5})=\big\langle D(f),\,S(1),\,S(y),\,P(g),\,Z(1)          \big\rangle,\\
{\rm N}_{\mathfrak g}(\mathfrak s_{1.6})=\big\langle D(2t)-S(y),\,D(1),\,S(1),\,P(g),\,Z(\beta)\big\rangle,\\
{\rm N}_{\mathfrak g}(\mathfrak s_{1.7})=\big\langle D(t),\,D(1),\,S(\alpha),\,P(g),\,Z(\beta) \big\rangle.
\end{gather*}

If both the algebra~$\mathfrak a_{1.m}$ and the normalizer ${\rm N}_{\mathfrak g}(\mathfrak s_{1.m})$
are finite-dimensional, which is the case only for $m=1$,
the equality $\dim{\rm N}_{\mathfrak g}(\mathfrak s_{1.m})-1=\dim\mathfrak s_{1.m}$ is equivalent to the claim
that all element of~$\mathfrak a_{1.m}$ are induced by elements of~${\rm N}_{\mathfrak g}(\mathfrak s_{1.m})$.
In other words, the system~\eqref{eq:BLPsystem} admits no hidden symmetries
corresponding to the Lie reduction with respect to the subalgebra~$\mathfrak s_{1.1}$ of~$\mathfrak g$.

For all the other values of~$m$, $m=2,\dots,7$,
both the dimensions of~$\mathfrak a_{1.m}$ and ${\rm N}_{\mathfrak g}(\mathfrak s_{1.m})$ are infinite.
Therefore, to check the presence of hidden symmetries for such a value of~$m$,
we need to directly compute the quotient algebra ${\rm N}_{\mathfrak g}(\mathfrak s_{1.m})/\mathfrak s_{1.m}$.
In this way, we show that the algebra ${\rm N}_{\mathfrak g}(\mathfrak s_{1.m})/\mathfrak s_{1.m}$
is canonically isomorphic to~$\mathfrak a_{1.m}$ if $m\in\{2,3,4,5\}$,
i.e., the Lie reductions with respect to the subalgebras~$\mathfrak s_{1.m}$ with $m\in\{2,3,4,5\}$
give no hidden symmetries for the system~\eqref{eq:BLPsystem}.
For $m\in\{6,7\}$, elements of the normalizer ${\rm N}_{\mathfrak g}(\mathfrak s_{1.m})$ induces
only a proper subalgebra of~$\mathfrak a_{1.m}$.
Thus, the vector fields $D(2t)-S(y)$, $D(1)$, $S(1)$, $P(g)$ and~$Z(\beta)$
spanning ${\rm N}_{\mathfrak g}(\mathfrak s_{1.6})$ induce
$2(z_1\p_{z_1}-\varphi\p_\varphi)+(\varphi\p_\varphi+\psi\p_\psi)-z_2\p_{z_2}$,
$\p_{z_1}$, $\p_{z_2}$, $\tfrac12g_{z_1}\p_\varphi+g\p_\psi$ and $\beta\p_\psi$
from the algebra~$\mathfrak a_{1.6}$.
The vector fields~$D(t)$, $D(1)$, $S(\alpha)$, $P(g)$ and $Z(\beta)$
spanning ${\rm N}_{\mathfrak g}(\mathfrak s_{1.7})$ induce
$z_1\p_{z_1}-\frac12\varphi\p_\varphi$, $\p_{z_1}$, $\alpha\p_{z_2}-\frac12\alpha_{z_2}\psi\p_\psi$, $\frac12g\p_\varphi$ and $\beta\p_\psi$
from the algebra~$\mathfrak a_{1.7}$.
All the other vector fields from the algebra~$\mathfrak a_{1.6}$ and~$\mathfrak a_{1.7}$
are hidden symmetries of the system~\eqref{eq:BLPsystem} associated with reduced systems~1.6 and~1.7.
\looseness=-1

As a result, it is needless to carry out further Lie reductions of reduced systems~1.1--1.7 since
all the solutions of the system~\eqref{eq:BLPsystem} that can be constructed by such reductions
satisfy differential constraints considered in Section~\ref{sec:SolutionsByMethodOfDiffConstraints}
or are invariant with respect to two-dimensional subalgebras of~$\mathfrak g$ and, therefore, can be obtained
using direct codimension-two Lie reductions of the system~\eqref{eq:BLPsystem},
which is done in the next Section~\ref{sec:LieReductionsOfCodim2}.

We discuss some properties of nontrivial reduced systems~1.1--1.4 that are not related to their Lie symmetries.

\medskip\par\noindent{\bf 1.1.}\
The first equation of the reduced system can be represented in conserved form in two ways,
$\big(\varphi_1-(\varphi^2-\varphi_2)_2\big)_1=(2\psi_{22})_2$ \ and \
$(\varphi_1)_1=\big((\varphi^2-\varphi_2)_1+2\psi_{22}\big)_2$.
These conserved forms can be used for introducing potentials
although all the potential symmetries of~\eqref{eq:BLPsystem} associated with these potential systems are trivial.

\medskip\par\noindent{\bf 1.2.}\
We assume that $\psi_{111}\ne0$ and $\varphi_{11}\ne0$
(otherwise we have solutions exhaustively studied in Section~\ref{sec:SolutionsByMethodOfDiffConstraints}).
The first reduced equation integrates once with respect to~$z_1$ to $(\varphi^2-\varphi_1)_2+2\psi_{11}=\delta$,
where $\delta$ is an arbitrary function of~$z_2$.

\medskip\par\noindent{\bf 1.3.}\
Similar to the previous case, we can assume $\psi_{111}\ne0$ and $\varphi_{11}\ne0$
and integrate the first reduced equation to $(\varphi^2-\varphi_1)_2+2\psi_{11}=\delta$.
If the ``integration constant''~$\delta$, which is a function of~$z_2$, is nonzero,
then it can be set to 1 by a point symmetry transformation of reduced system~1.3,
which is induced by a transformation~$\mathscr S(Y)$.
Therefore, we can assume that $\delta\in\{0,1\}$ $(\!{}\bmod \{\mathscr S(Y)\})$.

The differential substitution $\varphi=-\tilde\varphi_1/\tilde\varphi$ allows us
to integrate the second equation with respect to~$z_1$, giving $\psi_1=\gamma(z_2)\tilde\varphi^2$.
The ``integration constant''~$\gamma$, which is a function of~$z_2$ as well, is nonzero
in view of nonvanishing~$\psi_1$ and thus can be set to~1
since the function~$\tilde\varphi$ from the differential substitution
is defined up to a nonvanishing multiplier that is a function of~$z_2$.
If $\gamma$ is negative, we need to additionally alternate the signs of $(z_1,\varphi)$,
which is the point symmetry transformation of reduced system~1.3
and is induced by the transformation~$\mathscr I(-1)$.
Thus, $\psi_1=\tilde\varphi^2$.
The integrated first reduced equation takes the form
$(\tilde\varphi_{11}/\tilde\varphi)_2+2(\tilde\varphi^2)_1=\delta$,
which can be written as
$(\tilde\varphi\tilde\varphi_{12}-\tilde\varphi_1\tilde\varphi_2)_1=\delta\tilde\varphi^2-2\tilde\varphi^2(\tilde\varphi^2)_1$
and integrated once more with respect to~$z_1$.
As a result, we obtain
$\tilde\varphi\tilde\varphi_{12}-\tilde\varphi_1\tilde\varphi_2=\delta\psi-\tilde\varphi^4+\alpha$,
where the ``integration constant''~$\alpha$ is a function of~$z_2$.
The last equation can be represented as
\begin{gather}\label{eq:RedSystem1.3IntegratedEq}
\left(\frac{\tilde\varphi_1}{\tilde\varphi}\right)_2=-\tilde\varphi^2+\frac{\delta\psi+\alpha}{\tilde\varphi^2}.
\end{gather}
Note that the equation~\eqref{eq:RedSystem1.3IntegratedEq} can be immediately written in terms of
the initial dependent variables~$\varphi$ and~$\psi$, $\psi_1(\varphi_2-\psi_1)+\delta\psi+\alpha=0$.
This shows that we have twice properly integrated reduced system 1.3.

The case $\delta=\alpha=0$ corresponds to the family of stationary solutions of~\eqref{eq:BLPsystem}
satisfying the differential constraint $u_y=v_x$,
which has been exhaustively studied in Section~\ref{sec:SolutionsByMethodOfDiffConstraints}.

Suppose that $\delta=0$ and $\alpha\ne0$.
Using a point symmetry transformation of reduced system~1.3, which is induced by a transformation~$\mathscr S(Y)$,
we set $\alpha=\ve\in\{-1,1\}$.
The substitution $\tilde\varphi=\ve'{\rm e}^{\theta/2}$ with $\ve'=\sgn\tilde\varphi$
directly reduces the equation~\eqref{eq:RedSystem1.3IntegratedEq} to
the $\sinh$-Gordon equation $\theta_{12}=-4\sinh\theta$ if $\ve=1$ or
the $\cosh$-Gordon equation $\theta_{12}=-4\cosh\theta$ if $\ve=-1$.
See~\cite[Section~7.3.1.1]{poly2012A} and references therein
for exact solutions and other properties of the $\sinh$-Gordon equation.
As a result, we construct the following family of solutions of the system~\eqref{eq:BLPsystem}:
\[
\solution
u=-\frac12\theta_x(x,y),\quad
v=\int_{x_0}^x{\rm e}^{\theta(x',y)}{\rm d}x',
\]
where $\theta=\theta(x,y)$ is an arbitrary solution of
the $\sinh$-Gordon equation $\theta_{xy}=-4\sinh\theta$ or
the $\cosh$-Gordon equation $\theta_{xy}=-4\cosh\theta$.

If $\delta=1$, the parameter function~$\alpha$ can be set to zero
by a point symmetry transformation of reduced system~1.3,
which is induced by a transformation~$\mathscr Z(V^0)$.
We solve the second equation of reduced system~1.3 with respect to~$\varphi$,
$\varphi=-\psi_{11}/(2\psi_1)$,
and substitute the obtained expression into the equation $\psi_1(\varphi_2-\psi_1)+\psi=0$,
which gives the equation
\[
\psi_1\left(\frac{\psi_{11}}{\psi_1}\right)_2+2(\psi_1)^2=2\psi
\]
with respect to~$\psi$.
Multiplying the last equation by $\psi_{11}/(\psi_1)^2$, we can represent it in the conserved form
\[
\bigg(\bigg(\frac{\psi_{11}}{\psi_1}\bigg)^{\!2\,}\bigg)_2+4\bigg(\psi_1+\frac\psi{\psi_1}-z_1\bigg)_1=0.
\]

\noindent{\bf 1.4.}\
As was remarked in~\cite{boit1987a},
reduced system~1.4 is equivalent to the dispersive long-wave equations~\cite{kupe1985b}.
We integrate the first equation and differentiate the second equation with respect to~$z_2$.
Then we re-denote $\psi_2:=\tilde\psi$ and
set the parameter function arising in the course of inte\-gration to zero
by a point symmetry transformation of reduced system~1.4,
which is induced by a point symmetry transformation of the initial system~\eqref{eq:BLPsystem}.
As a result, we derive the system $\varphi_1=(\varphi^2-\varphi_2+2\tilde\psi)_2$, $\tilde\psi_1=(\tilde\psi_2+2\varphi\tilde\psi)_2$,
which has, up to scaling of dependent variables, the form~\cite[Eq.~(1.3)]{kupe1985b}.

A big set of explicit solutions of reduced system~1.4 can be obtained
after posing the constraint $\psi=\varphi$,
under which the first equation from this system becomes a differential constraint for the second one,
and the system reduces to the single Burgers equation $\varphi_1=\varphi_{22}+2\varphi\varphi_2$.

\begin{remark}\label{rem:TrivialityOfRedSystem1.5-1.7}
The integration of reduced systems~1.5--1.7 is trivial.
Moreover, the corresponding invariant solutions of the system~\eqref{eq:BLPsystem}
satisfy the differential constraints $v_{xxx}=0$, $v_{xx}=0$ and $v_x=0$
for $m=5,6,7$, respectively.
Therefore, the Lie reductions of the system~\eqref{eq:BLPsystem}
with respect to the subalgebras~$\mathfrak s_{1.m}$, $m=5,6,7$,
do not result in new exact solutions of this system.
\end{remark}

\section{Essential Lie reductions of codimension two}\label{sec:LieReductionsOfCodim2}

Most of $G$-inequivalent codimension-two Lie reductions of the system~\eqref{eq:BLPsystem}
are not interesting in the sense that each of them can be represented as a two-step reduction,
where the first step is a codimension-one Lie reduction mentioned in Remark~\ref{rem:TrivialityOfRedSystem1.5-1.7}.
The solutions constructed using such codimension-two Lie reductions are trivial
and belong to families of exact solutions obtained in Section~\ref{sec:SolutionsByMethodOfDiffConstraints}.
The subalgebras listed in Lemma~\ref{lem:2DInequivSubalgs} that satisfy the rank condition~\cite{boyk2016a}
and contain the vector fields $S(1)$, $P(1)$ and $P(1)+Z(1)$
are associated with the above trivial codimension-two Lie reductions.
Moreover, some of the other subalgebras listed in Lemma~\ref{lem:2DInequivSubalgs}
do not satisfy the rank condition~\cite{boyk2016a} and are thus not appropriate for Lie reductions at all.
These are the subalgebras containing vector fields from $\{Z(\beta)\}$ and the subalgebras~$\mathfrak s_{2.18}^{g\beta}$.

This is why, modulo the $G$-equivalence, we need to only carry out Lie reductions
with respect to the two-dimensional non-Abelian subalgebras
$\mathfrak s_{2.1}$, $\mathfrak s_{2.2}^\de$, $\mathfrak s_{2.3}^\de$, $\mathfrak s_{2.4}^1$
and the Abelian subalgebras $\mathfrak s^{\de1}_{2.9}$, $\de\in\{0,1\}$.
Below, for each of these subalgebras
we present an ansatz for the related invariant solutions
and the corresponding reduced system,
where $\varphi$ and~$\psi$ are the new unknown functions of the single invariant variable~$\omega$.
\begin{gather*}
\hspace{-\mathindent}\makebox[\mathindent][r]{2.1.\ }
\mathfrak s_{2.1}=\langle D(1)+S(1),D(t)+S(y)\rangle\colon\\
u=\frac{\varphi}{|t-y|^{1/2}}-\frac x{4(t-y)},\ v=\frac\psi{t-y},\quad\mbox{where}\quad \omega=\frac x{|t-y|^{1/2}},
\\[-.5ex]
\omega(\varphi^2-\varphi_\omega)_{\omega\omega}+3(\varphi^2-\varphi_\omega)_\omega+4\psi_{\omega\omega\omega}=\frac12\omega,
\quad
\psi_{\omega\omega}+2\varphi\psi_\omega+\ve\psi=0.
\\[1ex]
\hspace{-\mathindent}\makebox[\mathindent][r]{2.2.\ }
\mathfrak s_{2.2}^\de=\langle D(1)+Z(\de),D(t)-S(y)\rangle,\ \de\in\{0,1\}\colon\\
u=|y|^{1/2}\varphi,\ v=y^{-1}\psi+\delta t,\quad\mbox{where}\quad \omega=|y|^{1/2}x,
\\
\omega(\varphi^2-\varphi_\omega)_{\omega\omega}+3(\varphi^2-\varphi_\omega)_\omega+4\psi_{\omega\omega\omega}=0,
\quad
\psi_{\omega\omega}+2\varphi\psi_\omega=\ve\delta.
\\[1.5ex]
\hspace{-\mathindent}\makebox[\mathindent][r]{2.3.\ }
\mathfrak s_{2.3}^\de=\langle D(1),D(t)+Z(\de)\rangle,\ \de\in\{0,1\}\colon\\
u=x^{-1}\varphi,\ v=\psi+2\delta\ln|x|,\quad\mbox{where}\quad\omega=y,\\
(\varphi^2+\varphi)_\omega+4\varepsilon\delta=0,
\quad
2\varphi=1.
\\[1.5ex]
\hspace{-\mathindent}\makebox[\mathindent][r]{2.4.\ }
\mathfrak s_{2.4}^1=\langle S(1)-P(1),D(2t)+S(y)\rangle\colon\\
u=|t|^{-1/2}\varphi,\ v=|t|^{-1/2}\psi,\quad\mbox{where}\quad\omega=|t|^{-1/2}(x+y),\\[-.5ex]
(\varphi^2-\varphi_\omega)_{\omega\omega}+2\psi_{\omega\omega\omega}+\frac12\ve(\omega\varphi)_{\omega\omega}=0,
\quad
\psi_{\omega\omega}+2\varphi\psi_\omega+\frac12\ve(\omega\psi)_\omega=0.
\\[1ex]
\hspace{-\mathindent}\makebox[\mathindent][r]{2.9.\ }
\mathfrak s_{2.9}^{\de1}=\langle D(1)+Z(\de),S(1)-P(1)\rangle,\ \de\in\{0,1\}\colon\\
u=\varphi,\ v=\psi+\delta t,\quad\mbox{where}\quad\omega=x+y,\\
(\varphi^2-\varphi_\omega)_{\omega\omega}+2\psi_{\omega\omega\omega}=0,
\quad
\psi_{\omega\omega}+2\varphi\psi_\omega=\de.
\end{gather*}

Here $\ve$ denotes the sign of the value whose modulus arise in the corresponding ansatz,
\mbox{$\ve=\sgn(t-y)$}, $\ve=\sgn y$, $\ve=\sgn t$ for reductions~2.1, 2.2 and 2.4, respectively.
We select the ansatzes in such a way that the obtained reduced systems are of the simplest form.

Let us discuss lowering the order and integrating the above reduced systems.
Reductions~2.1, 2.2, 2.3, 2.4 and 2.9 correspond, in the notation of~\cite{paqu1990a}, to
reductions of the system~\eqref{eq:BLPsystemPWform} with respect to the subalgebras
${\rm L}_{2.15}$, ${\rm L}_{2.14}$, ${\rm L}_{2.13}$, ${\rm L}_{2.16}(\epsilon)$ and ${\rm L}_{2.7}(\epsilon)$,
where $\epsilon\in\{-1,1\}$,
of the maximal Lie invariance algebra~${\rm L}_{\rm I}$ of this system, respectively.
We use~\cite[Section~5.1]{paqu1990a} as a source of hints for studying some of the above reduced systems.
However, the equivalence with respect to the corresponding point-symmetry pseudogroup
is more systematically involved in our consideration than in the consideration in~\cite{paqu1990a}.
In particular, the parameter~$\epsilon$ in the subalgebras ${\rm L}_{2.16}(\epsilon)$ and ${\rm L}_{2.7}(\epsilon)$
is in fact inessential since it can be set to a fixed value,~$1$ or~$-1$,
by discrete point symmetry transformations of the system~\eqref{eq:BLPsystemPWform},
which are discussed in Section~\ref{sec:BLPsystemEquivForms}.
Moreover, we systematically compute nontrivial integrating factors
of reduced systems of ordinary differential equations or related single ordinary differential equations
by hand and using the package {\tt Jets}~\cite{BaranMarvan},
which is based on theoretical results of~\cite{marv2009a}.
We advance integrating reduced systems further in comparison with~\cite{paqu1990a}
and completely solve all the reduced systems.

Below, $C_0$, $C_1$ and~$C_2$ are arbitrary constants arising in the course of integration.

\medskip\par\noindent{\bf 2.1.}\
${\rm N}_{\mathfrak g}(\mathfrak s_{2.1})=\mathfrak s_{2.1}$,
which agrees with the fact that the maximal Lie invariance algebra of reduced system~2.1 is zero.
The tuples $(1,0)$, $(\omega^2,0)$ and $\big(\psi,\,\omega\varphi_\omega+\varphi+\frac12\varepsilon\omega\big)$,
constitute a basis of the space of integrating factors of this system
that depend at most on $(\omega,\varphi,\psi,\varphi_\omega,\psi_\omega)$.
Therefore, we have found two more first integrals in comparison with the corresponding case in~\cite{paqu1990a}.
Admitting the first two integrating factors by reduced system~2.1 is equivalent to the fact that
its first equation alone admits two obvious integrating factors, 1 and~$\omega^2$,
which corresponds to the two once integrated equations
\begin{subequations}
\begin{gather}\label{eq:BLPsystemReducedSystem2.1FirstIntegralA}
\omega(\varphi^2-\varphi_\omega)_\omega+2(\varphi^2-\varphi_\omega)+4\psi_{\omega\omega}=\frac14\omega^2+C_1,
\\\label{eq:BLPsystemReducedSystem2.1FirstIntegralB}
\omega^3(\varphi^2-\varphi_\omega)_\omega+4\omega^2\psi_{\omega\omega}-8\omega\psi_\omega+8\psi=\frac18\omega^4-C_2.
\end{gather}
\end{subequations}
Combining these equations for canceling second-order derivatives, we derive
\[
2\omega^2(\varphi^2-\varphi_\omega)+8\omega\psi_\omega-8\psi=\frac18\omega^4+C_1\omega^2+C_2.
\]
Another way for obtaining the last equation is to integrate the equation~\eqref{eq:BLPsystemReducedSystem2.1FirstIntegralA}
once more with the integrating factor~$2\omega$,
which is a display of admitting the integrating factors~1 and~$\omega^2$ by the first equation of reduced system~2.1.
The third integral of this system, which is associated with the integrating factor
$\big(\psi,\,\omega\varphi_\omega+\varphi+\frac12\varepsilon\omega\big)$, leads to the equation
\[
\left(\omega(\varphi^2-\varphi_\omega)_\omega+2(\varphi^2-\varphi_\omega)+4\psi_{\omega\omega}
+\varepsilon\omega\varphi-\frac\varepsilon2\right)\psi
+(\omega\varphi)_\omega\psi_\omega
+\frac\varepsilon2\omega\psi_\omega-2\psi_\omega^{\,\,2}=C_0
\]
or, in view of~\eqref{eq:BLPsystemReducedSystem2.1FirstIntegralA},
$
\left(\varepsilon\omega\varphi+\frac14\omega^2+C_1-\frac12\varepsilon\right)\psi
+(\omega\varphi)_\omega\psi_\omega
+\frac12\varepsilon\omega\psi_\omega-2\psi_\omega^{\,\,2}=C_0.
$
The transformation $\tilde\varphi:=\omega\varphi+\frac14\varepsilon\omega^2$
simplifies the integrating factor and the last equation to a nicer form,
$(\psi,\,\tilde\varphi_\omega)$ and
\begin{gather}\label{eq:BLPsystemReducedSystem2.1ThirdIntegralSimplified}
\left(\varepsilon\tilde\varphi+C_1-\frac\varepsilon2\right)\psi
+\tilde\varphi_\omega\psi_\omega-2\psi_\omega^{\,\,2}=C_0.
\end{gather}

Reduced system~2.1 is equivalent to a single second-order ordinary differential equation for~$\tilde\varphi$.
To show this, we substitute the expression $\varphi=\omega^{-1}\tilde\varphi-\frac14\varepsilon\omega$
into the system consisting of
the equations~\eqref{eq:BLPsystemReducedSystem2.1FirstIntegralA} and~\eqref{eq:BLPsystemReducedSystem2.1FirstIntegralB}
and the second equation of reduced system~2.1 in terms of~$\tilde\varphi$
and solve this system
as a system of linear algebraic equations with respect to $(\psi,\psi_\omega,\psi_{\omega\omega})$.
Then we substitute the derived expressions for $\psi$ and~$\psi_\omega$
into~\eqref{eq:BLPsystemReducedSystem2.1ThirdIntegralSimplified}.
Thus, we obtain
\[
\begin{split}
\psi={}&
-\frac{\omega^2\tilde\varphi_{\omega\omega}}{2(4\tilde\varphi+\varepsilon\omega^2)}
-\omega^2\frac{3\varepsilon\omega^4-4(2C_1\varepsilon+1)\omega^2-8(\varepsilon C_2-2C_1+\varepsilon)}{32(4\tilde\varphi+\varepsilon\omega^2)}
\\[.5ex]&
+\frac14\tilde\varphi^2-\frac{\varepsilon\omega^2-1}4\tilde\varphi
+\frac3{32}\omega^4-\frac{2C_1+3\varepsilon}{16}\omega^2-\frac{C_2}8
\end{split}
\]
and
\begin{gather}\label{eq:BLPsystemReducedSystem2.1EqForPhiTilde}
\left(\tilde\varphi_{\omega\omega}-\frac12P_{\tilde\varphi}\right)^2
=\left(\frac{2\tilde\varphi}\omega+\frac\varepsilon2\omega\right)^2\big(\tilde\varphi_\omega^{\,\,2}-P\big)
\end{gather}
with
$P=P(\tilde\varphi):=
-2\varepsilon\tilde\varphi^3-(2C_1+\varepsilon)\tilde\varphi^2+(\varepsilon C_2-2C_1+\varepsilon)\tilde\varphi
+C_1C_2-\frac12\varepsilon C_2+8C_0$.
By the simple transformation  $\hat\omega=2^{-1/2}\omega$, $\hat\varphi=-\varepsilon\tilde\varphi$, $\hat P=2P$,
the equation~\eqref{eq:BLPsystemReducedSystem2.1EqForPhiTilde} is reduced to
one in the fourth family of second-degree Painlev\'e equations in Chazy's system (II),
\begin{gather}\label{eq:BLPsystemReducedSystem2.1EqForPhiHat}
\left(\hat\varphi_{\hat\omega\hat\omega}-\frac12\hat P_{\hat\varphi}\right)^2
=\left(\frac{2\hat\varphi}{\hat\omega}-\hat\omega\right)^2\big(\hat\varphi_{\hat\omega}^{\,\,2}-\hat P\big),
\end{gather}
see \cite[Eq.~(1.20)]{cosg2006b}, \cite[Eq.~(A.8)]{cosg2006b} and \cite[Eq.~(18.6)]{bure1972a}.
In the notation of~\cite{cosg2006b}, here
\begin{gather}\label{eq:BLPsystemReducedSystem2.1PHat}
\begin{split}&
\hat P(\hat\varphi)=4\hat\varphi^3+\alpha_1\hat\varphi^2+2\beta_1\hat\varphi+\gamma_1\quad\mbox{with}
\\&
\alpha_1=-2(2C_1+\varepsilon),\quad
\beta_1=-(C_2-2\varepsilon C_1+1),\quad
\gamma_1=2C_1C_2-\varepsilon C_2+16C_0.
\end{split}
\end{gather}
According to \cite[Appendix]{cosg2006b}, the equation~\eqref{eq:BLPsystemReducedSystem2.1EqForPhiHat}
is further reduced by a differential substitution
with the simultaneous change $\tau=\hat\omega^2/2=\omega^2/4$ of independent variable
to a fifth Painlev\'e equation.
Therefore, \emph{all the solutions of reduced systems~2.1 are expressed in terms of the fifth Painlev\'e transcendent}.
This can be verified via relating the general initial value problem
for each equation of the form~\eqref{eq:BLPsystemReducedSystem2.1PHat}
with an initial value problem for the corresponding Painlev\'e equation.

As a result, the constructed family of solutions of the system~\eqref{eq:BLPsystem} can be represented in the form
\[
\solution
u=-\frac\varepsilon x\hat\varphi(\tau)-\frac x{2(t-y)},\quad v=\frac{\psi(\tau)}{t-y}
\quad\mbox{with}\quad\tau=\frac{x^2}{4|t-y|},
\]
where $\ve=\sgn(t-y)$,
\begin{gather*}
\hat\varphi=\tau\frac{\varepsilon_1\chi_\tau+\chi}{\chi(\chi-1)}
+\frac{4\lambda+\alpha_1-4\varepsilon_1}{8\chi}-\frac{4\lambda+\alpha_1}8,
\\
\begin{split}
\psi={}&
-\frac\tau4\frac{2\tau\hat\varphi_{\tau\tau}+\hat\varphi_\tau}{\hat\varphi-\tau}
+\frac\tau4\frac{6\tau^2+\alpha_1\tau+\beta_1}{\hat\varphi-\tau}
+\frac{\hat\varphi^2}4+\frac{4\tau-\varepsilon}4\hat\varphi
+\frac32\tau^2-\frac{2C_1+3\varepsilon}4\tau-\frac{C_2}8,
\end{split}
\end{gather*}
$\varepsilon_1=\pm1$,
the polynomial~$\hat P$ and its coefficients~$\alpha_1$, $\beta_1$ and~$\gamma_1$
are defined in~\eqref{eq:BLPsystemReducedSystem2.1PHat},
$\lambda$ is any its root,
and $\chi=\chi(\tau)$ is an arbitrary solution of the fifth Painlev\'e equation
\[
\chi_{\tau\tau}=\left(\frac1{2\chi}+\frac1{\chi-1}\right){\chi_\tau^{\,\,2}}-\frac{\chi_\tau}\tau
+\frac{(\chi-1)^2}{\tau^2}\left(\alpha\chi+\frac\beta\chi\right)
+\gamma\frac\chi\tau+\delta\frac{\chi(\chi+1)}{\chi-1}
\]
with
\[
\alpha=-\frac{(12\lambda-\alpha_1)^2+96\beta_1-4\alpha_1^{\,\,2}}{384},\quad
\beta=-\frac{(4\lambda+\alpha_1-4\varepsilon_1)^2}{128},\quad
\gamma=\lambda-\frac{\varepsilon_1}2,\quad
\delta=-\frac12.
\]
Note that for avoiding the solution of the cubic equation $\hat P(\lambda)=0$,
we can reparameterize the constructed family solution of the system~\eqref{eq:BLPsystem},
assuming the roots of~$\hat P$ as three arbitrary constants
and expressing other constant parameters in terms of these roots.

\medskip
\par\noindent{\bf 2.2.}\
${\rm N}_{\mathfrak g}(\mathfrak s_{2.2}^0)=\langle D(1)+Z(\de),D(t),S(y),Z(y^{-1})\rangle$,
${\rm N}_{\mathfrak g}(\mathfrak s_{2.2}^1)=\langle D(1)+Z(\de),D(t)-S(y),Z(y^{-1})\rangle$.
The maximal Lie invariance algebra of reduced system~$2.2$ is
\smash{$\mathfrak a_{2.2}^0=\big\langle\p_\psi,\omega\p_\omega-\varphi\p_\varphi\big\rangle$} if $\delta=0$ and
\smash{$\mathfrak a_{2.2}^1=\big\langle\p_\psi\big\rangle$} if $\delta=1$.
In both the cases, the basis vector field~$\p_\psi$
is induced by the Lie-symmetry vector field~$Z(y^{-1})$ of the original system~\eqref{eq:BLPsystem},
and the basis vector field~$\omega\p_\omega-\varphi\p_\varphi$ of~$\mathfrak a_{2.2}^0$
is induced by the Lie-symmetry vector field~$S(y)$ of this system.

The first equation of reduced system~2.2 possesses the same two integrating factors
as in the previous case, 1~and~$\omega^2$.
The equations with the associated first integrals are
\begin{subequations}
\begin{gather}\label{eq:BLPsystemReducedSystem2.2FirstIntegralA}
\omega(\varphi^2-\varphi_\omega)_\omega+2(\varphi^2-\varphi_\omega)+4\psi_{\omega\omega}=2C_1,
\\\label{eq:BLPsystemReducedSystem2.2FirstIntegralB}
\omega^3(\varphi^2-\varphi_\omega)_\omega+4\omega^2\psi_{\omega\omega}-8\omega\psi_\omega+8\psi=-2C_2.
\end{gather}
\end{subequations}
We combine these equations, obtaining
\begin{gather*}
\omega^2(\varphi^2-\varphi_\omega)+4\omega\psi_\omega-4\psi=C_1\omega^2+C_2.
\end{gather*}
The same equation can be derived by integrating the equation~\eqref{eq:BLPsystemReducedSystem2.2FirstIntegralA}
once more with the integrating factor~$\omega$,
which is another display of the existence of the two integrating factors for the first equation of reduced system~2.2.

The integration constant~$C_2$ can be set to be equal zero
using a transformation $\omega=\hat\omega$, $\varphi=\hat\varphi$, $\psi=\hat\psi-\frac14C_2$
from the one-parameter group generated
by the induced Lie-symmetry vector field $\p_\psi$.
In other words, up to the $G$-equivalence the constant~$C_2$ is inessential, and thus below it is assumed to be equal to zero.

We solve the system consisting of the equations~\eqref{eq:BLPsystemReducedSystem2.2FirstIntegralA} and~\eqref{eq:BLPsystemReducedSystem2.2FirstIntegralB}
and the second equation of reduced system~2.2 as a system of algebraic equations with respect to~$(\psi,\psi_\omega,\psi_{\omega\omega})$.
The computed expression for~$\psi$ is
\[
\psi=-\frac\omega{8\varphi}\big(\omega\varphi_{\omega\omega}+2\varphi_\omega-2\omega\varphi^3-2\varphi^2+2C_1\omega\varphi+2C_1-4\varepsilon\delta\big).
\]
The consistency of the expressions for~$\psi$, $\psi_\omega$ and~$\psi_{\omega\omega}$ implies the single equation for~$\varphi$,
\begin{gather*}
\omega\varphi\varphi_{\omega\omega\omega}
-(\omega\varphi_\omega-3\varphi)\varphi_{\omega\omega}
-2\varphi_\omega^{\,\,2}-(4\omega\varphi^3+2C_1-4\varepsilon\delta)\varphi_\omega
-4\varphi^4+4C_1\varphi^2=0.
\end{gather*}
This equation admits a single independent integrating factor that is affine in~$\varphi_{\omega\omega}$
with coefficients depending on $(\omega,\varphi,\varphi_\omega)$,
$\varphi^{-3}(\omega\varphi_{\omega\omega}+2\varphi_\omega
+2C_1\omega\varphi+2C_1-4\varepsilon\delta)$.
The corresponding integrated equation can be represented in the form
\begin{gather}\label{eq:BLPsystemReducedSystem2.2EqForPhiIntegrated}
(\tilde\varphi_{\omega\omega}-\alpha_1\tilde\varphi-\beta_1)^2
=4\frac{\tilde\varphi^2}{\omega^2}(\tilde\varphi_\omega^{\,\,2}-\alpha_1\tilde\varphi^2-2\beta_1\tilde\varphi-\gamma_1),
\end{gather}
where $\tilde\varphi:=\omega\varphi$,
$\alpha_1=-4C_1$, $\beta_1=-2(C_1-2\varepsilon\delta)$ and $\gamma_1=-C_0$.
This is one in the third family of second-degree Painlev\'e equations in Chazy's system (II);
see the equations~(1.19) and~(A.5) in~\cite{cosg2006b}, which in fact coincide to each other up to a scaling,
as well as~\cite[Section~23]{bure1972a}.%
\noprint{
\[
\big(\breve\varphi_{\omega\omega}+4C_1\breve\varphi+4(C_1-2\varepsilon\delta)\big)^2
=\frac{\breve\varphi^2}{\omega^2}\big(\breve\varphi_\omega^{\,\,2}+4C_1\breve\varphi^2+8(C_1-2\varepsilon\delta)\breve\varphi+4C_0\big),
\]
where $\breve\varphi:=2\omega\varphi=2\tilde\varphi$.
\[
\big((\omega\varphi)_{\omega\omega}+4C_1\omega\varphi+2C_1-4\varepsilon\delta\big)^2
=4\varphi^2\big((\omega\varphi_\omega+\varphi)^2+4C_1\omega^2\varphi^2+4(C_1-2\varepsilon\delta)\omega\varphi+C_0\big).
\]
}
The equation~\eqref{eq:BLPsystemReducedSystem2.2EqForPhiIntegrated}
is further reduced by a differential substitution, whose form depends on equation parameters
to a third Painlev\'e equation.
Recall that the general form of third Painlev\'e equations~is\looseness=-1
\begin{gather}\label{eq:BLPsystemPainleveIII}
\chi_{\omega\omega}=\frac{\chi_\omega^{\,\,2}}\chi-\frac{\chi_\omega}\omega
+\frac{\alpha\chi^2+\beta}\omega+\gamma\chi^3+\frac{\tilde\delta}\chi,
\end{gather}
where $\alpha$, $\beta$, $\gamma$ and $\tilde\delta$ are constant parameters.
The differential substitutions used in~\cite[Appendix]{cosg2006b}
and the corresponding values of the parameters $\alpha$, $\beta$, $\gamma$ and $\tilde\delta$ are
\begin{subequations}
\begin{gather}\label{eq:BLPsystemReducedSystem2.2DiffSubs1a}
\tilde\varphi=\varepsilon_1\frac\omega2\left(
\frac{\chi_\omega}\chi-\frac{\alpha_1}{2\chi}+\frac\chi2\right),
\\\label{eq:BLPsystemReducedSystem2.2DiffSubs1b}
\alpha=-\frac{2\varepsilon_1\lambda+1}2,\quad
\beta=2\varepsilon_1\beta_1+\alpha_1\frac{2\varepsilon_1\lambda+1}2,\quad
\gamma=\frac14,\quad
\tilde\delta=-\frac{\alpha_1^{\,\,2}}4
\noprint{
\quad (\alpha,\beta,\gamma,\tilde\delta)=\left(
-\frac{2\varepsilon_1\lambda+1}2,\, 2\varepsilon_1\beta_1+\alpha_1\frac{2\varepsilon_1\lambda+1}2,\,
\frac14,\, -\frac{\alpha_1^{\,\,2}}4 \right)
}
\end{gather}
\end{subequations}
if $(\alpha_1,\beta_1)\ne(0,0)$,
where $\lambda$ is any root of the equation $\alpha_1\lambda^2+2\beta_1\lambda+\gamma_1=0$, or
\begin{subequations}
\begin{gather}\label{eq:BLPsystemReducedSystem2.2DiffSubs2a}
\tilde\varphi=\varepsilon_1\frac\omega2\left(\frac{\chi_\omega}\chi+\beta_1\chi\right),
\\\label{eq:BLPsystemReducedSystem2.2DiffSubs2b}
\alpha=\varepsilon_1\gamma_1-\beta_1,\quad
\beta=\varepsilon_1,\quad
\gamma=\beta_1^{\,\,2},\quad
\tilde\delta=0.
\end{gather}
\end{subequations}
if $\alpha_1=0$ and $(\beta_1,\gamma_1)\ne(0,0)$.
The other differential substitution $\tilde\varphi=\omega(\chi_\omega-\beta_1)/(2\chi)$
reduces the equation~\eqref{eq:BLPsystemReducedSystem2.2EqForPhiIntegrated}
with $\alpha_1=0$ to the third Painlev\'e equation
with $\alpha=1$, $\beta=\beta_1+\gamma_1$, $\gamma=0$ and $\tilde\delta=-\beta_1^{\,\,2}$ \cite[Section~23]{bure1972a}.

Thus, \emph{all the solutions of reduced systems~2.2 are expressed in terms of the third Painlev\'e transcendent}.
This can be verified via relating the general initial value problem
for each equation of the form~\eqref{eq:BLPsystemReducedSystem2.2EqForPhiIntegrated}
with an initial value problem for the corresponding Painlev\'e equation.
Note that the counterpart of the case $C_1\ne0$ was not studied in~\cite{paqu1990a}, cf.\ \cite[Section~5.1.3]{paqu1990a}.

As a result, we construct the following family of solutions of the system~\eqref{eq:BLPsystem}:
\[
\solution
u=x^{-1}\tilde\varphi(\omega),\quad v=y^{-1}\psi(\omega)+\delta t,
\]
where $\omega=|y|^{1/2}x$, $\ve=\sgn y$, $\de\in\{0,1\}$,
the function~$\tilde\varphi$ is defined by~\eqref{eq:BLPsystemReducedSystem2.2DiffSubs1a} or~\eqref{eq:BLPsystemReducedSystem2.2DiffSubs2a}
with an arbitrary solution~$\chi$ of the third Painlev\'e equation~\eqref{eq:BLPsystemPainleveIII}
in which the parameter tuple $(\alpha,\beta,\gamma,\tilde\delta)$
is given by~\eqref{eq:BLPsystemReducedSystem2.2DiffSubs1b} or~\eqref{eq:BLPsystemReducedSystem2.2DiffSubs2b}
if $(\alpha_1,\beta_1)\ne(0,0)$ or $\alpha_1=0$, respectively,
$\alpha_1=-4C_1$, $\beta_1=-2(C_1-2\varepsilon\delta)$, $\gamma_1=-C_0$, and
\[
\psi=-\frac{\omega^2}{8\tilde\varphi}(\tilde\varphi_{\omega\omega}-\beta_1)
+\frac14\big(\tilde\varphi^2+\tilde\varphi-C_1\omega^2\big).
\]

In the case $\alpha_1=\beta_1=\gamma_1=0$ or, equivalently, $C_1=C_0=\delta=0$,
the equation~\eqref{eq:BLPsystemReducedSystem2.2EqForPhiIntegrated} is directly integrated, after presenting in the form
$\omega\tilde\varphi_{\omega\omega}=2\varepsilon_1\tilde\varphi\tilde\varphi_\omega$ with $\varepsilon_1=\pm1$,
to a family of Riccati equations, which are easily integrated further.
The corresponding solutions of the system~\eqref{eq:BLPsystem}
are expressed in terms of elementary functions
and can in addition be simplified by scaling transformations from~$G$,
\begin{gather*}
\solution
u=-\frac{\varepsilon_1}{x\tau}-\frac{\varepsilon_1}{2x},\quad
v=\frac{1-\varepsilon_1}{4y\tau}+\frac{1-2\varepsilon_1}{16y},
\\
\solution
u=\frac{\varepsilon_1}x\kappa\tan(\kappa\tau)-\frac{\varepsilon_1}{2x},\quad
v=\frac{1-\varepsilon_1}{4y}\kappa\tan(\kappa\tau)+\frac{1-2\varepsilon_1-4\kappa^2}{16y},
\\
\solution
u=-\frac{\varepsilon_1}x\kappa
\frac{{\rm e}^{2\kappa\tau}-\nu}{{\rm e}^{2\kappa\tau}+\nu}
-\frac{\varepsilon_1}{2x},\quad
v=\frac{1-\varepsilon_1}{2y}\frac{\kappa\nu}{{\rm e}^{2\kappa\tau}+\nu}
+\frac{1-2\varepsilon_1+4\kappa(\kappa+1-\varepsilon_1)}{16y},
\end{gather*}
where $\varepsilon_1=\pm1$, $\tau=\ln|x|+\frac12\ln|y|$,
and $\kappa$ and $\nu$ are arbitrary constants.
All these solutions with $\varepsilon_1=1$ as well as the solutions,
where $\tilde\varphi=\const$, satisfy the differential constraint $v_x=0$
and can thus be considered as trivial.

The performed integrations can be interpreted in terms of first integrals of the entire reduced system~2.2.
The counterparts of the above integrating factors of single differential equations for it
are the tuples $(1,0)$, $(\omega^2,0)$ and $\big(\psi,\,\omega\varphi_\omega+\varphi\big)=\big(\psi,\,\tilde\varphi_\omega\big)$,
which constitute a basis of the space of its integrating factors
that depend at most on $(\omega,\varphi,\psi,\varphi_\omega,\psi_\omega)$.

\medskip\par\noindent{\bf 2.3.}\
This reduced system is overdetermined.
It is inconsistent if $\delta=1$.
If $\delta=0$, then it has the trivial solution
$\varphi=1/2$, $\psi=\psi(\omega)$,
where~$\psi$ is an arbitrary sufficiently smooth function of~$\omega:=y$,
which can be set to zero modulo $\{\mathscr Z(V^0)\}$.
In Section~\ref{sec:SolutionsByMethodOfDiffConstraints}, we have considered
the solutions of the system~\eqref{eq:BLPsystem} with the differential constraint $v_x=0$,
which have been expressed via solutions of the equation~\eqref{eq:BackwardHeatEqWithPot}.
The solutions invariant with respect to the algebra~$\mathfrak s_{2.3}^0$
satisfy this constraint and thus are among solutions constructed under this constraint.
Indeed, the value $\varphi=1/2$ corresponds to $u=1/(2x)$, which is related
via the Hopf--Cole transformation to the solution $\Phi=|x|^{-1/2}$
of the heat equation~\eqref{eq:BackwardHeatEqWithPot} with the potential $H=3x^{-2}/4$.

\medskip\par\noindent{\bf 2.4.}\
${\rm N}_{\mathfrak g}(\mathfrak s_{2.4}^1)=\langle D(2t)+S(y),\,S(1)-P(1),\,P(|t|^{1/2})\rangle$.
The maximal Lie invariance algebra of reduced system~$2.4$ is
$\mathfrak a_{2.4}=\langle\p_\omega-\frac14\varepsilon\p_\varphi\rangle$,
and its basis vector field $\p_\omega-\frac14\varepsilon\p_\varphi$ is induced
by the Lie-symmetry vector field~\smash{$P(|t|^{1/2})$} of the original system~\eqref{eq:BLPsystem}.

It is obvious that the first equation of reduced system~2.4 admits two integrating factors, 1 and~$\omega$,
and we can integrate this equation twice, obtaining
\begin{gather}\label{eq:BLPSystemReduced2.4Eq1Integrated}
2(\varphi^2-\varphi_\omega)+4\psi_\omega+\ve\omega\varphi=C_2\omega+C_1.
\end{gather}
The integration constant~$C_2$ can be set to be equal zero
using a transformation $\omega=\hat\omega-4C_2$, $\varphi=\hat\varphi+\varepsilon C_2$, $\psi=\hat\psi$
from the one-parameter group generated
by the induced Lie-symmetry vector field $\p_\omega-\frac14\varepsilon\p_\varphi$
and re-denoting the arbitrary constant~$C_1$.
In other words, up to the $G$-equivalence the constant~$C_2$ is inessential and thus assumed to be equal zero below.
We solve the equation~\eqref{eq:BLPSystemReduced2.4Eq1Integrated} with respect to~$\psi_\omega$,
substitute the derived expression into the second equation of reduced system~2.4
and solve the resulting equation with respect to~$\psi$,
\begin{gather}\label{eq:BLPSystemReduced2.4ExpressionForPsi}
\psi=-\varepsilon\varphi_{\omega\omega}+2\varepsilon\varphi^3+\frac32\omega\varphi^2
+\left(\frac14\varepsilon\omega^2-\varepsilon C_1+\frac12\right)\varphi-\frac14C_1\omega.
\end{gather}
Substituting the expression~\eqref{eq:BLPSystemReduced2.4ExpressionForPsi} for~$\psi$
into the equation~\eqref{eq:BLPSystemReduced2.4Eq1Integrated}, we obtain
a single third-order ordinary differential equation for~$\varphi$,
\begin{gather}\label{eq:BLPSystemReduced2.4EqForPhi}
\varphi_{\omega\omega\omega}
-\left(6\varphi^2+3\varepsilon\omega\varphi+\frac14\omega^2-C_1\right)\varphi_\omega
-2\varepsilon\varphi^2-\frac34\omega\varphi+\frac12\varepsilon C_1=0,
\end{gather}
which belongs the Chazy VIII class \cite[Eq.~(A11)]{cosg2000a}.
This equation admits the integrating factor $\varphi+\frac12\varepsilon\omega$.
The corresponding integrated equation is
\begin{gather}\label{eq:BLPSystemReduced2.4IntEqForPhi}
\begin{split}
&\left(\varphi+\frac12\varepsilon\omega\right)\varphi_{\omega\omega}
 -\frac12\varphi_\omega^{\,2}-\frac12\varepsilon\varphi_\omega
 -\frac32\varphi^4-2\varepsilon\omega\varphi^3-\left(\frac78\omega^2-\frac12C_1\right)\varphi^2
\\
&\qquad
-\frac18\varepsilon\omega(\omega^2-4C_1)\varphi+\frac18C_1\omega^2=C_0.
\end{split}
\end{gather}
The change of the unknown function
\begin{gather}\label{eq:BLPSystemReduced2.4ChangeOfVars}
\varphi(\omega)=-\frac12\varepsilon\big(\tilde\varphi(\tilde\omega)+\omega\big)
\quad\mbox{with}\quad
\tilde\omega=\frac12\omega   
\end{gather}
reduces the equation~\eqref{eq:BLPSystemReduced2.4IntEqForPhi}
to the fourth Painlev\'e equation
\begin{gather}\label{eq:BLPSystemReduced2.4EqPainleve}
\tilde\varphi\tilde\varphi_{\tilde\omega\tilde\omega}
=\frac12\tilde\varphi_{\tilde\omega}^{\,\,2}
+\frac32\tilde\varphi^4+4\tilde\omega\tilde\varphi^3+2(\tilde\omega^2-C_1)\tilde\varphi^2+\tilde C_0,
\end{gather}
where the arbitrary constant~$C_0$ is replaced by the arbitrary constant~$\tilde C_0$,
which are related by $\tilde C_0=16C_0-2$.
Thus, \emph{the general solution of reduced system~2.4 is expressed in terms of the fourth Painlev\'e transcendent
via the change of variables~\eqref{eq:BLPSystemReduced2.4ChangeOfVars} and the formula~\eqref{eq:BLPSystemReduced2.4ExpressionForPsi}},
cf.\ \cite[Section~5.1.2]{paqu1990a}.
Note that another way of reducing the equation~\eqref{eq:BLPSystemReduced2.4EqForPhi} to the equation~\eqref{eq:BLPSystemReduced2.4EqPainleve},
which is followed in \cite[Section~5.1.2]{paqu1990a},
is to first change variables according~\eqref{eq:BLPSystemReduced2.4ChangeOfVars} and
then integrate the resulting equation once using the correspondingly transformed integrating factor,
see also \cite[p.~74, Eq.~(35.11)]{bure1964b}
and \cite[Section~8]{cosg2000a}.

This result can be interpreted in terms of first integrals of the entire reduced system~2.4.
The counterparts of the above integrating factors of single differential equations for this system
are the tuples $(1,0)$, $(\omega,0)$
and $\big(\psi,\,\varphi_\omega+\frac12\varepsilon\big)$.
It turns out that they constitute a basis of
the space of integrating factors of reduced system~2.4
that depend at most on $\omega$, $\varphi$, $\psi$, $\varphi_\omega$ and~$\psi_\omega$.
Equating each the associated first integrals to an arbitrary constant and solving the obtained equations
jointly with the second equation of reduced system~2.4 with respect to $(\varphi_{\omega\omega},\psi,\psi_\omega,\psi_{\omega\omega})$,
we in particular derive equations that are equivalent to
the equations~\eqref{eq:BLPSystemReduced2.4ExpressionForPsi} and~\eqref{eq:BLPSystemReduced2.4EqPainleve}.

The constructed family of solutions of the system~\eqref{eq:BLPsystem} can be represented in the form
\[
\solution
u=-\frac12\varepsilon|t|^{-1/2}\tilde\varphi(\tilde\omega)-\frac12t^{-1}(x+y),\quad v=|t|^{-1/2}\psi(\tilde\omega),
\]
where $\tilde\omega=\frac12|t|^{-1/2}(x+y)$,
the function~$\tilde\varphi$ is an arbitrary solution of the fourth Painlev\'e equation~\eqref{eq:BLPSystemReduced2.4EqPainleve}, and
\[
\psi=\frac18\tilde\varphi_{\tilde\omega\tilde\omega}
-\frac14\tilde\varphi^3-\frac34\tilde\omega\tilde\varphi^2
-\left(\frac12\tilde\omega^2-\frac12C_1+\frac14\varepsilon\right)\tilde\varphi
+\frac12(C_1-\varepsilon)\tilde\omega.
\]

\medskip\par
\noindent{\bf 2.9.}\
${\rm N}_{\mathfrak g}(\mathfrak s_{2.9}^{01})=\langle D(1),S(1),P(1),Z(1),D(2t)+S(y)\rangle$,
${\rm N}_{\mathfrak g}(\mathfrak s_{2.9}^{11})=\langle D(1),S(1),P(1),Z(1)\rangle$.
The maximal Lie invariance algebra of reduced system~$2.9$ is
\smash{$\mathfrak a_{2.9}^0=\big\langle\p_\omega,\p_\psi,\omega\p_\omega-\varphi\p_\varphi-\psi\p_\psi\big\rangle$} if $\delta=0$ and
\smash{$\mathfrak a_{2.9}^1=\big\langle\p_\omega,\p_\psi\big\rangle$} if $\delta=1$.
In both the cases, the basis vector fields~$\p_\omega$ and~$\p_\psi$
are induced by the Lie-symmetry vector fields~$P(1)$ and~$Z(1)$ of the original system~\eqref{eq:BLPsystem}, respectively,
and the basis vector field~$\omega\p_\omega-\varphi\p_\varphi-\psi\p_\psi$ of~$\mathfrak a_{2.4}^0$
is induced by the Lie-symmetry vector field~$D(2t)+S(y)$ of this system.

Reduced system~2.9 admits three independent integrating factors depending at most on
$(\omega,\varphi,\psi,\varphi_\omega,\psi_\omega)$,
which are $(1,0)$, $(\omega,0)$ and $(\psi,\,\varphi_\omega)$.
This fact underlies all the possible integrations related to this system.
Integrating the first equation of the reduced system twice with respect to the variable~$\omega$
and thus using the first two integrating factors,
we obtain the expression for~$\psi_\omega$ in terms of~$\varphi$, $2\psi_\omega=\varphi_\omega-\varphi^2-C_1\omega-C_0$.
Substituting this expression into the second equation of the system leads to the equation
\begin{gather}\label{eq:BLPsystemReducedSystem2.9SecondIntegral}
\varphi_{\omega\omega}=2\varphi^3+2(C_1\omega+C_0)\varphi+C_1+2\delta.
\end{gather}
We equate the first integral associated with the third integrating factor to an arbitrary constant~$C_2$
and successively exclude the derivatives~$\psi_{\omega\omega}$, $\psi_\omega$ and $\varphi_{\omega\omega}$
from the obtained equation using the above expression for~$\psi_\omega$
and the equation~\eqref{eq:BLPsystemReducedSystem2.9SecondIntegral}.
This leads to the equation
\begin{equation}\label{eq:BLPsystemReducedSystem2.9ThirdIntegral}
\varphi^2_\omega=(\varphi^2+C_1\omega+C_0)^2+4\delta\varphi+4C_1\psi+C_2.
\end{equation}

Further analysis depends on whether or not the constant~$C_1$ is zero.

\medskip\par\noindent$C_1=0.$
Then the equation~\eqref{eq:BLPsystemReducedSystem2.9ThirdIntegral}
is the first-order ordinary differential equation with respect to~$\varphi$,
which takes, after re-denoting $C_0^{\,\,2}+C_2$ by $C_2$, the form
\begin{equation}\label{eq:BLPsystemReducedSystem2.9SecondIntegralC1=0}
\varphi^2_\omega=F(\varphi):=\varphi^4+2C_0\varphi^2+4\delta\varphi+C_2.
\end{equation}
It can also be derived via multiplying the equation~\eqref{eq:BLPsystemReducedSystem2.9SecondIntegral}
by $2\varphi_\omega$ and integrating it once with respect to~$\omega$.
For each fixed value of the parameter tuple $(C_0,\delta,C_2)$,
the general solution of the equation~\eqref{eq:BLPsystemReducedSystem2.9SecondIntegralC1=0}
is expressed via its particular nonconstant solution, where the argument~$\omega$ is shifted
with one more arbitrary constant~$C_3$.
Since the shifts of~$\omega$ are induced by shifts of~$x$ (or $y$) and
we present solutions of the system~\eqref{eq:BLPsystem} up to the $G$-equivalence,
this constant is neglected below.
\looseness=-1

Consider a general value of $(C_0,\delta,C_2)$, for which $F$ has no multiple roots.
The corresponding equation~\eqref{eq:BLPsystemReducedSystem2.9SecondIntegralC1=0} is solvable
in terms of Jacobi elliptic functions or, even more conveniently, in terms of Weierstrass elliptic functions,
see Section~\ref{sec:IntegrationOfODEsRelatedToEllipticFunctions}.
Let $\wp(\omega)=\wp(\omega;g_2,g_3)$ denote the Weierstrass elliptic function,
where
\begin{gather}\label{eq:BLPsystemReducedSystem2.9g2g3}
g_2=C_2+\frac13C_0{}^{\!2},\quad
g_3=\frac13C_0C_2-\frac1{27}C_0{}^{\!3}-\delta^2
\end{gather}
are the invariants of the quartic polynomial $F$ with the indeterminate~$\varphi$.
If at least one root~$\lambda$ of~$F$ is real and is known,
then the equation~\eqref{eq:BLPsystemReducedSystem2.9SecondIntegralC1=0} has
the particular (nonconstant) solution
\[
\varphi=\frac14F_\varphi(\lambda)\left(\wp(\omega)-\frac1{24}F_{\varphi\varphi}(\lambda)\right)^{-1}+\lambda.
\]
We can avoid working with roots of the polynomial $F$
using the Weierstrass representation~\cite[Section~20.6, Example~2]{whit1927A} 
for a particular (nonconstant) solution of~\eqref{eq:BLPsystemReducedSystem2.9SecondIntegralC1=0},
\begin{gather}\label{eq:SolutionOfeqForEllipticFunctions}
\varphi=6\frac
{48\sqrt{F(a)}\,\wp_\omega(\omega)+F_\varphi(a)\big(24\wp(\omega)-F_{\varphi\varphi}(a)\big)+48a F(a)}
{\big(24\wp(\omega)-F_{\varphi\varphi}(a)\big)^2-144F(a)}+a,
\end{gather}
where $a$ is a constant with $F(a)\geqslant0$, which can be selected in order to better express this solution.
It is obvious that such a constant always exists.
Thus, if $C_2\geqslant0$, then we can set~$a=0$, and the expression for~$\varphi$ takes the simpler form
\begin{gather}\label{eq:SimplifiedSolutionOfeqForEllipticFunctions}
\varphi=6\frac{3C_2^{1/2}\wp_\omega(\omega)+\delta(6\wp(\omega)-C_0)}{(6\wp(\omega)-C_0)^2-9C_2}.
\end{gather}
Otherwise, the polynomial~$F$ has a positive root, and hence we can choose a value of~$a$
that exceeds the maximum positive root of~$F$, i.e., it suffices to take $a\geqslant2|C_0|+|C_2|$.
The particular case of~\eqref{eq:BLPsystemReducedSystem2.9SecondIntegralC1=0} with $C_0=\delta=0$ was considered in~\cite{kudr2011b}.

Up to the $G$-equivalence, the corresponding family of solutions of the system~\eqref{eq:BLPsystem} is
\[
\solution
u=\varphi(\omega),\quad v=-\frac12\int_{\omega_0}^\omega\big(\varphi(\omega')\big)^2{\rm d}\omega'+\frac12\varphi(\omega)-\frac12C_0\omega+\delta t,
\]
where $\omega=x+y$, $\delta\in\{0,1\}$,
the function~$\varphi$ satisfies the equation~\eqref{eq:BLPsystemReducedSystem2.9SecondIntegralC1=0} with
an arbitrary value of $(C_0,\delta,C_2)$, for which $F$ has no multiple roots.
Thus, the function~$\varphi$ is, e.g., of the form~\eqref{eq:SolutionOfeqForEllipticFunctions},
and then the corresponding expression for~$v$ takes, up to shifts of~$v$, the form
\[
v:=3\frac{4\wp_\omega(\omega)+4a\sqrt{F(a)}+F_\varphi(a)}{24\wp(\omega)-F_{\varphi\varphi}(a)-12\sqrt{F(a)}}
+\zeta(\omega)-\frac{C_0}3\omega+\delta t.
\]
Here $\zeta(\omega):=\zeta(\omega;g_2,g_3)$ denotes the Weierstrass $\zeta$-function
with the same values~\eqref{eq:BLPsystemReducedSystem2.9g2g3} of the invariants~$g_2$ and~$g_3$
as those in the Weierstrass elliptic function~$\wp(\omega)=\wp(\omega;g_2,g_3)$,
$\zeta(\omega)_\omega=\wp(\omega)$.
If $C_2\geqslant0$ and the simplified expression~\eqref{eq:SimplifiedSolutionOfeqForEllipticFunctions} for~$\varphi$
is used, then the expression for simplifies to
\[
v:=3\frac{\wp_\omega(\omega)+\delta}{6\wp(\omega)-3\sqrt{C_2}-C_0}
+\zeta(\omega)-\frac{C_0}3\omega+\delta t.
\]

If $F(\varphi)$ has multiple roots, then
all solutions of the differential equation~\eqref{eq:BLPsystemReducedSystem2.9SecondIntegralC1=0}
are expressed in terms of elementary functions.
For finding such solutions, we use, as ansatzes,
the solutions of equations of the more general form~\eqref{eq:EllipticODEs},
which are presented in Section~\ref{sec:IntegrationOfODEsRelatedToEllipticFunctions}.
Up to the $G$-equivalence, the corresponding nontrivial solutions of the system~\eqref{eq:BLPsystem} are exhausted by
\begin{gather*}
\solution
u=\frac1{\omega-1}-\frac1{\omega+1}+\frac12,\quad
v=\frac1{\omega-1}+\frac{\omega+t}4,
\\
\solution
u=\frac{4{\rm e}^\omega}{4({\rm e}^\omega+\kappa)^2-1}-\kappa,\quad
v=-\frac{2\kappa-1}{2{\rm e}^\omega+2\kappa-1}+\frac{4\kappa^2-1}4(\omega-2\kappa t),
\\
\solution
u=\frac{\sin\nu}{\sin\omega+\cos\nu}+\frac12\cot\nu,\quad
v=\frac{1-\sin(\omega-\nu)}{2\cos(\omega-\nu)}+\frac{\omega+t\cot\nu}{4\sin^2\nu},
\end{gather*}
where $\omega=x+y$, and $\kappa$ and $\nu$ are arbitrary constants with $\sin\nu\ne0$.
We have not included the solutions
$(u,v)=(-\omega^{-1},0)$, $u=v=\omega^{-1}$,
$(u,v)=(\tan\omega,0)$ and $u=v=-\tan\omega$
in the above list
since each of them satisfies one of the differential constraints $v_x=0$ and $u_y=v_x$,
which are comprehensively studied in Section~\ref{sec:SolutionsByMethodOfDiffConstraints}.

\medskip\par\noindent$C_1\ne0.$
The point transformation
$\tilde\omega=(2C_1)^{1/3}(\omega+C_0/C_1)$, $\tilde\varphi=(2C_1)^{-1/3}\varphi$
reduces the equation~\eqref{eq:BLPsystemReducedSystem2.9SecondIntegral} to the second Painlev\'e equation
\begin{gather}\label{eq:SecondPainleveEq}
\tilde\varphi_{\tilde\omega\tilde\omega}=2\tilde\varphi^3+\tilde\omega\tilde\varphi+\tilde\nu,
\mbox{\quad where\quad}\tilde\nu:=\frac12+\frac\delta{C_1}.
\end{gather}

As a result, we construct the following family of solutions of the system~\eqref{eq:BLPsystem}:
\[
\solution
u=\varphi(\omega),\quad
v=\frac1{4C_1}\Big(\varphi^2_\omega(\omega)-\big(\varphi^2(\omega)+C_1\omega+C_0\big)^2-4\delta\varphi(\omega)-C_2\Big)+\delta t,
\]
where $\omega=x+y$, $\delta\in\{0,1\}$,
$\varphi(\omega)=(2C_1)^{1/3}\tilde\varphi(\tilde\omega)$ with $\tilde\omega=(2C_1)^{1/3}(\omega+C_0/C_1)$
and $\tilde\varphi$ satisfying the second Painlev\'e equation~\eqref{eq:SecondPainleveEq},
and up to the $G$-equivalence $C_0=C_2=0$ and, if $\delta=0$, $C_1=1$.

\medskip

Summing up both the cases $C_1=0$ and $C_1\ne0$,
we conclude that
\emph{all the solutions of reduced systems~2.9 are expressed in terms of
Weierstrass elliptic functions (resp.\ elementary functions in degenerate cases) or the second Painlev\'e transcendent}.

\begin{remark}\label{rem:BLPsystemSolutionsOfPainleveEqsInElementaryFunctions}
There are many known particular Painlev\'e equations admitting at least particular solutions
that are expressed in terms of elementary or classical special functions,
see, e.g.,~\cite{clar2006a} and references therein.
In this way, reductions~2.2, 2.4 and 2.9 gives families of solutions of the system~\eqref{eq:BLPsystem}
in such terms.
\end{remark}

\section{On equivalent forms of the Boiti--Leon--Pempinelli system}\label{sec:BLPsystemEquivForms}

There are other equivalent forms of the Boiti--Leon--Pempinelli system in the literature,
which are nonlocally related to the forms~\eqref{eq:BLPsystem} and~\eqref{eq:BLPsystemOriginal}.
One of such forms is
\begin{equation}\label{eq:BLPsystemYurovForm}
\begin{split}
&u_t=(u^2-u_x)_x+2q_{xx},\\
&q_{ty}=(q_{xy}+2uq_y)_x,
\end{split}
\end{equation}
which coincides with the system~(2) in~\cite{yuro1999a}
up to notation and alternating the sign of~$t$.
(The system~\eqref{eq:BLPsystemOriginal} analogously coincides with the system~(1) in~\cite{yuro1999a}.)
The system~\eqref{eq:BLPsystemOriginal} is related to the systems~\eqref{eq:BLPsystem} and \eqref{eq:BLPsystemYurovForm}
via the differential substitutions $w=v_x$ and $w=q_y$, respectively.
To derive the system~\eqref{eq:BLPsystem} from the system~\eqref{eq:BLPsystemOriginal},
one should substitute $v_x$ for~$w$ and then integrate the second equation with respect to~$x$.
The ``integration constant'', which is an arbitrary sufficiently smooth function of~$(t,y)$,
can be neglected due to the fact that $v$ is defined by the equation $v_x=w$
up to adding an arbitrary sufficiently smooth function of the same arguments.
The inverse transition can be performed via differentiating the second equation of~\eqref{eq:BLPsystem} with respect to~$x$
and then substituting~$w$ for~$v_x$.
Similarly, substituting $q_y$ for~$w$ into the system~\eqref{eq:BLPsystemOriginal}
and integrating the first obtained equation with respect to~$y$ result to the system~\eqref{eq:BLPsystemYurovForm}
since the indeterminateness of~$q$ up to adding an arbitrary sufficiently smooth function of~$(t,x)$
allows for neglecting an arbitrary function of the same arguments that arises in the course of the integration.
Hence for the transition from the system~\eqref{eq:BLPsystemYurovForm} back to the system~\eqref{eq:BLPsystemOriginal}
it suffices to differentiate the first equation of~\eqref{eq:BLPsystemYurovForm} with respect to~$y$
and then substitute~$w$ for~$q_y$.
Given a solution $(u,q)$ the system~\eqref{eq:BLPsystemYurovForm},
the $v$-component of the corresponding solution of the systems~\eqref{eq:BLPsystem} is found, $\bmod\{\mathscr Z(V^0)\}$,~by\looseness=1
\[
v=\int_{x_0}^xq_y(t,x',y)\,{\rm d}x'+\int_{t_0}^t(q_{xy}+2uq_y)\big|_{(t',x_0,y)}{\rm d}t'.
\]

The maximal Lie invariance algebra~$\hat{\mathfrak g}$ of the system~\eqref{eq:BLPsystemOriginal}
can be obtained from the maximal Lie invariance algebra~$\mathfrak g$ of the system~\eqref{eq:BLPsystem}
by prolonging to~$w$ according to the equality $w=v_x$
and then projecting from the space with the coordinates~$(t,x,u,v,w)$
onto the space with the coordinates~$(t,x,u,w)$.
Thus, $\hat{\mathfrak g}=\langle\hat D(f),\hat S(\alpha),\hat P(g)\rangle$, where
\[
\begin{split}
&\hat D(f)=f\p_t+\tfrac12f_tx\p_x-\left(\tfrac12f_tu+\tfrac14f_{tt}x\right)\p_u-\tfrac12f_tw\p_w,\quad
 \hat S(\alpha)=\alpha\p_y-\alpha_y w\p_w,\\
&\hat P(g)=g\p_x-\tfrac12g_t\p_u,\quad
\end{split}
\]
and the parameter functions $f=f(t)$, $g=g(t)$ and $\alpha=\alpha(y)$ run through the sets of smooth functions of their arguments.
Then the maximal Lie invariance algebra~$\breve{\mathfrak g}$ of the system~\eqref{eq:BLPsystemPWform}
can be derived from the algebra~$\hat{\mathfrak g}$ via computing the $(\breve u,\breve w)$-components
in view of the invertible differential substitution $\breve u=-2u$, $\breve w=4w-1-2u_y$.
As a result, we have $\breve{\mathfrak g}=\langle\breve D(f),\breve S(\alpha),\breve P(g)\rangle$
with the same notation of parameters, and
\[
\begin{split}
&\breve D(f)=f\p_t+\tfrac12f_tx\p_x-\tfrac12\left(f_t\breve u-f_{tt}x\right)\p_{\breve u}-\tfrac12f_t(\breve w+1)\p_{\breve w},\quad
 \breve S(\alpha)=\alpha\p_y-\alpha_y(\breve w+1)\p_{\breve w},\\
&\breve P(g)=g\p_x+g_t\p_{\breve u}.
\end{split}
\]
The algebra~$\hat{\mathfrak g}$ also agrees with the maximal Lie invariance algebra
$\check{\mathfrak g}$ of the system~\eqref{eq:BLPsystemYurovForm},
$\check{\mathfrak g}=\langle\check D(f),\check S(\alpha),\check P(g),\check Z^1(h^1),\check Z^2(h^2)\rangle$,
where
\[
\begin{split}
&\check D(f)=f\p_t+\tfrac12f_tx\p_x-\left(\tfrac12f_tu+\tfrac14f_{tt}x\right)\p_u
 -(\tfrac12f_tq+\tfrac1{48}f_{ttt}x^3)\p_q,\quad
 \check S(\alpha)=\alpha\p_y,\\
&\check P(g)=g\p_x-\tfrac12g_t\p_u-\tfrac18g_{tt}\p_q,\quad
 \check Z^1(h^1)=h^1\p_q,\quad
 \check Z^2(h^2)=h^2x\p_q,
\end{split}
\]
and the parameter functions $f=f(t)$, $g=g(t)$, $h^1=h^1(t)$, $h^2=h^2(t)$ and $\alpha=\alpha(y)$
run through the sets of smooth functions of their arguments.
More specifically, the algebra~$\hat{\mathfrak g}$ is obtained from the algebra~$\check{\mathfrak g}$
by prolonging to~$w$ according to the equality $w=q_y$
and then projecting from the space with the coordinates~$(t,x,u,w,q)$
onto the space with the coordinates~$(t,x,u,w)$.
In other words, all Lie symmetries of the systems~\eqref{eq:BLPsystemOriginal} and~\eqref{eq:BLPsystemYurovForm}
are quite expectable when knowing Lie symmetries of the system~\eqref{eq:BLPsystem}.

It is obvious that
the discrete point symmetry transformations~$\mathscr S(-y)$ and~$\mathscr I(-1)$ of the system~\eqref{eq:BLPsystem}
(see Corollary~\ref{cor:BLPSystemDiscretePointSymTrans}) are prolonged to the dependent variables $(w,\breve u,\breve w,q)$
in a pointwise~way,
\begin{align*}
\mathscr I(-1)\colon\ &(t,x,y,u,v,w,\breve u,\breve w,q)\mapsto(t,-x,y,-u,v,-w,-\breve u,-\breve w-2,-q),\\
\mathscr S(-y)\colon\ &(t,x,y,u,v,w,\breve u,\breve w,q)\mapsto(t,x,-y,u,-v,-w, \breve u,-\breve w-2, q).
\end{align*}
Each of the equivalent forms~\eqref{eq:BLPsystem}, \eqref{eq:BLPsystemOriginal}, \eqref{eq:BLPsystemPWform} and~\eqref{eq:BLPsystemYurovForm}
of the Boiti--Leon--Pempinelli system admits the projections of these prolonged transformations
to the space coordinatized by the corresponding independent and dependent variables
as its discrete point symmetry transformations.
It can be expected that a complete list of discrete point symmetry transformations of the system~\eqref{eq:BLPsystemPWform}
that are independent up to combining with each other and with continuous point symmetry transformations of this system
is exhausted by three commuting involutions, which are
the two described projections of the prolonged transformations~$\mathscr S(-y)$ and~$\mathscr I(-1)$
and the transformation~$\mathscr J$ simultaneously alternating the signs of~$t$ and~$\breve u$,
In~\cite{paqu1990a}, only two independent discrete point symmetry transformations of~\eqref{eq:BLPsystemPWform}
were indicated,~$\mathscr J$ and the composition $(t,x,y,\breve u,\breve w)\mapsto(t,-x,-y,-\breve u,\breve w)$ of the above projections.
The counterparts of the transformation~$\mathscr J$
for the system~\eqref{eq:BLPsystem}, \eqref{eq:BLPsystemOriginal} and~\eqref{eq:BLPsystemYurovForm} are not too simple;
in particular, the transformation~$\mathscr J$ cannot be prolonged to~$v$ and~$w$ in a point way
and, moreover, the prolongation to~$v$ is nonlocal and needs the construction of a covering of the system~\eqref{eq:BLPsystem},
\[
\mathscr J\colon\ (t,x,y,u,v,w,\breve u,\breve w,q,\varphi)\mapsto(-t,x,y,-u,v-\varphi_y,w-u_y,-\breve u,\breve w,q-u,-\varphi),
\]
where $\varphi_x=u$ and $\varphi_{ty}=(u^2-u_x)_y+2x_{xx}$.

\section{Laplace and Darboux transformations}\label{sec:LaplaceAndDarbouxTrans}

It is convenient to present
the known Lax pair of the Boiti--Leon--Pempinelli system as well as the corresponding Laplace and Darboux transformations
for the form~\eqref{eq:BLPsystemYurovForm} of this system \cite[Section~2]{yuro1999a}.
Thus, the Lax representation~\eqref{eq:BLPCovering} takes, in terms of~$(u,q)$, the form
\begin{gather}\label{eq:BLPCoveringYurov}
\psi_{xy}+u\psi_y+q_y\psi=0,\quad \psi_t+\psi_{xx}+2q_x\psi=0.
\end{gather}

\subsection{Generating new solutions with Laplace transformations}\label{sec:LaplaceTrans}

In terms of~$(u,q)$, one has the following forward and inverse Laplace transformations for the Boiti--Leon--Pempinelli system:
\begin{gather}\label{eq:BLPLaplaceTrans}
\tilde \psi=\frac{\psi_y}{q_y},\quad
\tilde u=u+\frac{q_{xy}}{q_y},\quad
\tilde q=q+u+\frac{q_{xy}}{q_y},
\\[1ex] \label{eq:BLPLaplaceTransInverse}
\tilde\psi=\psi_x+u\psi,\quad
\tilde u=u-\frac{q_{xy}-u_{xy}}{q_y-u_y},\quad
\tilde q=q-u.
\end{gather}
Explicit expressions for the iterated Laplace transformations were presented in \cite[Section~3]{yuro1999a}.

The Laplace transformations~\eqref{eq:BLPLaplaceTrans} and~\eqref{eq:BLPLaplaceTransInverse}
lead to two ways of generating new solutions from known ones
for the Boiti--Leon--Pempinelli system in the form~\eqref{eq:BLPsystem},
\begin{gather}\label{eq:BLPLaplaceTransUV}
\tilde u=u+\frac{v_{xx}}{v_x},\quad
\tilde v=v+\frac{v_{xy}}{v_x}+\int_{x_0}^xu_y\big|_{(t,x',y)}{\rm d}x'+\int_{t_0}^t\big((u^2-u_x)_y+2v_{xx}\big)\big|_{(t',x_0,y)}{\rm d}t',
\\[1.5ex] \label{eq:BLPLaplaceTransInverseUV}
\tilde u=u-\frac{u_{xy}-v_{xx}}{u_y-v_x},\quad
\tilde v=v-\int_{x_0}^xu_y\big|_{(t,x',y)}{\rm d}x'-\int_{t_0}^t\big((u^2-u_x)_y+2v_{xx}\big)\big|_{(t',x_0,y)}{\rm d}t'.
\end{gather}

Acting on the exact solutions found in Sections~\ref{sec:SolutionsByMethodOfDiffConstraints}, \ref{sec:LieReductionsOfCodim1} and~\ref{sec:LieReductionsOfCodim2}
by the Laplace transformations, it is easy to construct new solutions of the system~\eqref{eq:BLPsystem}.
Indeed, let us apply the forward Laplace transformation~\eqref{eq:BLPLaplaceTransUV}
to the solutions~\eqref{eq:BLPsystemVxxxSol1}--\eqref{eq:BLPsystemVxxxSol5},
which are rational at least with respect to~$x$.
Thus, from~\eqref{eq:BLPsystemVxxxSol1} with $\delta=1$ and from~\eqref{eq:BLPsystemVxxxSol4},
we respectively obtain the families of solutions
\begin{gather*}
\begin{split}
\solution&u=-\frac12\frac{x+\alpha}{t+\beta}+\frac2{2(x+\alpha)+\gamma(t+\beta)},
\\[1ex]
&v=\frac{\beta_y+4}4\bigg(\frac{x+\alpha}{t+\beta}\bigg)^2\!+\frac{2\gamma-\alpha_y}2\frac{x+\alpha}{t+\beta}-\frac32\frac{4t+\beta_y}{t+\beta}
-\frac{2\gamma_y(x+\alpha)-2\alpha_y\gamma-\beta_y\gamma^2}{2\gamma(x+\alpha)+\gamma^2(t+\gamma)}\ \mbox{if}\ \gamma\ne0,
\\[1ex]
&v=\frac{\beta_y+4}4\bigg(\frac{x+\alpha}{t+\beta}\bigg)^2\!+\frac{2\gamma-\alpha_y}2\frac{x+\alpha}{t+\beta}-\frac32\frac{\beta_y+4}{t+\beta}
+\frac{\alpha_y}{x+\alpha}\ \ \mbox{if}\ \ \gamma=0,
\end{split}
\\[1.5ex]
\begin{split}
\solution&u=\alpha+\frac{2\alpha}{2\alpha(x+2\alpha t)+\gamma-1},
\\[.5ex]
&v=\alpha(x+2\alpha t)^2+\gamma(x+2\alpha t)+(\alpha_y-1)(x+t)+3t-
\frac{2\alpha_y(\alpha x+\gamma-1)-\alpha\gamma_y}{2\alpha^2(x+2\alpha t)+\alpha(\gamma-1)}
\end{split}
\end{gather*}
of the system~\eqref{eq:BLPsystem},
where $\alpha$, $\beta$ and $\gamma$ are sufficiently smooth functions of~$y$, and $\alpha\ne0$ in the second family.
Solutions from these families satisfy
none of the differential constraints considered in Section~\ref{sec:SolutionsByMethodOfDiffConstraints}
if, e.g., $\alpha_y\ne0$ in the first family and $\beta_y\ne0$, $\beta_y\ne-4$ and $\gamma_y\ne0$ under $\alpha\ne0$ in the second one.
For $\alpha=0$, the counterpart of the second family is a trivial subfamily of~\eqref{eq:BLPsystemVxxxSol4}.

In fact, not all transformed solutions are new.
In particular, the forward Laplace transformation maps
the solution~\eqref{eq:BLPsystemVxxxSol1}$_{\delta=0}$ to~\eqref{eq:BLPsystemVxxxSol1}$_{\delta=1}$
with $\tilde\gamma=2(2-\alpha_y)/\beta_y$ $(\!{}\bmod \{\mathscr S(Y)\})$,
the solution~\eqref{eq:BLPsystemVxxxSol3} to~\eqref{eq:BLPsystemVxxxSol1}$_{\delta=1}$
with $\tilde\gamma=-2(\alpha_y-4\gamma)/(\beta_y-4)$ $(\!{}\bmod \{\mathscr S(Y)\})$,
the solution~\eqref{eq:BLPsystemVxxxSol5} to~\eqref{eq:BLPsystemVxxxSol4} with $\tilde\gamma=\alpha(\alpha_y+2\beta)$,
and the solution family~\eqref{eq:BLPsystemVxxxSol2} to itself with $\theta$ shifted by~$1$.

Analyzing the above generation of solutions by the forward Laplace transformation,
it becomes obvious that~\eqref{eq:BLPsystemVxxxSol1},~\eqref{eq:BLPsystemVxxxSol2} and~\eqref{eq:BLPsystemVxxxSol4}
do not leave, under the action of the inverse Laplace transformation,
the family of solutions satisfying the differential constraint $v_{xxx}=0$.
Indeed, the solution~\eqref{eq:BLPsystemVxxxSol1}$_{\delta=0}$ with $\beta_y\ne4$ and $\beta_y=4$
is mapped to~\eqref{eq:BLPsystemVxxxSol3} with $\tilde\gamma=-(\alpha_y+2\gamma)/(\beta_y-4)$
and to~\eqref{eq:BLPsystemVxxxSol1}$_{\delta=0}$ $(\!{}\bmod \{\mathscr S(Y)\})$, respectively,
whereas~\eqref{eq:BLPsystemVxxxSol3}$_{\delta=0}$ is mapped to~\eqref{eq:BLPsystemVxxxSol1} with $\tilde\gamma=-(2+\alpha_y)/\beta_y$.
The parameter function~$\theta$ in the solution~\eqref{eq:BLPsystemVxxxSol2} is shifted by~$-1$
and the solution~\eqref{eq:BLPsystemVxxxSol4} is transformed to~\eqref{eq:BLPsystemVxxxSol4} with $\tilde\beta=-(\alpha_y+1-\gamma)/2\alpha$.
The use of the inverse Laplace transformation is justified only for the solutions~\eqref{eq:BLPsystemVxxxSol3} and~\eqref{eq:BLPsystemVxxxSol5}
from the set~\eqref{eq:BLPsystemVxxxSol1}--\eqref{eq:BLPsystemVxxxSol5}
since only with them we go beyond the set of solutions satisfying the differential constraint $v_{xxx}=0$,
\begin{gather*}
\begin{split}
\solution&u=-\frac12\frac{x+\alpha}{t+\beta}-\frac1{\omega}-
\frac{(\beta_y-4)\omega^3-4\omega_y(t+\beta)^2}{\omega\big((\beta_y-4)\omega^3-(\alpha_y+\beta_y\gamma)(t+\beta)\omega^2+2\omega_y(t+\beta)^2\big)},
\\[1ex]
&v=-\frac{\beta_y-4}4\bigg(\frac{x+\alpha}{t+\beta}\bigg)^2+\frac{\alpha_y+4\gamma}{2}\frac{x+\alpha}{t+\beta}-
\frac32\frac{\beta_y-4}{t+\beta}+\frac{\alpha_y+\beta_y\gamma}\omega-\frac{\gamma_y}\gamma\frac{x+\alpha}\omega
\quad\mbox{if}\quad\gamma\ne0,
\\[1ex]
&v=-\frac{\beta_y-4}{4}\bigg(\frac{x+\alpha}{t+\beta}\bigg)^2+\frac{\alpha_y}2\frac{x+\alpha}{t+\beta}-\frac32\frac{\beta_y-4}{t+\beta}+\frac{\alpha_y}{x+\alpha}
\quad\mbox{if}\quad\gamma=0,
\\[1ex]
&\mbox{here}\ \ \omega:=x+\alpha+\gamma(t+\beta)\ \ \mbox{and}\ \ (\gamma,\alpha_y,\beta_y)\ne(0,0,4),\\
\end{split}
\\[1ex]
\solution u=\frac1x-\frac{2x}{x^2-2(t+4y)}-\frac12\frac x{t+4y}, \quad v=\frac12\frac x{t+4y}\quad\mbox{if}\quad(\gamma,\alpha_y,\beta_y)=(0,0,4),
\\[1ex]
\solution
u=\alpha+\frac{4\omega^3-\alpha_y\omega^2+2\alpha_yt+\beta_y}{\omega(\alpha_y\omega^2-2\omega^3+2\alpha_yt+\beta_y)},\quad
v=\omega^2-6t+\frac{2\alpha_yt+\beta_y}\omega\quad
\text{with}\ \omega:=x+2\alpha t+\beta.
\end{gather*}
Here $\alpha$, $\beta$ and $\gamma$ are again sufficiently smooth functions of~$y$
satisfying the indicated constraints.

The forward and inverse Laplace transformations~\eqref{eq:BLPLaplaceTransUV} and~\eqref{eq:BLPLaplaceTransInverseUV}
can be used iteratively for finding new solutions for the system~\eqref{eq:BLPsystem}.
In particular, the double forward Laplace transformation maps
the simple solution~\eqref{eq:DiffConsrU=V}, that satisfies the constraint $u=v$,
contains the arbitrary parameter function $\alpha$ of~$y$ and is rational with respect to $(t,x)$,
to a much more complicated solution of similar structure,
\begin{gather*}
\begin{split}
\solution u={}&\frac{2z}{z^2+\tau}-\frac z{\tau}-\frac1z,
\quad
v=\frac{2z-\alpha_y}{z^2+\tau}-\frac{\alpha_y}2\frac{z^2}{\tau^2}-2\frac z\tau+\frac32\frac{\alpha_y}{\tau},
\end{split}
\\[1ex]
\begin{split}
\solution u={}&-\frac{z}{\tau}-\frac{2z}{z^2+\tau}+\frac{2(z^2+\tau)(2z\alpha_y+3\tau)}{(z^4+2\tau z^2-\tau^2)\alpha_y+2z\tau(z^2+3\tau)},
\\
v={}&\frac{\alpha_{yy}-6}{\alpha_y-6z}+\frac{5\alpha_y-3z}{\tau}-\alpha_y\frac{z^2}{\tau^2}\\
&{}+\frac{2z^3(3z^2+7\tau)\alpha_{yy}+(2z^2-2\tau)\alpha_y^3-2z(7z^2-10\tau)\alpha_y^2+6z^2(z^2-11\tau)\alpha_y+24z^3\tau}
{\big((z^4+2\tau z^2-\tau^2)\alpha_y+2z\tau(z^2+3\tau)\big)(6z-\alpha_y)},
\end{split}
\\
\mbox{where}\ \ z:=x+y\ \ \mbox{and}\ \ \tau:=2t-\alpha.
\end{gather*}

In fact, it is more convenient to construct new solution families for the equivalent system~\eqref{eq:BLPsystemYurovForm}
using~\eqref{eq:BLPLaplaceTrans} or~\eqref{eq:BLPLaplaceTransInverse} instead of~\eqref{eq:BLPLaplaceTransUV} or~\eqref{eq:BLPLaplaceTransInverseUV},
thus avoiding the integration operation.

Consider the solution~\eqref{eq:BLPSolutionFamilyViaHeatEq2}, which satisfies the differential constraint $u_y=v_x$ and
is parameterized by an arbitrary nonzero solution $\Phi=\Phi(t,x,y)$ of the heat equation $\Phi_t-\Phi_{xx}+H\Phi=0$ with potential $H=H(t,x)$.
In view of the relation $v_x=q_y$, the differential constraint $u_y=v_x$ in terms of $(u,q)$ takes the form $u_y=q_y$.
The solution~\eqref{eq:BLPSolutionFamilyViaHeatEq2} of~\eqref{eq:BLPsystem} corresponds to
the solution $u=\Phi_x/\Phi$ and $q=\Phi_x/\Phi+L$ of~\eqref{eq:BLPsystemYurovForm},
where $L=L(t,x)$ is an antiderivative of $-H/2$ with respect to~$x$, $L_x=-H/2$.
The inverse Laplace transformation~\eqref{eq:BLPLaplaceTransInverse} is not defined for such solutions,
whereas the multiple application of the forward Laplace transformation~\eqref{eq:BLPLaplaceTrans} gives
the following recursive formulas for a chain of solutions:
\begin{gather*}
\solution u=\frac{\Phi^n_x}{\Phi^n}-\frac{\Phi^{n-1}_x}{\Phi^{n-1}},\quad q=\frac{\Phi^n_x}{\Phi^n}+L, \quad n\geqslant0,
\\
\hspace*{-\mathindent}\mbox{where}
\\[-1ex]
\Phi^{-1}:=1,\quad
\Phi^0:=\Phi,\quad
\Phi^n:=\frac{\Phi^{n-1}\Phi^{n-1}_{xy}-\Phi^{n-1}_x\Phi^{n-1}_y}{\Phi^{n-2}},\quad n\geqslant1,
\end{gather*}
$\Phi=\Phi(t,x,y)$ is an arbitrary nonzero solution of the equation $\Phi_t-\Phi_{xx}-2L_x\Phi=0$,
and $L$ is an arbitrary sufficiently smooth function of~$(t,x)$.
The $n$-fold forward Laplace transformation is defined for a function~$\Phi$ if and only if
all the recursively computed functions $\Phi^1$, \dots, $\Phi^n$ do not vanish.
In other words, for each~$i\in\{0,\dots,n\}$, the variables~$x$ and~$y$ are not separated in~$\Phi^i$,
i.e., $(\ln|\Phi^i|)_{xy}\ne0$.

The inverse Laplace transformation~\eqref{eq:BLPLaplaceTransInverse} can also be applied
for finding new solutions of the Boiti--Leon--Pempinelli system
that are parameterized by the general solutions of the heat equation with potential.
Indeed, the family of solutions~\eqref{eq:InhomPotentialBurgersEq},
which satisfy the differential constraint $v_x=0$,
takes the form $u=-\Phi_x/\Phi$, $q=L$ in terms of~$(u,q)$,
and the corresponding differential constraint is $q_y=0$.
Here the function $\Phi=\Phi(t,x,y)$ is an arbitrary nonzero solution of the equation $\Phi_t+\Phi_{xx}-H\Phi=0$
with potential $H=-2L_x$, and $L=L(t,x)$ an arbitrary sufficiently smooth function.
Then the $n$-fold inverse Laplace transformation for the above family is given~by
\begin{gather*}
\solution u=-\frac{\Phi^n_x}{\Phi^n},\quad q=S^n+L, \quad n\geqslant0,
\\
\hspace*{-\mathindent}\mbox{where}
\\[-1ex]
S^0:=0,\quad \Phi^0:=\Phi,\quad
S^n:=\sum_{i=0}^{n-1}\frac{\Phi^i_x}{\Phi^i},\quad
\Phi^n:=\Phi^{n-1}S^n_y,\quad n\geqslant1,
\end{gather*}
$\Phi=\Phi(t,x,y)$ is an arbitrary nonzero solution of the equation $\Phi_t+\Phi_{xx}+2L_x\Phi=0$,
$L$ is an arbitrary sufficiently smooth function of~$(t,x)$, and
all the recursively computed auxiliary functions~$\Phi^1$, \dots, $\Phi^n$ do not vanish.
Note that the forward Laplace transformation~\eqref{eq:BLPLaplaceTrans}
is not defined for the solutions with $q_y=0$.

\subsection{Generating new solutions with Darboux transformations}\label{sec:DarbouxTrans}

For each solution $(u,q)$ of the system~\eqref{eq:BLPsystemYurovForm} and
for each fixed nonzero solution~$\varphi$ of the corresponding system of the form~\eqref{eq:BLPCoveringYurov},
there are two types of Darboux transformations \cite[Section~3]{yuro1999a},
\begin{gather}\label{eq:BLPDT1}
{\rm DT}_1[\varphi]\colon\quad
\tilde\psi=\psi-\frac\varphi{\varphi_y}\psi_y,\quad
\tilde u=u+\frac{\varphi_{xy}}{\varphi_y}-\frac{\varphi_x}\varphi,\quad
\tilde q=q-\frac\varphi{\varphi_y}q_y,
\\[.5ex]\label{eq:BLPDT2}
{\rm DT}_2[\varphi]\colon\quad
\tilde\psi=\psi_x-\frac{\varphi_x}\varphi\psi,\quad
\tilde u=u-\frac{(u+\varphi_x/\varphi)_x}{u+\varphi_x/\varphi},\quad
\tilde q=q+\frac{\varphi_x}\varphi.
\end{gather}
It is easy to check by direct computation that such Darboux transformations ``commute''.
More specifically, given linearly independent solutions~$\varphi^1$ and~$\varphi^2$
of the system~\eqref{eq:BLPCoveringYurov} for a solution $(u,q)$ of the system~\eqref{eq:BLPsystemYurovForm},
one has
\begin{gather}\label{eq:BLPDTsCommute}
\begin{split}
&{\rm DT}_1\big[{\rm DT}_1[\varphi^1]\varphi^2\big]\circ{\rm DT}_1[\varphi^1]=
 {\rm DT}_1\big[{\rm DT}_1[\varphi^2]\varphi^1\big]\circ{\rm DT}_1[\varphi^2],
\\[.5ex]
&{\rm DT}_2\big[{\rm DT}_1[\varphi^1]\varphi^2\big]\circ{\rm DT}_2[\varphi^1]=
 {\rm DT}_2\big[{\rm DT}_1[\varphi^2]\varphi^1\big]\circ{\rm DT}_2[\varphi^2],
\\[.5ex]
&{\rm DT}_2\big[{\rm DT}_1[\varphi^1]\varphi^2\big]\circ{\rm DT}_1[\varphi^1]=
 {\rm DT}_1\big[{\rm DT}_1[\varphi^2]\varphi^1\big]\circ{\rm DT}_2[\varphi^2]
\end{split}
\end{gather}
whenever the involved expressions are defined.

Explicit expressions for the iterated Darboux transformations were presented in \cite[Section~3]{yuro1999a} as well.
Thus, given linearly independent solutions~$\varphi^1$, \dots, $\varphi^n$
of the system~\eqref{eq:BLPCoveringYurov} for a solution $(u,q)$ of the system~\eqref{eq:BLPsystemYurovForm},
the associated $n$-times iterated Darboux transformations of the first and the second kinds
respectively take the form
\begin{gather}
\label{eq:BLPnTimesIteratedDT1}
\begin{split}
{\rm DT}_1[\varphi^1,\dots,\varphi^n]\colon\quad
&\tilde\psi=(-1)^n\frac{\bar{\rm W}(\varphi^1,\dots,\varphi^n,\psi)}{\bar{\rm W}(\varphi^1_y,\dots,\varphi^n_y)},\quad
 \tilde u=u-\p_x\ln\left|\frac{\bar{\rm W}(\varphi^1,\dots,\varphi^n)}{\bar{\rm W}(\varphi^1_y,\dots,\varphi^n_y)}\right|,\\
&\tilde q=(-1)^n\frac{\bar{\rm W}(\varphi^1,\dots,\varphi^n,q)}{\bar{\rm W}(\varphi^1_y,\dots,\varphi^n_y)},
\end{split}
\\[1.5ex]
\label{eq:BLPnTimesIteratedDT2}
\begin{split}
{\rm DT}_2[\varphi^1,\dots,\varphi^n]\colon\quad
&\tilde\psi=\frac{{\rm W}(\varphi^1,\dots,\varphi^n,\psi)}{{\rm W}(\varphi^1,\dots,\varphi^n)},\quad
 \tilde u=\frac{q_yu-nq_{xy}+A^{n1}_{xy}-A^{n1}A^{n1}_y+A^{n2}_y}{q_y-A^{n1}_y},\!\!\!\\[.5ex]
&\tilde q=q-A^{n1},\\
&A^{n1}:=-\p_x\ln\left|{\rm W}(\varphi^1,\dots,\varphi^n)\right|,\quad
 A^{n2}:=\frac{{\rm W}(\varphi^1,\dots,\varphi^n,\psi)^{n-1}_{n+1}}{{\rm W}(\varphi^1,\dots,\varphi^n)},
\end{split}
\end{gather}
where $\bar{\rm W}$ and $\rm W$ denote the Wronskians of the indicated function tuples with respect to~$y$ and~$x$, respectively,
$\bar{\rm W}(f^1,\dots,f^k):=\det(\p_y^{i-1}f^j)_{i,j=1,\dots,k}$ and
$    {\rm W}(f^1,\dots,f^k):=\det(\p_x^{i-1}f^j)_{i,j=1,\dots,k}$,
${\rm W}(\varphi^1,\dots,\varphi^n,\psi)^{n-1}_{n+1}$ is the minor of the Wronskian ${\rm W}(\varphi^1,\dots,\varphi^n,\psi)$
obtained by deleting the $(n-1)$th row and the $(n+1)$th column,
$\bar{\rm W}(\varphi^1_y,\dots,\varphi^n_y)\ne0$ for the first transformation and
${\rm W}(\varphi^1,\dots,\varphi^n)\ne0$ for the second transformation.
(We have checked and enhanced the derivation of the iterated Darboux transformations that was carried out in \cite[Section~3]{yuro1999a}.
We have also modified the representation of these transformations.)

Given a solution $(u,q)$ of the system~\eqref{eq:BLPsystemYurovForm} and
linearly independent solutions~$\varphi^1$,~\dots, $\varphi^n$, $\chi^1$,~\dots, $\chi^m$
of the system~\eqref{eq:BLPCoveringYurov} with these~$u$ and~$q$,
it follows from the ``commutation'' property~\eqref{eq:BLPDTsCommute} that
the various compositions, in any order,
of the Darboux transformations of the first kind
that are constructed with the functions~$\varphi^1$, \dots, $\varphi^n$ and their relevant images and
of the Darboux transformations of the second kind
that are constructed with the functions~$\chi^1$,~\dots,~$\chi^m$ and their relevant images
lead to the same result.
There are two convenient representation for the joint transformations,
\begin{gather*}
{\rm DT}_2[\tilde\chi^1,\dots,\tilde\chi^m]\circ{\rm DT}_1[\varphi^1,\dots,\varphi^n],
\quad\mbox{where}\quad
\tilde\chi^j={\rm DT}_1[\varphi^1,\dots,\varphi^n]\chi^j,\quad j=1,\dots,m,
\\
{\rm DT}_1[\tilde\varphi^1,\dots,\tilde\varphi^n]\circ{\rm DT}_2[\chi^1,\dots,\chi^m],
\quad\mbox{where}\quad
\tilde\varphi^i={\rm DT}_2[\chi^1,\dots,\chi^m]\varphi^i,\quad i=1,\dots,n,
\end{gather*}
and explicit expressions can be derived via composing~\eqref{eq:BLPnTimesIteratedDT1} and~\eqref{eq:BLPnTimesIteratedDT2}.

The application of the Darboux transformations~\eqref{eq:BLPDT1} and~\eqref{eq:BLPDT2}
is not so easy as that of the Laplace transformations~\eqref{eq:BLPLaplaceTrans} or~\eqref{eq:BLPLaplaceTransInverse}
since the former relies on finding solutions~$\psi$
of the overdetermined linear system of partial differential equations~\eqref{eq:BLPCoveringYurov}.

An obvious simple choice for a seed solution of the Boiti--Leon--Pempinelli system is $u=q=0$
since the general solution of the corresponding system $\psi_{xy}=0$, $\psi_t+\psi_{xx}=0$
is presented in the form $\psi=\chi(t,x)+\alpha(y)$,
where $\chi=\chi(t,x)$ is the general solution of the backward linear heat equation, $\chi_t+\chi_{xx}=0$,
and $\alpha=\alpha(y)$ is an arbitrary sufficiently smooth function of~$y$.
Nevertheless, using the seed solution $u=q=0$ in
is not efficient since even its multifold version only leads to solutions that satisfy the constraint $q=0$
or, in terms of $(u,v)$, $v_x=0$.
Similarly, the multifold version of the Darboux transformation~\eqref{eq:BLPDT2} maps
the seed solution $u=q=0$ only to solutions that satisfy the differential constraint $u_y=q_y$
or, in terms of $(u,v)$, $u_y=v_x$.
In particular, the solutions of the form~(31) in \cite{yuro1999a} are exactly of this kind,
where $H=0$ in the corresponding linear heat equation~\eqref{eq:ForwardHeatEqWithPot}.
To construct solutions out of the sets of solutions satisfying the constraints $q=0$ or $u_y=q_y$,
one should combine the Darboux transformations~\eqref{eq:BLPDT1} and~\eqref{eq:BLPDT2}
or, more generally, their multifold versions.
The above claims on the seed solution $u=q=0$ follows from the presented below consideration
of arbitrary solutions in these sets as seed solutions for the Darboux transformations
since the zero solution belongs to the intersection of these sets.\looseness=-1

Let us discuss the usage of the solutions satisfying the differential constraint $q_y=0$ or $u_y=q_y$
as seed solutions for the Darboux transformations.

As shown above, every solution of the system~\eqref{eq:BLPsystemYurovForm} with the differential constraint $q_y=0$
takes the form $u=-\Phi_x/\Phi$, $q=L$, where $L=L(t,x)$ is an arbitrary sufficiently smooth function of~$(t,x)$,
and $\Phi=\Phi(t,x,y)$ is an arbitrary nonzero solution of the equation $\Phi_t+\Phi_{xx}+2L_x\Phi=0$.
After substituting this pair $(u,q)$ into the system~\eqref{eq:BLPCoveringYurov},
we integrate the first equation and use the second equation for deriving the equations
that should be satisfied by the functions $\zeta=\zeta(t,y)$ and $\theta=\theta(t,x)$
arising in the course of the integration, $\zeta_t=0$ and $\theta_t+\theta_{xx}+2L_x\theta=0$.
As a result, we derive the following representation for every solution of the system~\eqref{eq:BLPCoveringYurov}
with these values of~$u$ and~$q$:
\[
\psi=\frac1\Phi\smash{\int_{y_0}^y\zeta(y')\Phi(t,x,y')\,{\rm d}y'}+\theta(t,x),
\]
where $\zeta=\zeta(y)$ is an arbitrary sufficiently smooth function of~$y$,
and $\theta=\theta(t,x)$ is an arbitrary solution of the equation $\theta_t+\theta_{xx}+2L_x\theta=0$.

Analogously, every solution of the system~\eqref{eq:BLPsystemYurovForm} with the differential constraint $u_y=q_y$
admits the representation $u=\Phi_x/\Phi$ and $q=\Phi_x/\Phi+L$,
where $L=L(t,x)$ is an arbitrary sufficiently smooth function of~$(t,x)$,
and $\Phi=\Phi(t,x,y)$ is an arbitrary nonzero solution of the equation $\Phi_t-\Phi_{xx}-2L_x\Phi=0$.
The system~\eqref{eq:BLPCoveringYurov} with these values of~$u$ and~$q$ reduces,
via integrating the first equation with respect to~$y$ and recombining the equations,
to the system
\begin{gather}\label{eq:BLPCoveringYurovWithUy=QyReduced}
\psi_x=-u\psi+\theta,\quad
\psi_t=(u^2-u_x)\psi+\theta_x-\theta u,
\end{gather}
where $\theta=\theta(t,x)$ is the parameter function arising in the course of the integration.
The consistency of the last system with respect to~$\psi$ implies
that the function~$\theta$ satisfies the equation $\theta_t+\theta_{xx}+2L_x\theta=0$.
The further integration of the system~\eqref{eq:BLPCoveringYurovWithUy=QyReduced} leads to
the following representation for its solutions:
\begin{gather*}
\psi=\frac1\Phi\smash{\int_{x_0}^x\theta(t,x')\Phi(t,x',y)\,{\rm d}x'}+\frac1\Phi\smash{\int_{t_0}^t(\theta\Phi_x-\theta_x\Phi)\big|_{(t',x_0,y)}{\rm d}t'}+\frac\zeta\Phi,
\end{gather*}
where the description of the functions~$\zeta$ and~$\theta$ is the same as above.

The constructed representations for the solutions of systems of the form~\eqref{eq:BLPCoveringYurov} imply that
the solution set of the system~\eqref{eq:BLPsystemYurovForm} with the differential constraint $q_y=0$ (resp.\ $u_y=q_y$)
is preserved by the Darboux transformations of the first (resp.\ second) kind,
and the application of the iterated Darboux transformations of the second (resp.\ first) kind
gives two series of solution families of the system~\eqref{eq:BLPsystemYurovForm},
\begin{gather*}
\solution (\tilde u,\tilde q)\mbox{ of the form~\eqref{eq:BLPnTimesIteratedDT2},\quad where}\quad
u=-\smash{\frac{\Phi_x}\Phi},\quad q=L,\\
\varphi^i=\frac1\Phi\int_{y_0}^y\zeta^i(y')\Phi(t,x,y')\,{\rm d}y'+\theta^i(t,x),\quad i=1,\dots,n,\\
\mbox{$\Phi=\Phi(t,x,y)$ is an arbitrary nonzero solution of the equation $\Phi_t+\Phi_{xx}+2L_x\Phi=0$};
\\[1.5ex]
\solution (\tilde u,\tilde q)\mbox{ of the form~\eqref{eq:BLPnTimesIteratedDT1},\quad where}\quad
u=\frac{\Phi_x}\Phi,\quad q=\frac{\Phi_x}\Phi+L,\\[.5ex]
\varphi^i=\frac1\Phi\int_{x_0}^x\theta^i(t,x')\Phi(t,x',y)\,{\rm d}x'
+\frac1\Phi\int_{t_0}^t(\theta^i\Phi_x-\theta^i_x\Phi)\big|_{(t',x_0,y)}{\rm d}t'+\frac{\zeta^i}\Phi,\quad
i=1,\dots,n,\\
\mbox{$\Phi=\Phi(t,x,y)$ is an arbitrary nonzero solution of the equation $\Phi_t-\Phi_{xx}-2L_x\Phi=0$}.
\end{gather*}
In both the series, $\zeta^i=\zeta^i(y)$ are arbitrary sufficiently smooth functions of~$y$,
and $\theta^i=\theta^i(t,x)$ are arbitrary solutions of the equation $\theta_t+\theta_{xx}+2L_x\theta=0$
such that ${\rm W}(\varphi^1,\dots,\varphi^n)\ne0$ and ${\rm W}(\varphi^1_y,\dots,\varphi^n_y)\ne0$
for the first and second series, respectively,
and for the first series in addition $q_y\ne A^{n1}_y=-\p_x\p_y\ln\left|{\rm W}(\varphi^1,\dots,\varphi^n)\right|$.

\section{Conclusion}\label{sec:Conclusion}

In the course of proving Theorem~\ref{thm:PointSymPseudogroupOfBLPSystem},
we have computed the point-symmetry pseudogroup~$G$ of the system~\eqref{eq:BLPsystem}
using an original version of the algebraic method, which is based on pushing forward,
instead of the maximal Lie invariance algebra~$\mathfrak g$ of~\eqref{eq:BLPsystem},
its proper subalgebra~$\mathfrak s$.
In contrast to the algebra~$\mathfrak g$, which is infinite-dimensional,
the subalgebra~$\mathfrak s$ is finite-dimensional
and, moreover, its dimension $\dim\mathfrak s=9$ is not too great.
Since the method proved very efficient,
we have additionally analyzed certain aspects related to it in
Remark~\ref{rem:FurtherUseOfAlgMethod} and Proposition~\ref{pro:RespectingMegaideal}.
In particular, it was unexpected  that the conditions $\Phi_*\mathfrak g\subseteq\mathfrak g$  and $\Phi_*\mathfrak s\subseteq\mathfrak g$
for a point transformation~$\Phi$ are equivalent.

In the context of Remark~\ref{rem:FurtherUseOfAlgMethod} and Proposition~\ref{pro:RespectingMegaideal},
the consideration of Section~\ref{PointSymPseudogroup} has led to
several interesting questions
within the theory underlying the algebraic method of finding point-symmetry (pseudo)groups of systems of differential equations.
Let $\mathfrak g$ be the maximal Lie invariance algebra of a system~$\mathcal L$ of differential equations.
For a proper subalgebra~$\mathfrak s$ of $\mathfrak g$, consider the following property:
\begin{description}
\item[(P)]
The conditions $\Phi_*\mathfrak g\subseteq\mathfrak g$ and $\Phi_*\mathfrak s\subseteq\mathfrak g$
for point transformations~$\Phi$ in the space coordinatized by the independent and dependent variables of the system~$\mathcal L$ are equivalent.
\end{description}
What are (necessary or sufficient) conditions on~$\mathcal L$
for the existence of a proper subalgebra of $\mathfrak g$ with the property~(P)?
This question is especially relevant for systems with infinite-dimensional maximal Lie invariance algebras
and can even be strengthened for this particular case.
If~\mbox{$\dim\mathfrak g=\infty$}, does there necessarily exist a finite-dimensional subalgebra of $\mathfrak g$
with the property~(P)?
How can one estimate the dimensions of such subalgebras?
Can such subalgebras be found algorithmically?
Is there an easy method for selecting, among such subalgebras, a subalgebra of minimal dimension?

One more question -- about respecting megaideals -- can be formulated in a more general framework.
Given a Lie algebra~$\mathfrak g$ of vector fields on a (finite-dimensional) manifold~$M$,
a family $\{\mathfrak i_\gamma\mid\gamma\in\Gamma\}$ of megaideals of~$\mathfrak g$ with some index set~$\Gamma$
and $\mathfrak i_{\gamma_0}=\mathfrak g$ for some $\gamma_0\in\Gamma$,
and a proper subalgebra~$\mathfrak s$ of~$\mathfrak g$ with the property
\[
\{\Phi\in{\rm Diff}(M)\mid\forall\gamma\in\Gamma\colon\Phi_*(\mathfrak s\cap\mathfrak i_\gamma)\subseteq\mathfrak i_\gamma\}
=\{\Phi\in{\rm Diff}(M)\mid\Phi_*\mathfrak g\subseteq\mathfrak g\}.
\]
The question is whether all $\Phi\in{\rm Diff}(M)$ with $\Phi_*\mathfrak s\subseteq\mathfrak g$
respect the megaideals~$\mathfrak i_\gamma$, $\gamma\in\Gamma$, i.e.,
$\Phi_*(\mathfrak s\cap\mathfrak i_\gamma)\subseteq\mathfrak i_\gamma$ for all $\gamma\in\Gamma$,
and thus $\Phi_*\mathfrak i_\gamma\subseteq\mathfrak i_\gamma$ for all $\gamma\in\Gamma$.

Another interesting result of the present paper is the identification of
a number of famous partial and ordinary differential equations
among reductions of the Boiti--Leon--Pempinelli system~\eqref{eq:BLPsystem}
using differential constraints or Lie symmetries.
In Calogero's terminology~\cite{calo2017a},
some of these reductions are S-integrable, which means the solvability via the scattering transform,
but most of them are C-integrable, i.e., they are linearized by a change of variables,
although the original system~\eqref{eq:BLPsystem} is not C-integrable.
The Riccati (Hopf--Cole) transformation is especially relevant for reductions of this system.
Using various differential constraints, we have reduced the system~\eqref{eq:BLPsystem} to
\begin{itemize}\itemsep=0ex
\item[$\circ$]
the classical Burgers equation and, more generally,
general inhomogeneous Burgers equations and inhomogeneous potential Burgers equations,
which are respectively linearized by the Hopf--Cole transformation and by a point transformation
to (1+1)-dimensional linear heat equations with potentials,
\item[$\circ$]
the (1+1)-dimensional Liouville equation, which is linearized by a differential substitution to
the (1+1)-dimensional linear wave equation,
\item[$\circ$]
quite general families of Riccati equations, of Bernoulli equations and of Abel equations of the first kind,
where one of the independent variables plays the role of an implicit parameter,
\item[$\circ$]
ordinary differential equation for Weierstrass elliptic functions and
the second, third, fourth and fifth Painlev\'e equations,
\item[$\circ$]
the $\sinh$-Gordon and $\cosh$-Gordon equations, both of which are S-integrable,
\item[$\circ$]
the system of dispersive long-wave equations, which is S-integrable as well.
\end{itemize}

In the literature, there are a number of papers devoted to finding exact solutions of the system~\eqref{eq:BLPsystem}
by various methods,
where all the constructed solutions satisfy the differential constraint $u_y=v_x$,
and thus they can be easily derived from solutions of (1+1)-dimensional linear heat equations with potentials
via the two-dimensional version of the Hopf--Cole transformation.
In many papers, attempts to construct exact invariant solutions were made using weakened versions of the method of Lie reductions,
and again all found solutions turn out $G$-equivalent to simple members of families of solutions
presented in Section~\ref{sec:SolutionsByMethodOfDiffConstraints}.
In other words, these solutions satisfy one of the simplest differential constraints
that are consistent with the system~\eqref{eq:BLPsystem}.
Both the above two sets of papers include \cite{zhao2017a},
and we would not like to cite more such papers.
Moreover all or at least substantial part of solutions presented in them are not solutions at all, 
and other solutions are just different representations of known ones. 
This resulted from making common errors in finding exact solutions of nonlinear differential equations,
which were analyzed in~\cite{kudr2009a,popo2010c}. 
Multiple flaws of one of such papers were discussed in~\cite{kudr2011b}.
\looseness=1

As we have shown in Section~\ref{sec:LaplaceAndDarbouxTrans},
new large families of exact solutions of the Boiti--Leon--Pempinelli system
can efficiently be generated from known solutions of this system via
iteratively applying the forward and inverse Laplace transformations~\eqref{eq:BLPLaplaceTrans} and~\eqref{eq:BLPLaplaceTransInverse} or
the Darboux transformations~\eqref{eq:BLPDT1} and~\eqref{eq:BLPDT2}.
In particular, in Section~\ref{sec:DarbouxTrans}
we have constructed huge solution families of the Boiti--Leon--Pempinelli system.
Each of these families is parameterized by 
\begin{itemize}\itemsep=1ex
\item[$\circ$]
an arbitrary solution $\Phi=\Phi(t,x,y)$
of the (1+1)-dimensional linear heat equation with an arbitrary potential depending on $(t,x)$,
where~$t$ and~$x$ are the only genuine independent variables and $y$ plays the role of an implicit parameter,
\item[$\circ$]
an arbitrary (finite) number of functions of $(t,x)$ that are
arbitrary solutions of the same equation (resp.\ its adjoint), and
\item[$\circ$]
an arbitrary (finite) number of arbitrary sufficiently smooth functions of~$y$.
\end{itemize}

In group analysis of differential equations,
it is often required to classify low-dimensional subalgebras of infinite-dimensional Lie algebras of vector fields.
Nevertheless, there are not too many papers in which such classifications have correctly been carried out.
The main kinds of problems, where such classifications are necessary solution steps,
are given by classifications of Lie reductions of systems of differential equations with infinite-dimensional Lie invariance algebras
\cite{andr1998A,bihl2009a,cham1988a,davi1986a,fush1994a,fush1994b,mart1989a,opan2020a,paqu1990a}
and by group classifications of classes of such systems whose equivalence algebras are infinite-dimensional,
see \cite{bihl2012b,vane2020b} and references therein.
In the present paper, we have completely solved the problem of the former kind
for the Boiti--Leon--Pempinelli system~\eqref{eq:BLPsystem}.

We have also exhaustively integrated all inequivalent essential reduced systems of ordinary differential equations 
for the system~\eqref{eq:BLPsystem} 
in terms of the second, third, fourth and fifth Painlev\'e transcendents, 
Weierstrass $\wp$- and $\zeta$-functions and elementary functions, 
and the way of integration is nontrivial. 
This result well agrees with the known general observation~\cite{ablo1977b} on 
a connection between nonlinear partial differential equations 
that can be solved by the inverse scattering transform 
and nonlinear ordinary differential equations without movable critical points. 
So deep and complete analysis of all systems of ordinary differential equations 
related to a particular multidimensional system of partial differential equations 
with the framework of Lie reduction is rather exceptional in the literature. 
Similar studies were carried out 
for the (much simpler) three-wave resonant interaction system in~\cite{mart1989a} 
and, partly, for the system~\eqref{eq:BLPsystemPWform} in~\cite{paqu1990a}.

\appendix

\section[Integration of ordinary differential equations related to elliptic functions]
{Integration of ordinary differential equations\\ related to elliptic functions}\label{sec:IntegrationOfODEsRelatedToEllipticFunctions}

The general solution of the autonomous ordinary differential equation
\begin{gather}\label{eq:WeierstrassODEs}
\psi_z{}^{\!2}=P(\psi):=4\psi^3-\textsl{g}_2\psi-\textsl{g}_3
\end{gather}
for the unknown function $\psi=\psi(z)$ (over~$\mathbb C$)
with constant (complex) coefficients~$\textsl{g}_2$ and~$\textsl{g}_3$ is expressed
in terms of the (doubly periodic) Weierstrass elliptic function~$\wp(z;\textsl{g}_2,\textsl{g}_3)$
if and only if all the roots of the polynomial $P$ are simple,
$\psi=\wp(z+C;\textsl{g}_2,\textsl{g}_3)$, where $C$ is an arbitrary constant.
If the polynomial $P$ has a multiple root, i.e., its discriminant~$\mathop{\rm Disc}_\psi(P)$ is zero,
\[\Delta:=\mathop{\rm Disc}\nolimits_\psi(P)/16=\textsl{g}_2{}^{\!3}-27\textsl{g}_3{}^{\!2}=0,\]
then at least one of the two solution periods degenerates,
and the general solution of~\eqref{eq:WeierstrassODEs} is expressed in terms of elementary functions.
(This degeneration can be considered as the degeneration of the Weierstrass elliptic functions.)
In particular, the general solution of~\eqref{eq:WeierstrassODEs} with $\textsl{g}_2=\textsl{g}_3=0$
is $\psi=(z+C)^{-2}$, and thus, roughly speaking, both the periods are infinite.
See, e.g.,~\cite[Section~4.16]{zhur1941A} for more details. 

Consider the ordinary differential equations of the form
\begin{equation}\label{eq:EllipticODEs}
\varphi_z{}^{\!2}=F(\varphi):=a_0\varphi^4+4a_1\varphi^3+6a_2\varphi^2+4a_3\varphi+a_4,
\end{equation}
where $a_0$, \dots, $a_4$ are (complex) constants with $a_0\ne0$.

If the polynomial~$F$ has a simple root~$\lambda$,
then the equation~\eqref{eq:EllipticODEs} is reduced using the fractional linear change
\[
\varphi=\frac l{4\psi-m}+\lambda,\quad\mbox{where}\quad
l:=F_\varphi(\lambda),\quad
m:=\frac16F_{\varphi\varphi}(\lambda),
\]
of the dependent variable~$\varphi$ by $\psi$ \cite[Section~20.6]{whit1927A}
to the equation of the form~\eqref{eq:WeierstrassODEs},
where $\textsl{g}_2$ and $\textsl{g}_3$ are the invariants of the quartic polynomial~$F$ that are given by
\begin{gather}\label{eq:InvsOfQuarticPolynomial}
\textsl{g}_2=a_0a_4-4a_1a_3+3a_2{}^{\!2},\quad
\textsl{g}_3
=\begin{vmatrix}a_0&a_1&a_2\\a_1&a_2&a_3\\a_2&a_3&a_4\end{vmatrix}.
\end{gather}
Since $\mathop{\rm Disc}_\varphi(F)=16\mathop{\rm Disc}_\psi(P)=256\Delta$,
the polynomial $F$ has multiple roots if and only if the corresponding polynomial $P$ does.
In other words, the degenerate cases of equations of the form~\eqref{eq:EllipticODEs}
with~$F$ having simple roots
correspond to those of equations of the form~\eqref{eq:WeierstrassODEs}.

Therefore, if all the roots of the polynomial~$F(\varphi)$ are simple,
the general solution of the equation~\eqref{eq:EllipticODEs}
can be represented via the Weierstrass elliptic functions \cite[Section~20.6]{whit1927A},
\[
\varphi=\frac14F_\varphi(\lambda)\left(\wp(z+C;\textsl{g}_2,\textsl{g}_3)-\frac1{24}F_{\varphi\varphi}(\lambda)\right)^{-1}+\lambda,
\]
where~$\lambda$ is a (simple) root of~$F$,
$C$ is an arbitrary constant,
and the constant parameters~$\textsl{g}_2$ and~$\textsl{g}_3$ are the invariants~\eqref{eq:InvsOfQuarticPolynomial} of~$F$.

For our purpose, we integrate the degenerate equation~\eqref{eq:EllipticODEs} over~$\mathbb R$
for all the real values of the coefficients~$a_0$,~\dots,~$a_4$, neglecting the condition $a_0\ne0$.

If the polynomial~$F$ has a multiple real root~$\lambda$, then the equation~\eqref{eq:EllipticODEs} can be represented~as
\begin{gather}\label{eq:EllipticODEsWithMultipleRealRoot}
\varphi_z{}^{\!2}=(\varphi-\lambda)^2(\hat a\varphi^2+\hat b\varphi+\hat c)
\end{gather}
with $(\hat a,\hat b,\hat c):=(a_0,\,2a_0\lambda+4a_1,\,3a_0\lambda^2+8a_1\lambda+6a_2)$,
and the equalities 
\[\hat c\lambda^2=a_4,\quad 2\hat c\lambda=2a_0\lambda^3+4a_1\lambda^2-4a_3\]
follow from the multiplicity of the root~$\lambda$.
To conveniently integrate the last ordinary differential equation,
we make the change $\tilde\varphi=(\varphi-\lambda)^{-1}$, which leads to the equation
\[
\tilde\varphi_z{}^{\!2}=c\tilde\varphi^2+b\tilde\varphi+a,
\]
where \
$a:=\hat a=a_0$, \
$b:=\hat b+2\hat a\lambda=4(a_0\lambda+a_1)$, \
$c:=\hat c+\hat b\lambda+\hat a\lambda^2=6(a_0\lambda^2+2a_1\lambda+a_2)$.
After integrating the last equation, 
we return to the old dependent variable~$\varphi$.
As a result,
in addition to the obvious constant solutions,
$\varphi=\nu$, where $\nu$ is an arbitrary root of~$F$ if $(c,b,a)\ne(0,0,0)$,
or $\varphi$ is an arbitrary constant if $c=b=a=0$,
we obtain the following cases for the general solutions
of the degenerate ordinary differential equation~\eqref{eq:EllipticODEsWithMultipleRealRoot}:
\begin{gather*}
\varphi=\frac{\ve a^{-1/2}}{z+C}+\lambda \quad\mbox{if}\quad c=b=0,\ a\ne0,\ \mbox{and thus}\ a>0,
\\
\varphi=\frac{4b}{b^2(z+C)^2-4a}+\lambda \quad\mbox{if}\quad c=0,\ b\ne0,
\\
\varphi=\frac{4c}{4C{\rm e}^{\varepsilon\sqrt{c}\,z}+DC^{-1}{\rm e}^{-\varepsilon\sqrt{c}\,z}-2b}+\lambda \quad\mbox{if}\quad c>0
\quad (\mbox{here}\ C\ne0),
\\
\varphi=\frac{2c}{\varepsilon\sqrt{D}\sin(\sqrt{-c}\,z+C)-b}+\lambda \quad\mbox{if}\quad c<0,\ D>0,
\end{gather*}
where $D:=b^2-4ac$, $C$ is an arbitrary constant, and $\varepsilon=\pm1$.

One more case of degenerating equations of the form~\eqref{eq:EllipticODEs} over~$\mathbb R$
is associated with $F$ having a multiple root from~$\mathbb C\setminus\mathbb R$.
Then the multiplicity of this root is equal to two, and its conjugate is a root of~$F$ as well.
Then the polynomial~$F$ can be represented in the form $F=(a\varphi^2+b\varphi+c)^2$
for some real constants $a$, $b$ and~$c$ with $D:=b^2-4ac<0$.
The corresponding general solution~is
\[
\varphi=\frac{\ve\sqrt{|D|}}{2a}\tan\bigg(\frac{\sqrt{|D|}}2z+C\bigg)-\frac b{2a}.
\]

\section{Conservation laws}\label{sec:BLPsystemCLs}

When studying cosymmetries%
\footnote{%
Given a system~$\mathcal L$ of differential equation, 
a cosymmetry of~$\mathcal L$ is a differential function \cite[p.~288]{olve1993A}
that satisfies the adjoint \cite[Eq.~(5.83)]{olve1993A}
to the condition for generalized symmetries of~$\mathcal L$ \cite[Definition~5.2]{olve1993A}.
Cosymmetries are also called \emph{adjoint symmetries} \cite{sarl1987a},
see \cite[p.~107]{blum2010A} for further references
A characteristic of a conservation law of~$\mathcal L$ is necessarily 
a cosymmetry of~$\mathcal L$ but the converse is not true.
}
of the system~\eqref{eq:BLPsystem},
in view of their equivalence on the solution set of~\eqref{eq:BLPsystem}
it suffices to consider only cosymmetries that do not depend
on the derivatives of~$u$ simultaneously involving differentiations with respect to~$t$ and~$y$ and
on the derivatives of~$v$ involving differentiations with respect to~$t$.
We call such cosymmetries \emph{reduced}.
Computing with the package {\tt Jets}~\cite{BaranMarvan},
which is based on theoretical results of~\cite{marv2009a},
we checked that the reduced cosymmetries at least up to order four for the system~\eqref{eq:BLPsystem}
are exhausted by the pairs of differential functions of $(u,v)$ of the form
$\Lambda^0(h^0)+\Lambda^1(h^1)+\Lambda^2(h^2)+\Lambda^4(\mu)+\Lambda^5(\nu)$,
where
\begin{gather*}
\Lambda^0(h^0)=\big(h^0,0\big),\quad
\Lambda^1(h^1)=\big(h^1x,0\big),\quad
\Lambda^2(h^2)=\big(h^2x^2,0\big),\\
\Lambda^4(\mu)=\big(\mu,0\big),\quad
\Lambda^5(\nu)=\big(\nu v,\nu u_y\big),
\end{gather*}
and the parameter functions $h^i=h^i(t)$, $i=0,1,2$, $\mu=\mu(y)$ and $\nu=\nu(y)$
run through the sets of sufficiently smooth functions of their arguments.
Each of these cosymmetries is a conservation-law characteristic of the system~\eqref{eq:BLPsystem}.

We conjecture that 
\emph{the above cosymmetries
constitute the entire space of reduced cosymmetries of this system
and thus the entire space of its reduced conservation-law characteristics.}
The associated conservation laws of~\eqref{eq:BLPsystem} contain the following conserved currents:
\begin{gather*}
\mathcal F^0(h^0)=\big(0,\,-2h^0v_{xx},\,h^0(u_t-2uu_x+u_xx)\big),\\
\mathcal F^1(h^1)=\big(0,\,-2h^1(xv_{xx}-v_x),\,h^1x(u_t-2uu_x+u_xx)\big),\\
\mathcal F^2(h^2)=\big(0,\,-2h^2(x^2v_{xx}-2xv_x+2v),\,h^2x^2(u_t-2uu_x+u_xx)\big),\\
\mathcal F^4(\mu)=\big(\mu u_y,\,\mu(u_{xy}-2uu_y-2v_{xx}),\,0\big),\\
\mathcal F^5(\nu)=\big(\nu vu_y,\,\nu(vu_{xy}-u_yv_x-2vuu_y-2vv_{xx}+v_x^{\,\,2}),\,0\big).
\end{gather*}
As representatives of the conservation laws as equivalence classes of conserved currents,
we choose the ``short'' conserved currents, each of which have only two nonvanishing components.
Moreover, selected conserved currents do not involve derivatives of the parameter functions (of order one and higher).

\section*{Acknowledgments}

The authors are grateful to Vyacheslav Boyko, Michael Kunzinger, Dmytro Popovych and Galyna Popovych for helpful discussions and suggestions.
The research of ROP was supported by the Austrian Science Fund (FWF), projects P28770 and~P30233.
This research was also supported in part by the Ministry of Education, Youth and Sports of the Czech Republic (M\v SMT \v CR)
under RVO funding for I\v C47813059.
The authors express deepest thanks to the Armed Forces of Ukraine and the civil Ukrainian people
for their bravery and courage in defense of peace and freedom in Europe and in the entire world from russism.

\footnotesize

\end{document}